\documentclass[a4paper,
USenglish,
cleveref,
autoref,
thm-restate
]{lipics-v2021} 

\usepackage{booktabs}
\usepackage[textsize=tiny,disable]{todonotes}
\usepackage[normalem]{ulem}
\DeclareMathOperator{\dist}{d}
\DeclareMathOperator{\ecc}{ecc}
\DeclareMathOperator{\diam}{diam}

\DeclareMathOperator{\maxdist}{maxdist}
\DeclareMathOperator{\parent}{parent}
\newcommand{\upper}{u}

\usepackage{mathtools}
\DeclarePairedDelimiter{\parens}{\lparen}{\rparen} 
\DeclarePairedDelimiter{\braces}{\lbrace}{\rbrace} 
\DeclarePairedDelimiter{\brackets}{\lbrack}{\rbrack} 

\DeclarePairedDelimiter{\ceil}{\lceil}{\rceil}

\newcommand{\Ex}[1]{\mathrm{E}\brackets*{#1}}
\renewcommand{\Pr}[1]{\mathrm{Pr}\brackets*{#1}}

\usepackage[vlined]{algorithm2e}

\usepackage{cleveref}  

\newtheorem{property}{Property}
\Crefname{property}{Property}{Properties}

\Crefname{observation}{Observation}{Observations}
\Crefname{point}{Point}{Points}

\newcommand{\removed}[1]{}

\definecolor{mygray}{RGB}{153,153,153}
\definecolor{mydarkorange}{RGB}{213,98,0}
\definecolor{myskyblue}{RGB}{86,180,233}
\definecolor{myblue}{RGB}{0,114,178}
\definecolor{myyellow}{RGB}{240,228,66}
\definecolor{myorange}{RGB}{230,159,0}
\definecolor{mybluegreen}{RGB}{0,158,115}
\definecolor{mypink}{RGB}{204,121,167}
\presetkeys{todonotes}{color=mygray!30}{}
\usepackage{tcolorbox}

\usepackage{uniinput}

\usepackage{pgfplots}
\pgfplotsset{compat=1.18}
\definecolor{oiBlue}{HTML}{0072B2}
\definecolor{oiOrange}{HTML}{E69F00}
\definecolor{oiGreen}{HTML}{009E73}

\usepackage{fnpct}

\nolinenumbers 

\pdfoutput=1 
\hideLIPIcs  

\title{Diameter Computation on (Random) Geometric Graphs}
\author{Thomas Bläsius}{Karlsruhe Institute of Technology, Germany \and \url{http://scale.iti.kit.edu} }{thomas.blaesius@kit.edu}{https://orcid.org/0000-0003-2450-744X}{}

\author{Annemarie Schaub}{Karlsruhe Institute of Technology, Germany}{annemarie.schaub@outlook.de}{}{}

\author{Marcus Wilhelm}{Karlsruhe Institute of Technology, Germany \and \url{http://scale.iti.kit.edu} }{marcus.wilhelm@kit.edu}{https://orcid.org/0000-0002-4507-0622}{funded by the Deutsche Forschungsgemeinschaft (DFG, German Research Foundation) – 524989715}

\authorrunning{J. Open Access and J.\,R. Public} 

\Copyright{Jane Open Access and Joan R. Public} 

\ccsdesc[500]{Theory of computation~Shortest paths}
\ccsdesc[500]{Theory of computation~Computational geometry}
\ccsdesc[500]{Theory of computation~Random network models}

\keywords{random geometric graphs, graph diameter} 

\category{} 

\relatedversion{} 

\acknowledgements{The authors thank Tillmann Bühler for helpful
  discussions.}

\EventEditors{}
\EventNoEds{0}
\EventLongTitle{}
\EventShortTitle{}
\EventAcronym{}
\EventYear{2025}
\EventDate{}
\EventLocation{}
\EventLogo{}
\SeriesVolume{}
\ArticleNo{}

\begin{document}

\maketitle


\begin{abstract}
  We present an algorithm that computes the diameter of random
  geometric graphs (RGGs) with expected average degree
  $\Theta(n^{\delta})$ for constant $\delta\in(0,1)$ in
  $\tilde{O}(n^{\frac{3}{2}(1+{\delta})} +n^{2 - \frac{5}{3}{\delta}})$
  time, asymptotically almost surely. This brings the running time
  down to $\tilde{O}(n^{\frac{33}{19}})\approx \tilde{O}(n^{1.737})$
  for average degree $\Theta(n^{\frac{3}{19}})$. To the best of our
  knowledge, this constitutes the first such bound for RGGs and for a
  substantial range of average degrees, it is notably smaller than the
  recent bound of $O^*(n^{2-\frac{1}{18}}) \approx O^*(n^{1.944})$ by
  Chan, Chang, Gao, Kisfaludi-Bak, Le, and Zheng (FOCS 2025) for the
  more general class of all unit disk graphs. Our algorithm also works
  on RGGs with the flat torus as ground space, with a running time in
  $\tilde{O}(n^{\frac{3}{2}(1+{\delta})} + n^{2 - \frac{1}{3}{\delta}})$.

  While our bounds on random geometric graphs are interesting in their
  own right, they are only an application of our main contribution: A
  general framework of deterministic graph properties that enable
  efficient diameter computation. Our properties are based on the
  existence of balanced separators that are in a certain sense
  well-behaved regarding the metric space defined by the graph.  These
  properties can be seen as a distillation of the combinatorial
  features a graph gets from having an underlying geometry.

  As a by-product of verifying that RGGs fit into our framework, we
  also derive running time bounds for iFUB, a diameter algorithm by
  Crescenzi, Grossi, Habib, Lanzi, and Marino (TCS 2013) that is
  highly efficient on real-world graphs. We show that a.a.s.\ iFUB
  achieves a speedup in $\tilde{\Omega}(n^{{\delta}/3})$ over the
  naive $O(nm)$ algorithm, but runs in $\Omega(nm)$ time on torus
  RGGs. This constitutes the first theoretical analysis in a geometric
  setting and confirms prior empirical evidence, thus suggesting
  geometry as a reasonable model for certain real-world inputs.
\end{abstract}

\newpage

\section{Introduction}%
\label{sec:intro}

The \emph{diameter}, i.e., the maximum distance between any pair of
vertices, is one of the most fundamental graph parameters.  It is
relevant for numerous applications for example in network
design~\cite{MR846852,parhami2000network,DBLP:journals/jsac/XuKY04},
distributed
systems~\cite{1599738,DBLP:conf/conext/ChaintreauMMD07,10.1145/863955.863999}
and graph clustering~\cite{SCHAEFFER200727}.  A simple algorithm to
compute the diameter of a graph is to perform a breadth-first search
(BFS) from every vertex, taking $O(nm)$ time on a graph with $n$
vertices and $m$ edges.  The iFUB algorithm (short for iterative
fringe upper bound)~\cite{ifub} constitutes a notable improvement over
this approach in practice.  Despite a $Θ(nm)$ worst-case running time,
it is often much faster on real-world inputs, especially on complex
scale-free networks~\cite{axiomatic_borassi_2017} and graphs with
underlying geometry~\cite{external_validity}.

The core intuition behind iFUB is that on many real-world networks
there is a meaningful notion of center and periphery (or fringe).
More precisely, vertices in the center have smaller distances to most
other vertices than vertices in the periphery.  Moreover, distant
pairs of vertices always lie in the periphery and their shortest paths
are (roughly) bisected by the center.  The iFUB algorithm exploits
this structure by heuristically choosing a vertex in the center and
then restricting the search for diametric vertices to vertices that
are sufficiently far from this center.  As a result, the algorithm
only executes a constant number of breadth-first searches in order to
select a central vertex and afterwards only performs a BFS for
vertices that have distance at least half the diameter from this
central vertex.  On graphs with a strong center--periphery structure,
where diametric paths are indeed approximately halved, this results in
a low number of BFS runs.

An extensive empirical study confirms that many real-world networks
exhibit a sufficiently pronounced center--periphery structure for iFUB
to achieve sublinear running times in
practice~\cite{external_validity}.  In particular, the study
identifies two regimes of such networks.  The first consists of graphs
with a strongly heterogeneous degree distribution, i.e., scale-free
networks.  For this setting Borassi, Crescenzi, and
Trevisan~\cite{axiomatic_borassi_2017} prove running time bounds for
iFUB under the assumption of a power-law degree distribution together
with independently sampled edges.  The second regime are graphs with a
homogeneous degree distribution and high locality, i.e., with an
underlying geometric structure.  While it might seem intuitive that
such graphs are benign for iFUB, the authors
of~\cite{external_validity} also identify some notable exceptions.
These are graphs with a clearly apparent geometric structure, but a
periodic geometric ground space, where distances ``wrap around'' like
on a flat torus or a spherical surface.  Intuitively, such graphs have
neither center nor periphery, thus any chosen central vertex splits
some diametric paths very unevenly, and iFUB needs to explore a large
portion of the graph in order to find the diameter.  To the best of
our knowledge, the performance of iFUB on geometric graphs has not yet
been formally analyzed.  In particular, the conjecture that it benefits
from aperiodic geometry but deteriorates on periodic geometry has not
been studied from a rigorous theoretical perspective.

In this work, we provide the first theoretical explanation of this
behavior by analyzing iFUB's performance on random geometric graphs
(RGGs).  We show that on RGGs with a square ground space, iFUB
achieves an asymptotic speed-up when choosing a central vertex using
the so-called 2-sweep heuristic and that this is not the case on RGGs
with a flat torus as ground space.
We say an event occurs \emph{asymptotically almost surely} (a.a.s.) if
its probability is at least $1 - o(1)$ and \emph{with high
  probability} if its probability is at least $1 - O(\frac{1}{n})$.

\removed{
\begin{figure*}
  \centering
  \includegraphics[width=0.3\textwidth]{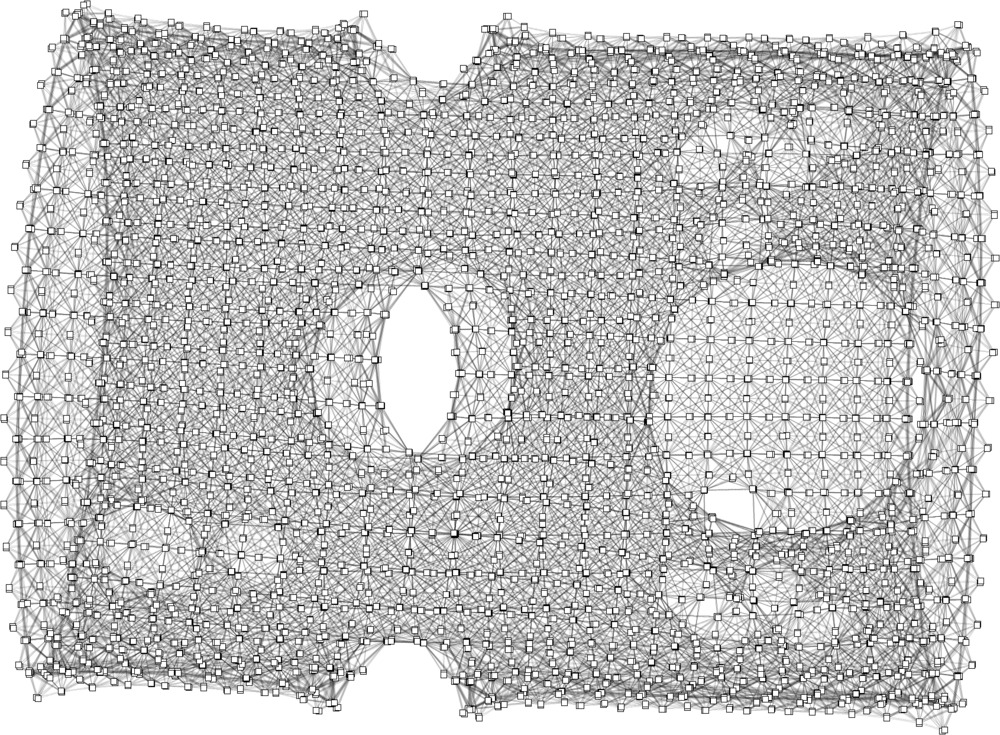}
  ~
  \includegraphics[width=0.3\textwidth]{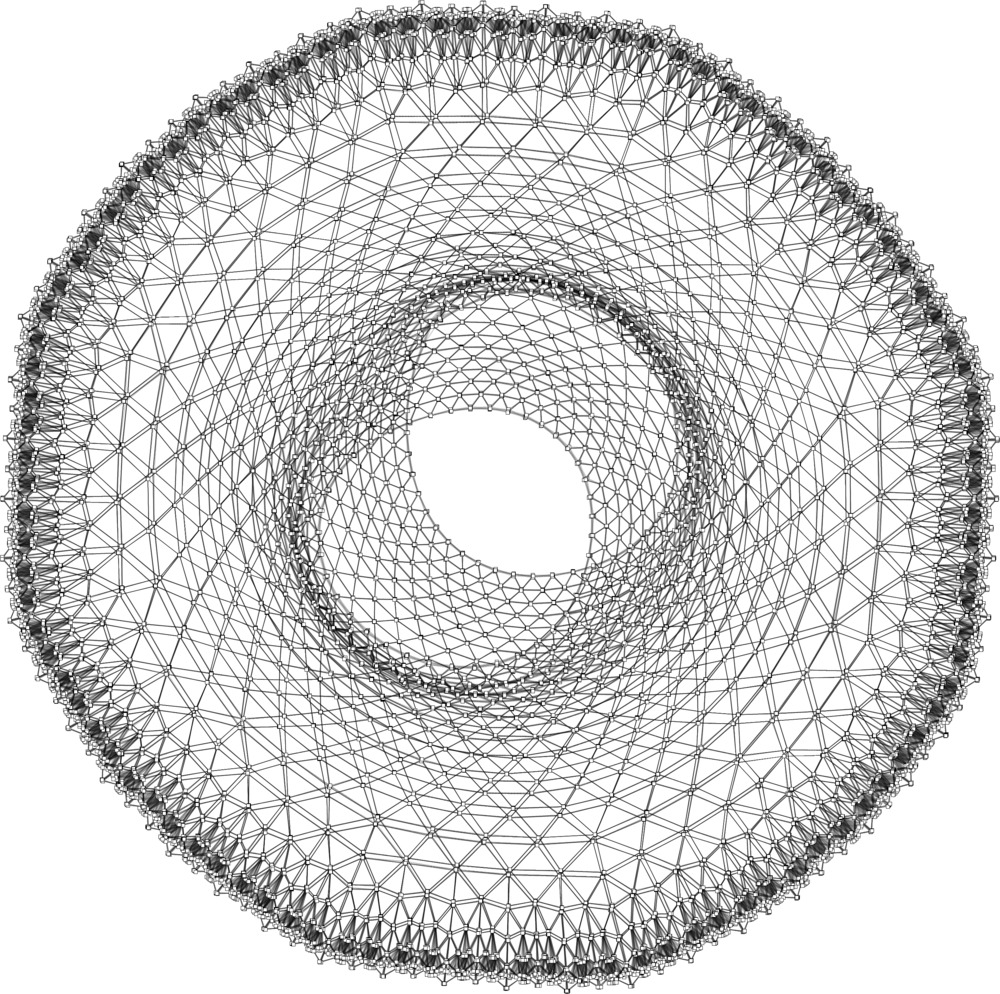}
  ~
  \includegraphics[width=0.3\textwidth]{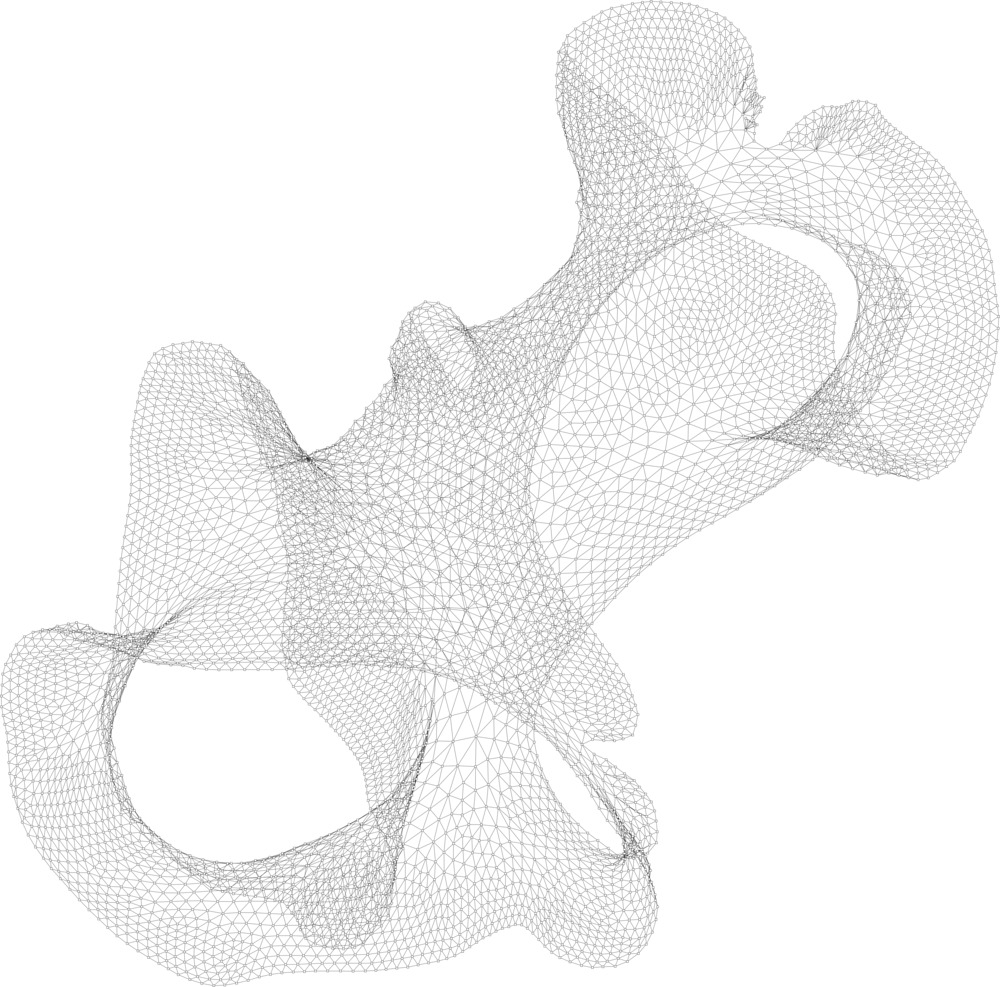}
  \caption{Drawings of three graphs identified as particularly hard
    inputs for iFUB despite their apparent geometric structure; see
    \cite[Figure 6]{external_validity}.}
  \label{fig:periodic_graphs}
\end{figure*}
}

\begin{restatable}{theorem}{ifubfast}\label{thm:ifub_fast}
  Let $G\sim\mathcal{G}(\mathcal{S}, n, r)$ be a square random
  geometric graph with expected average degree
  $d∈Ω(\log^{\frac32} n)$.  Then, a.a.s., 2-sweep iFUB has running
  time in $O((nd^{-\frac23}+ \log n)m)$.  If
  $G\sim\mathcal{G}(\mathcal{T}, n, r)$ is a torus RGG with expected
  average degree in $Ω(\log^{\frac32} n)$, then a.a.s.\ for every
  choice of the central vertex, the running time of iFUB is in
  $Ω(nm)$.
\end{restatable}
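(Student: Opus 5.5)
The plan is to translate the geometry of the RGG into graph distances and then count how many vertices iFUB actually has to process.

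\emph{Distance estimates.} Since $d\in\Omega(\log^{3/2} n)$, the radius $r$ satisfies $r=\Theta(\sqrt{d/n})$ and is well above the connectivity threshold. I will use a standard concentration argument: a.a.s., for all pairs $u,v$,
\begin{align*}
  \lceil \|u-v\|/r\rceil \le \dist(u,v) \le \|u-v\|/r + O(\mathrm{err}),
\end{align*}
with additive error $\mathrm{err}=O(d^{-1/3}\cdot \|u-v\|/r + 1)$ or something of that order. To get this I tile the plane into cells slightly thinner than $r$ along the segment $uv$, and build near-straight hop paths whose per-hop loss is polynomially small in $d$. I expect the $d^{-2/3}$ in the statement to come from balancing hop length against lateral deviation in this lemma.

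\emph{Upper bound for the square.} Next I analyse the 2-sweep. A BFS from an arbitrary vertex reaches a vertex $a$ near a corner, up to $O(\mathrm{err})$. A BFS from $a$ then reaches $b$ near the opposite corner. The central vertex $c$ is the midpoint of the $a$--$b$ path, so it lies within $O(\sqrt{\mathrm{err}\cdot \sqrt2/r})\cdot r$ Euclidean distance of the square's centre; a path of nearly optimal length cannot stray far from the straight segment (a Pythagorean deviation bound).

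iFUB only performs BFS from vertices $v$ with $\dist(c,v)$ at least about $D/2$. Here $D=\sqrt2/r\pm O(\mathrm{err})$, and the algorithm's termination rule, upper bound $2i$ versus current lower bound, adds only additive slack. With the distance estimates, every such $v$ has Euclidean distance from the centre at least $\sqrt2/2-\epsilon$, where $\epsilon$ collects the centring error and the distance error. This region consists of four corner caps of area $O(\epsilon^2)$. With $\epsilon = \Theta(d^{-1/3})$, each cap contains $O(n d^{-2/3}+\log n)$ vertices whp by a Chernoff bound, and each BFS costs $O(m)$.

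\emph{Lower bound for the torus.} On the torus, fix any $c$. By symmetry $\ecc(c)\approx D\approx \sqrt2/(2r)$, and the vertices at distance at least $D/2$ from $c$, together with the slack, cover a constant fraction of the torus. I must show iFUB cannot stop early. Its stopping condition requires the lower bound to exceed $2(i-1)$ while levels $i$ remain unexplored. Every vertex, though, has eccentricity roughly $D\pm \mathrm{err}$, while levels up to about $D$ contain vertices. So the algorithm must process all levels $i$ with $2(i-1)\ge D-O(\mathrm{err})$, that is, all vertices at distance at least roughly $D/2$ from $c$. These are $\Omega(n)$ vertices. Chernoff bounds and a union bound over all $c$ make this hold simultaneously.

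\emph{Main obstacle.} The hardest step is the quantitative distance lemma with error $O(d^{-1/3}\cdot\text{distance})$ uniformly over all pairs, together with controlling where the 2-sweep midpoint lands. I would first try a grid-of-cells argument with cell width tuned to $d^{-1/3}r$, requiring each cell to be nonempty; that nonemptiness is exactly what $d\in\Omega(\log^{3/2}n)$ is meant to guarantee.
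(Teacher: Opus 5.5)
Your architecture matches the paper's: a stretch bound, the two sweep endpoints near opposite corners, a Pythagorean lens bound for the midpoint, and a Chernoff count of the corner caps. On the torus, \cref{eq:stretch-trivial} together with $\diam_G\le\frac{\sqrt n}{\sqrt2 r}(1+o(1))$ leaves a $(1-\pi/8)$-fraction of the vertices beyond half the diameter. The torus half is fine. It needs only $\sigma=o(1)$, and the only fact about the stopping rule you need is that the running lower bound never exceeds $\diam_G$, as in \cref{lem:ifub_explored}. Your threshold should read $D+O(\mathrm{err})$ rather than $D-O(\mathrm{err})$, but that is harmless.

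The gap is quantitative and sits in the square half. Suppose graph distances exceed $\|u-v\|/r$ by a factor $1+\sigma$. Then your own lens step pins $c$ down only to within $\Theta(\sqrt{\sigma})$ of the centre (unit scale), in the direction perpendicular to the diagonal. By contrast, the corner localisation of the sweep endpoints is linear in $\sigma$ (cf.\ \cref{lem:square_corner_dist}). The caps therefore have area $\Theta(\sigma)$, and the argument certifies $O((n\sigma+\log n)m)$. With the error you actually state, $\sigma=d^{-1/3}$, the centring error is $d^{-1/6}$ and you get only $O((nd^{-1/3}+\log n)m)$. Your choice $\epsilon=\Theta(d^{-1/3})$ does not follow from your lemma. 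The $d^{-2/3}$ in the theorem is the stretch itself, and your guess that it comes from the balancing in the distance lemma is right. So the lemma must give $\sigma=O(d^{-2/3})$ at diameter scale, not $O(d^{-1/3})$.

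The paper does not prove this bound. It imports Theorem~1.1 of D\'{\i}az, Mitsche, Perarnau and P\'erez-Gim\'enez as \cref{lem:stretch}, which gives stretch $1+O(r^{-4/3})=1+O(d^{-2/3})$ for pairs at distance more than $r\log n$. This is in the $\sqrt n\times\sqrt n$, unit-density scaling, where $d=\Theta(r^2)$; everything below uses that scaling.

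If you prove it yourself, your strip width $w=d^{-1/3}r=\Theta(r^{1/3})$ is the right one: it balances the lateral loss $w^2/r$ against the expected discreteness gap $1/w$ per hop. But it yields relative loss $d^{-2/3}$ only on average. The condition $d\gg\log^{3/2}n$, i.e.\ $d^{2/3}\gg\log n$, makes every window of width $w$ and length $o(r)$ nonempty w.h.p. That only ensures each greedy hop advances by $r(1-o(1))$. Reaching $1+O(d^{-2/3})$ then requires concentration of the summed per-hop losses, each of mean $O(r^{-1/3})$, over the $\Theta(\|u-v\|/r)$ hops.

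A fixed chain of cells that must each be nonempty w.h.p., with any two points in consecutive cells adjacent, cannot achieve this. The cells need $\Omega(\log n)$ expected points, and minimising the per-hop loss $a+b^2/r$ over $a\times b$ cells with $ab=\Omega(\log n)$ gives only $\sigma=\Theta((\log n/d)^{2/3})$. That costs an extra $\log^{2/3}n$ factor in the count.

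Finally, your additive $+1$ hop also passes through the square root. It widens the lens to $\Theta(n^{1/4}r^{1/2})$ and adds $\Theta(\sqrt{nd})$ cap vertices, which is below $nd^{-2/3}$ only for $d=O(n^{3/7})$. The paper's \cref{lem:ifub:2sweep} has the same issue. It writes the Euclidean bound as $\frac{D}{2}(1+s)+2$, whereas multiplying its hop bound $\frac{D}{2r}(1+s)+1$ by $r$ via \cref{eq:stretch-trivial} gives $\frac{D}{2}(1+s)+r$. So its bound, too, is justified only in that range of $d$.
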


This raises the question of whether the resulting running time in
$Ω(nm)$ is inherent to graphs with a torus-like geometry or whether
better algorithmic approaches are possible.  In the following theorem,
we give a positive answer by showing that a polynomial improvement to
this is possible.  Still, for square RGGs we achieve an even better
running time exponent.\footnote{We use $\tilde{O}$-notation to hide
  poly-logarithmic factors in the running time and $O^*$ for
  $n^{o(1)}$ factors.}

\begin{figure}
  \centering
  \begin{tikzpicture}
    \begin{axis}[
      axis lines = left,
      xlabel = \(δ\),
ylabel = {exponent $x$ of running time \(\tilde{O}(n^x)\)},
      ymin=1.0,
      ymax=3.05,
      legend pos=outer north east,
      legend style={font=\small, fill=none, draw=none},
      clip=false 
    ]
      \addplot [
          domain=0:1, 
          samples=3,
          color=gray,
          thick
          ]
          {2+x};
      \addlegendentry{Naive}
      \addplot [
          domain=0:1, 
          samples=3,
          color=gray,
          thick,
          dashed
          ]
          {2.373};
      \addlegendentry{Seidel's algorithm~\cite{seidel}}
      \addplot [
          domain=0:1, 
          samples=3,
          color=oiBlue,
          thick
          ]
          {2-1/8+x};
      \addlegendentry{UDG (no coords.)~\cite{chan2025trulysubquadratictimealgorithms}}
      \addplot [
          domain=0:1, 
          samples=3,
          color=oiBlue,
          dashed,thick
          ]
          {2-1/18};
      \addlegendentry{UDG (coordinates)~\cite{chan2025trulysubquadratictimealgorithms}}
      \addplot [
          domain=0:1,
          samples=23,
          color=oiOrange,
          very thick
      ]
      {max(3/2+1.5*x,2-1/3*x)};
      \addlegendentry{\cref{thm:algo_rgg} (torus)}
      \addplot [
          domain=0:1,
          samples=39,%
          color=oiOrange,
          dotted,
          very thick
      ]
      {max(2-5/3*x,1.5+1.5*x)};
      \addlegendentry{\cref{thm:algo_rgg} (square)}
    \end{axis}
  \end{tikzpicture}
  \caption{Running times on torus/square RGGs with expected
    average degree in $Θ(n^δ)$ for constant $δ∈(0,1)$.  Our algorithm
    is compared with the naive $O(nm)$ running time, Seidel's matrix-multiplication
    based $\tilde{O}(n^ω)$ approach, and the unit-disk graph
    algorithms from \cite{chan2025trulysubquadratictimealgorithms}
    in both variants with a geometric representation and without.}
  \label{fig:running_times}
\end{figure}
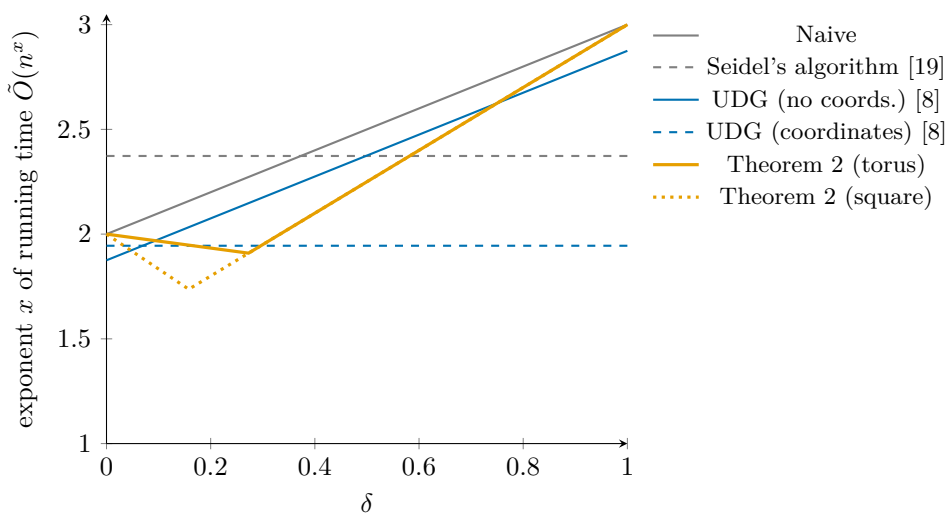

\begin{restatable}{theorem}{runningTimeRGG}\label{thm:algo_rgg}
  On RGGs with expected average degree $Θ(n^δ)$ for constant
  $δ∈(0,1)$, asymptotically almost surely the diameter can be computed
  in $\tilde{O}(n^{\frac32(1+δ)} + n^{2-\frac{1}{3}δ})$ time for torus
  RGGs, respectively
  $\tilde{O}(n^{\frac32(1+δ)} + n^{2-\frac{5}{3}δ})$ time for square
  RGGs.
\end{restatable}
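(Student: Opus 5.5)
The plan is to obtain \cref{thm:algo_rgg} as an instance of the general deterministic framework announced in the abstract. I will design the algorithm on deterministic graph properties (well-behaved balanced separators) and then show that RGGs satisfy them a.a.s. with suitable parameters. Concretely, fix a threshold $k$ (to be optimized) and recursively split the ground space by straight cuts into $k$ cells of roughly equal vertex count. The vertices within distance one (the radius $r$, rescaled) of a cut form a separator $S$. For an RGG with $n$ vertices and average degree $d=\Theta(n^δ)$ we have $r\approx\sqrt{d/n}$, and a cut of a region of side length $\ell$ contains $\tilde{O}(\ell\cdot n r)=\tilde{O}(\ell\sqrt{nd})$ vertices a.a.s., by Chernoff bounds and a union bound over the polynomially many candidate cuts. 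Every shortest path between two vertices in different cells passes through a separator vertex, which is the key structural property.

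The algorithm has two parts. First, run a BFS from every separator vertex. The separator at the top level has $\tilde{O}(\sqrt{nd})$ vertices and each BFS costs $O(m)=O(nd)$, giving $\tilde{O}(n^{3/2}d^{3/2})=\tilde{O}(n^{\frac32(1+δ)})$; the geometric sum over the levels keeps this total. These BFS runs determine all distances between pairs split by the top separator, via $d(u,v)=\min_{s\in S}(d(u,s)+d(s,v))$. Evaluating this min naively is too expensive, so here I use the well-behavedness of the separator. Distances to separator vertices along a cut behave like a metric on a line: in an RGG, graph distance is $\lceil |xy|/r\rceil$ up to a $(1+o(1))$ factor, or up to an additive constant for $d\in\Omega(\log n)$ a.a.s. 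Hence, for a fixed $u$, the function $s\mapsto d(u,s)$ along the cut is essentially unimodal with bounded "slope". This lets me group vertices by a small signature: their distances to $O(1)$ or polylogarithmically many landmark separator vertices. I then compute the maximum over pairs of classes instead of over pairs of vertices.

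Second, handle pairs inside a single cell. Here I will not recurse naively. I will argue that the diameter is realized by a pair far apart geometrically. In the square, diametric pairs lie near opposite corners, within distance $O(r)$ of the boundary extremes. This restricts the candidates to few cells and vertices and yields the $n^{2-\frac53δ}$ term. On the torus there is no periphery, so all cells must be considered. For each cell of $n/k$ vertices, I compute internal eccentricities by a BFS from each of its vertices restricted to a neighborhood. Balancing the cell count against the separator size gives the $n^{2-\frac13δ}$ term. I will choose $k$ so that the separator and cell costs balance, and then verify the exponents $\frac32(1+δ)$ and $2-\frac13δ$, respectively $2-\frac53δ$.

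The main obstacle is the cross-pair step: computing $\max_{u,v}\min_{s}(d(u,s)+d(s,v))$ faster than $|A||B||S|$. My first attempt will exploit the near-Euclidean distance structure. I will use the additive-constant approximation of hop distance by $\lceil |xy|/r\rceil$, valid a.a.s. for $d$ polynomial, so that only $\tilde{O}(1)$ separator vertices near the geometric crossing point need checking, with an exact correction via local BFS. If this fails, I will fall back to bucketing vertices by their rounded distance profile on a sparse set of separator landmarks.
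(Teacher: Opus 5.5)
\textbf{There is a genuine gap.} Your first phase is sound and coincides with the paper's preprocessing: BFS from the separator vertices of a balanced geometric recursive partition costs $\tilde O(\sqrt{nd}\cdot nd)=\tilde O(n^{\frac32(1+\delta)})$. Everything after that is missing. You flag the cross-pair step, evaluating $\max_{u,v}\min_s(\dist(u,s)+\dist(s,v))$, as the main obstacle, and you never resolve it. The terms $n^{2-\frac13\delta}$ and $n^{2-\frac53\delta}$ are asserted (``balancing the cell count against the separator size'') rather than derived.

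Both of your proposed resolutions rest on claims you cannot use.
\begin{itemize}
\item An additive-$O(1)$ approximation $\dist_G(u,v)=\lceil d_{\mathbb E}(u,v)/r\rceil+O(1)$ is not available. The stretch bound at hand (\cref{lem:stretch}, from D\'iaz et al.) is multiplicative, $1+O(r^{-4/3})$. Over $\Theta(\sqrt n/r)$ hops this is an additive error of order $n^{1/2}r^{-7/3}$, which is polynomially large for $r=n^{\rho}$ with $\rho<3/14$. There is no reason to expect better, since the per-hop loss accumulates along the path.
\item For the same reason, ``diametric pairs lie within $O(r)$ of opposite corners'' only follows up to $O(n^{1/2}r^{-4/3}+r)$.
\item The landmark-signature bucketing and the ``unimodality along the cut'' come with no argument that they return the exact maximum.
\item Computing internal eccentricities of cells is irrelevant, since for cells of size $o(n)$ no diametric pair lies inside a single cell.
\end{itemize}
Note also that the stated exponents are tied precisely to the $r^{-4/3}$ error term. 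Your additive-constant claim, if true, would lead to different bounds, so it cannot be what produces $2-\frac13\delta$ and $2-\frac53\delta$.

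\textbf{The missing idea is that one never needs all cross distances.} The paper proceeds as follows.
\begin{itemize}
\item It fixes a flat partition into quadtree blocks of size $\Theta(k)$.
\item It treats $\ell$ as a guess for the diameter, using binary search and aborting runs that exceed the time budget.
\item It discards every block pair with $\upper(A,B)<\ell$, where $\upper(A,B)=2\ecc_{G[A]}(a)+2\ecc_{G[B]}(b)+\dist_G(a,b)$ is computable from the oracle and satisfies $\maxdist_G(A,B)\le\upper(A,B)\le\maxdist_G(A,B)+2\diam_{G[A]}+2\diam_{G[B]}$.
\item For $\ell\ge\diam_G$, every surviving pair consists of $x$-diametric vertices with $x=\Theta(\diam_{G[A]})$. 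Local diametric partners plus low fragmentation then leave $O(1)$ partners per block, i.e.\ $O(n/k)$ pairs. On the square, few corners leaves $O(1)$ pairs in total.
\item These pairs are enumerated top-down through the partition. Each surviving pair's $\maxdist$ is computed exactly by Dijkstra from every vertex of $A$ in the overlay graph $G[A\cup B]$, augmented with weighted shortcut edges between boundary vertices, in $\tilde O(k^2d+kn^{1/2+3\rho})$ time.
\end{itemize}

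The exponents come from requiring that block diameters $\Theta(k^{1/2}/r)$ dominate $d_{\mathrm{local}}=O(n^{1/2}r^{-7/3})$ (\cref{thm:concentrated_fringes}), which is exactly where the $r^{-4/3}$ stretch error enters. With $r=n^{\rho}$, $\delta=2\rho$ and $\rho<3/14$, this forces $k=n^{1-\frac83\rho}$. The torus then costs $nkd=n^{2-\frac23\rho}=n^{2-\frac13\delta}$, and the square costs $k^2d=n^{2-\frac{10}{3}\rho}=n^{2-\frac53\delta}$. For $\rho\ge 3/14$ these terms are dominated by the preprocessing. This design also keeps the algorithm deterministic and always correct, whereas your approximation-based evaluation would at best be correct a.a.s.
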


We note that in the above theorem the probabilistic statement only
concerns drawing the random geometric graph; the algorithm itself is
fully deterministic and always computes the diameter correctly.  To
the best of our knowledge these are the first running time bounds for
the diameter problem specifically on random geometric graphs.  For an
overview of how they compare to known running times for diameter
computation on related graph classes, see
\cref{fig:running_times}.  Note that specifically the class of unit
disk graphs makes for a suitable comparison as it can be seen as a
deterministic worst-case variant of (square) random geometric graphs.
Here, the fastest known running time is in $O^*(n^{2-1/18})$ as shown
recently by Chan, Chang, Gao, Kisfaludi-Bak, Le, and
Zheng~\cite{chan2025trulysubquadratictimealgorithms}.  Our running
time on square RGGs gives a polynomial improvement upon this for
average degrees $Θ(n^δ)$ with constant $δ$ strictly between
$1/30 \approx 0.033$ and $8/27 \approx 0.296$.  Even our running time
on torus RGGs gives a polynomial improvement for $d$ strictly between
$1/6\approx 0.167$ and $8/27\approx 0.296$.  Additionally, we note
that the $O^*(n^{2-1/18})$ algorithm depends on a coordinate
representation of the graphs, which is $∃ℝ$-hard to
obtain~\cite{DBLP:journals/dcg/KangM12}.  A variant of their algorithm
that only requires the graph as input runs in $\tilde{O}(mn^{1-1/8})$
time~\cite{chan2025trulysubquadratictimealgorithms}.  Compared to
this, our algorithm on degree $Θ(n^δ)$ random geometric graphs is
faster for $δ ≥ \frac{3}{64} \approx 0.047$ (square RGGs),
respectively $δ≥\frac{3}{32} \approx 0.094$ (torus RGGs).  Concerning
the requirements on the input, we note that our algorithm lies between
these two variants.  We only use the coordinates implicitly in the
sense that we require a hierarchy of separators that is
straightforward to compute given coordinates.  But in principle, the
separator hierarchy could be computed in a different way without
having the coordinates as an intermediate step.


While the algorithms stated in \cref{thm:algo_rgg} are interesting
contributions on their own, they are only an application of a more
general framework.  Our main contribution is to distill a set of
deterministic graph properties and to prove that they enable efficient
diameter computation.  The running times on random geometric graphs
then follow, by proving that these graphs a.a.s.\ fit into our
framework.

\begin{figure}
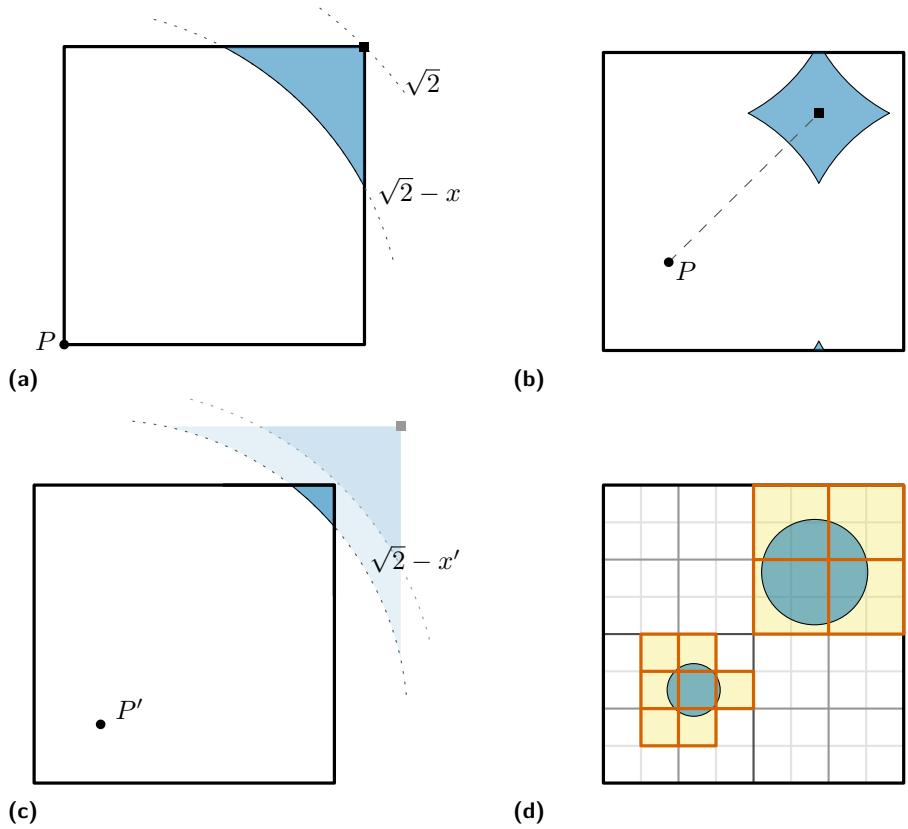

  \centering
    \begin{subfigure}[b]{0.45\textwidth}
      \centering
        \includegraphics[page=9]{figs/new_properties}
        \subcaption{}
        \label{fig:new_properties:torus_antipodal}
    \end{subfigure}
    ~ 
    \begin{subfigure}[b]{0.45\textwidth}
      \centering
        \includegraphics[page=10]{figs/new_properties}
        \subcaption{}
        \label{fig:new_properties:torus}
    \end{subfigure}
    ~ 
    \begin{subfigure}[b]{0.45\textwidth}
      \centering
        \includegraphics[page=11]{figs/new_properties}
        \subcaption{}
        \label{fig:new_properties:corners}
    \end{subfigure}
    ~ 
    \begin{subfigure}[b]{0.45\textwidth}
      \centering
        \includegraphics[page=12]{figs/new_properties}
        \subcaption{}
        \label{fig:new_properties:grid}
    \end{subfigure}
    \caption{Visualization of the observations.  Regarding
      \cref{item:geom-diam-partners}, Part (a) shows a point $P$ on
      the unit square, its unique diametric partner at distance
      $\sqrt{2}$ (black square); the blue region of all $x$-diametric
      partners of $P$ for $x = \sqrt{2} / 5$ is clearly not much
      larger than $x$.  Part (b) shows an analogous situation on the
      torus; note that here, any point has a diametric partner.  Part
      (c) visualizes the second part of \cref{item:geom-few-corners}:
      a point $P'$ sufficiently far from any corner of the unit-square
      has no $x$-diametric partners for $x=\sqrt{2}/5$; still, for a
      larger $x' > x$ the region of $x'$-diametric partners is
      non-empty.  Regarding
      \cref{item:geom-small-sep,item:geom-size-dependent-diam,item:geom-fragmentation},
      Part (d) shows a hierarchical subdivision of the square/torus
      ground space and two differently sized blue disks that intersect
      only few cells with similar diameter (marked in yellow).  }
  \label{fig:new_properties}
\end{figure}

To introduce our framework and motivate the properties it is based on,
we start by making a few obvious observations about geometric ground
spaces (specifically square and flat torus) and distances therein;
also see \cref{fig:new_properties}.
\cref{item:geom-diam-partners,item:geom-few-corners} are related to
\emph{diametric pairs}, where a pair of points is diametric if the
distance between them is the diameter of the ground space.  Moreover,
a pair is \emph{$x$-diametric} if their distance is at most $x$
smaller than the diameter.  If two points $p$ and $q$ form a
($x$)-diametric pair, we say that $p$ is a \emph{($x$)-diametric
  partner} of $q$ and vice versa.
\cref{item:geom-small-sep,item:geom-size-dependent-diam,item:geom-fragmentation}
are based on partitioning the ground space like with a quad-tree into
a hierarchy of cells.  Observe the following.

\begin{enumerate}
  \item \label[observation]{item:geom-diam-partners} For every point
        on a torus or square and $x>0$, the $x$-diametric partners lie
        within a disk of radius $O(x)$.  
        \hfill\textbf{\footnotesize(local diametric partners)}
  \item \label[observation]{item:geom-few-corners} In a square, only
        the four corners have a diametric partner.  Moreover, for
        every $x>0$ all points with $x$-diametric partners lie inside
        $4$ disks of radius $O(x)$. \hfill \textbf{\footnotesize(few
        corners)}
  \item \label[observation]{item:geom-small-sep} The boundary of each
        cell is small compared to its
        area. \hfill\textbf{\footnotesize(small separators)}
  \item \label[observation]{item:geom-size-dependent-diam} Cells with smaller
        area have smaller diameter and vice versa. \hfill
        \textbf{\footnotesize(size-dependent diameters)}
  \item \label[observation]{item:geom-fragmentation} Every disk in the
        ground space intersects only few cells that have diameter
        similar to the disk. \hfill \textbf{\footnotesize(low
        fragmentation)}
\end{enumerate}

We now translate these geometric observations into graph properties.
For this, the challenge is to strike a balance between introducing
enough flexibility in order to include a meaningful class of graphs
and being strong enough to allow algorithmic improvements.
In the following, notions related to distances refer to graph
distances, e.g., a ball of radius $r$ is the set of vertices with
graph distance at most $r$ from some central vertex.  To translate the
hierarchical partitioning of the geometric ground space, we introduce
\emph{recursive partitions} of a graph into \emph{blocks}.  We assume
that recursive partitions have constant \emph{branching factor}, i.e.,
each block has only a constant number of children.  We first state the
five properties and discuss them below.

\begin{enumerate}
  \item \label[property]{item:prop-diam-partners}
        \textbf{$\boldsymbol{d_{\mathrm{local}}}$-local diametric
        partners:}\quad For every vertex $v$ and $x>0$, the
        $x$-diametric partners of $v$ can be covered with $O(1)$ balls
        of radius $O(x + d_{\mathrm{local}})$.
  \item \label[property]{item:prop-few-corners}
        \textbf{$\boldsymbol{d_{\mathrm{corner}}}$-few corners:}\quad
        For every $x>0$, all vertices with an $x$-diametric partner
        can be covered with $O(1)$ balls of radius
        $O(x + d_{\mathrm{corner}})$.
  \item \label[property]{item:prop-small-sep}
        \textbf{$\boldsymbol{(\alpha, \beta)}$-small separators:}\quad
        The separator of each block with $k$ vertices has size
        $O(k^\alpha n^\beta)$.
  \item \label[property]{item:prop-size-dependent-diam}
        \textbf{size-dependent diameters:}\quad For all blocks $A$ and
        $B$ with diameters $D_A$ and $D_B$, $|A|∈O(|B|)$ implies
        $D_A ∈ O(D_B)$, and vice versa. \footnote{ We note that the
        intuitive interpretation of asymptotics is correct here:
        $O$-notation hides universal constants that do not depend on
        individual instances.  For
        \cref{item:prop-small-sep,item:prop-size-dependent-diam} it is
        also important that the blocks are only compared per instance
        and not across instances.  However, as this makes the formal
        definitions slightly tricky, we provide a full explanation in
        Appendix~\ref{sec:asymptotics_of_properties}. }
\item \label[property]{item:prop-fragmentation} \textbf{low
    fragmentation:}\quad Every ball of radius $r$ intersects a
  constant number of blocks of diameter $\Theta(r)$.
\end{enumerate}

Each property directly corresponds to the observation with the same
number, but differs in a few key ways.  In
\cref{item:prop-diam-partners,item:geom-diam-partners} we allow $O(1)$
balls and keep the linear dependency of their radius on $x$.
Additionally we introduce two parameters $d_{\mathrm{local}}$ and
$d_{\mathrm{corner}}$ that allow slack for small $x$.
\cref{item:prop-small-sep} uses two parameters to specify the size of
separators depending not only on the block size, but also on the size
of the whole graph.  This is important to also capture intersection
graphs of objects with size growing in $n$.  Finally,
\cref{item:prop-size-dependent-diam,item:prop-fragmentation} are the
same as \cref{item:geom-size-dependent-diam,item:geom-fragmentation},
except that we require the bounds to only hold up to constant factors.
We call a recursive partition \emph{$(α,β)$-well-spaced}, if
\cref{item:prop-small-sep,item:prop-size-dependent-diam,item:prop-fragmentation}
hold and it is \emph{balanced}, i.e., if for each block the size of
any two children differs only by a constant factor.  Intuitively, a
well-spaced recursive partition uses balanced sublinear separators
that divide the graph into roughly ball-shaped subgraphs.

We are almost ready to state our main theorem.  We say that a
recursive partition has leaf-block size at most $k_{\mathrm{leaf}}$ if
every leaf-block has at most $k_{\mathrm{leaf}}$ vertices.

\begin{restatable}{theorem}{thmPropertiesAlgo}\label{thm:properties_algo}
  Let $G$ be a $n$-vertex graph with degeneracy $d$ satisfying
  \cref{item:prop-diam-partners}.  Let $\mathcal{P}$ be a
  $(α,β)$-well-spaced recursive partition with leaf-block size
  $k_{\mathrm{leaf}}$.  For every $k \ge k_{\mathrm{leaf}}$ such that
  $k∈o(n)$ and blocks of size $\Theta(k)$ have diameter
  $\Omega(d_{\mathrm{local}})$, our algorithm computes the diameter of
  $G$ in time
  \begin{equation*}
    \tilde{O}\parens*{
    n^{1+α+β} \cdot d +
    \min\braces*{
    nkd +
    k^{2α}n^{1+2β} +
    k^{2α-1}n^{1+α+3β},\;
    kn^{1+α+β}
    }
    }.
  \end{equation*}
  If also \cref{item:prop-few-corners} holds and blocks of size
  $\Theta(k)$ have diameter $\Omega(d_{\mathrm{corner}})$, it runs in
  time
  \begin{equation*}
    \tilde{O}\parens*{
        n^{1+α+β}\cdot d +
        \min\braces*{
        k²d +
        k^{1+2α}n^{2β}
        + k^{2α}n^{α+3β},\;
        k²n^{α+β}
        }
    }.
  \end{equation*}
\end{restatable}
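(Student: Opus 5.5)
We design an algorithm and bound its running time term by term. The algorithm has three phases: (i) compute, for every vertex, distances to all separator vertices of the recursive partition; (ii) compute an estimate $\tilde D$ of the diameter and use local diametric partners to restrict which pairs of vertices can still be diametric; (iii) resolve the remaining candidate pairs, either by BFS or by separator-based distance queries. The term $n^{1+α+β}d$ comes from phase (i). Consider a block with $k'$ vertices. By \cref{item:prop-small-sep} its separator has $O(k'^αn^β)$ vertices. We run a BFS from each separator vertex, restricted to the block, at cost $O(k'd)$ per BFS, since degeneracy $d$ bounds the number of edges by $O(k'd)$. A BFS in the whole graph would also be acceptable if we charge it to the top levels. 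The partition is balanced with constant branching, so block sizes decrease geometrically and there are $O(\log n)$ levels. Summing over a level, where $\sum k'=n$ and $α\le1$, gives $\tilde O(n^{α+β}\cdot nd)$. With the separator distances stored, the distance between two vertices $u,v$ equals the minimum over the separators $S$ of blocks containing both of $\dist(u,s)+\dist(s,v)$, taken over all $s\in S$, or else it is realized inside a common leaf.

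Next we fix the scale $k$. Consider the blocks of size $\Theta(k)$, which by \cref{item:prop-size-dependent-diam} all have diameter $\Theta(D_k)$ with $D_k\in\Omega(d_{\mathrm{local}})$. We first compute a lower bound $L\le D$ via one BFS. We then take $x=\Theta(D_k)$ and argue via \cref{item:prop-diam-partners} as follows. For a vertex $v$, all its $x$-diametric partners lie in $O(1)$ balls of radius $O(x+d_{\mathrm{local}})=O(D_k)$. By \cref{item:prop-fragmentation} these balls meet only $O(1)$ blocks of size $\Theta(k)$. Hence every block $A$ of size $\Theta(k)$ has only $O(1)$ partner blocks $B$ that can contain a diametric partner of a vertex in $A$. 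To identify these partner blocks without knowing $D$ exactly, we use the separator distances to compute approximate eccentricities at block granularity: for each pair of blocks, their distance is known up to $\pm O(D_k)$. We then keep only the pairs that are near-maximal. The main obstacle is this step. It must be made rigorous that the $O(1)$-balls bound transfers from single vertices to all vertices of a block, using that the block has diameter $O(D_k)$, and that additive errors of $O(D_k)$ keep the true diametric pair among the candidates.

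For each of the $O(n/k)$ candidate block pairs $(A,B)$, we compute exact maximum distances in one of two ways, which gives the $\min$ in the theorem. The first way runs a BFS from every vertex of $A$, costing $k\cdot nd$ per block and $O(nkd)$ in total for the first term. Alternatively, for each $u\in A$ we compute distances to $B$ through the separators on the tree path between $A$ and $B$. Each such separator has $O(k^αn^β)$ vertices near the relevant level, which gives $k\cdot k\cdot k^αn^β$ per pair and hence $kn^{1+α+β}$ in total. The more refined bound $k^{2α}n^{1+2β}+k^{2α-1}n^{1+α+3β}$ comes from first reducing $A$ and $B$ to their own boundaries. Distances leaving $A$ pass through its separator of size $O(k^αn^β)$, so we form a separator-to-separator distance matrix of size $k^{2α}n^{2β}$ per pair. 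Summing $n/k$ pairs with the cost of combining across the $O(\log n)$ levels, where ancestor separators of size up to $n^{α+β}$ are weighted against the $k$ leaf-level vertices, yields these two terms. I expect a careful accounting of this step.

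Under \cref{item:prop-few-corners} the same argument applies, but only $O(1)$ blocks of size $\Theta(k)$ can contain any vertex with an $x$-diametric partner: these vertices are covered by $O(1)$ balls of radius $O(D_k)$, and we use fragmentation again. Therefore the factor $n/k$ in the number of candidate pairs is replaced by $O(1)$. This turns $nkd$ into $k^2d$, $k^{2α}n^{1+2β}$ into $k^{1+2α}n^{2β}$, $k^{2α-1}n^{1+α+3β}$ into $k^{2α}n^{α+3β}$, and $kn^{1+α+β}$ into $k^2n^{α+β}$. Phase (i) is unchanged. The condition $k\in o(n)$ ensures that the level of size-$\Theta(k)$ blocks is nontrivial, so the block pairs are proper.
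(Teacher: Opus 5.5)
Your phase (i) is the paper's distance oracle (\cref{lem:dist-oracle}). The step you flag as the main obstacle is routine: if a kept pair has $\maxdist_G(A,B)\ge\diam_G-O(D_k)$, the triangle inequality gives $\dist_G(a,b)\ge\diam_G-O(D_k)$ for \emph{all} $a\in A$, $b\in B$ (cf.\ \cref{lem:candidate_far}). So applying \cref{item:prop-diam-partners} to one vertex of $A$ suffices.

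The genuine gaps are in the running time. First, as written you compute a block-distance estimate for every pair of blocks. That is $\Theta((n/k)^2)$ oracle queries of cost $O(n^{\alpha+\beta})$ each. This term is not in the bound and cannot in general be absorbed by choosing $k$ larger. Take a cycle partitioned into arcs ($\alpha=\beta=0$, $d=2$, $d_{\mathrm{local}}=O(1)$, constant-size leaves). There the theorem promises $\tilde O(n)$. But every arc has a candidate and each of your maxdist computations costs $\Omega(k^2)$, so you need $\Omega(n^2/k^2+nk)=\Omega(n^{4/3})$ for every $k$.

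The missing idea is the top-down refinement of \cref{sec:algo:candidate_identification}. Repeatedly replace the largest block of an intermediate flat partition by its children, and test these only against the $O(1)$ candidates of the replaced block. This is sound because $\maxdist_G(A',B')\le\maxdist_G(A,B)$ for descendants $A'\subseteq A$, $B'\subseteq B$. It costs only $O((n/k)\,n^{\alpha+\beta})$, because every intermediate block also has $O(1)$ candidates. The descent needs a fixed pruning threshold. The paper uses a guess $\ell$ found by binary search with time limits: runs with $\ell\ge\diam_G$ finish within $T(n,k)$ steps, so a run exceeding its budget is aborted and treated as $\ell<\diam_G$. This is nested in an exponential search over the budget and over $k$ (\cref{lem:algo_expexpbin}). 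That search also settles a point you skip: the algorithm does not know $d_{\mathrm{local}}$, so it cannot simply ``fix the scale $k$''. Your relative ``near-maximal'' threshold would also work if applied during the descent with slack proportional to the current block diameter. That would replace the binary search on $\ell$, but not the search over $k$.

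Second, your maxdist accounting is off. A BFS in $G$ from every vertex of $A$ costs $k\cdot nd$ per block. When $\Theta(n/k)$ blocks have candidates (as on the torus), this sums to $\Theta(n^2d)$, the naive bound. In the few-corners case it gives $nkd$ rather than $k^2d$. The first branch of the minimum is one computation on the overlay graph $H_{(A,B)}$ (\cref{lem:overlay_graph}):
\begin{itemize}
\item Take $G[A\cup B]$ and join every two boundary vertices of $A$ and $B$ by an edge weighted with their $G$-distance. This takes $O(k^{2\alpha}n^{2\beta})$ oracle queries, in total time $O(k^{2\alpha}n^{\alpha+3\beta})$.
\item Distances in $H_{(A,B)}$ equal those in $G$, since every maximal subpath outside $A\cup B$ is replaced by a shortcut edge.
\item Dijkstra from each $a\in A$ costs $\tilde O(kd+k^{2\alpha}n^{2\beta})$.
\end{itemize}
This gives $\tilde O(k^2d+k^{1+2\alpha}n^{2\beta}+k^{2\alpha}n^{\alpha+3\beta})$ per pair. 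Multiplying by $O(n/k)$, respectively $O(1)$, candidate pairs yields both first branches. Your boundary-matrix idea points in this direction, but you never combine it with the edges inside the blocks. You also present the two separator terms as a refinement of the separator-query method, rather than as part of the same computation as $nkd$.

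Finally, in the direct method an oracle query scans the separators of the lowest common ancestor \emph{and of all its ancestors}. That costs $O(n^{\alpha+\beta})$, not $O(k^\alpha n^\beta)$. Hence the per-pair cost is $k^2n^{\alpha+\beta}$, which is what actually yields your stated total $kn^{1+\alpha+\beta}$.
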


We note that the above bounds on the running time of our algorithm
hold for every $k$ satisfying the requirements, i.e., our algorithm
implicitly chooses $k$ such that the running time is minimized.  Note
that, unless $\alpha < 1/2$, the running time is monotone in $k$.  It
thus makes sense to think of $k$ as small as possible such that the
diameter of size $\Theta(k)$ block is still in
$Ω(d_{\mathrm{corner}})$.

\subparagraph*{Outline.}  The remainder of this paper is structured as
follows.  We introduce important definitions and notation in
\cref{sec:prelim}.  \cref{sec:algo} then presents our algorithm, while
\cref{sec:rgg-nice} contains our analysis on random geometric graphs.
We finally discuss generalizations of our parameters and directions
for future work in \cref{sec:conclusion}.

\section{Preliminaries}
\label{sec:prelim}
We use $[n] = \{1, \dots, n\}$.  Let $G = (V, E)$ be a (simple,
undirected) \emph{graph}.  We also use $V(G)$ and $E(G)$ to refer
to the vertex and edge set of $G$.  The \emph{distance} $d_G(v,w)$
between two vertices $v$ and $w$ in $G$ is the minimum length (i.e.,
number of edges) on a (simple) path from $v$ to $w$.  If there is no
path between $v$ and $w$, then $\dist_G(v,w) = \infty$.  Otherwise,
$v$ and $w$ are \emph{connected}.  A set of vertices (and by extension
a (sub-)graph) is connected if every pair of vertices is connected.
The \emph{eccentricity} of a vertex $v$, written $\ecc_G(v)$, is the
maximum distance between $v$ and any other vertex of $G$.  The
\emph{diameter} of $G$, written $\diam_G$ is the largest eccentricity
of any vertex.  We write
$\maxdist_G(A, B) = \max\{d_G(a,b) \mid a∈A, b∈B\}$ for the maximum
distance between two vertex sets $A, B \subseteq V(G)$.
We call two vertices (respectively, a path) \emph{diametric} if their
distance (respectively, its length) is $\diam_G$.  We also extend the
notions of eccentricity and diameter to \emph{metric spaces} in
general.  For a metric space $M$ consisting of a set of elements $X$
and a distance function $d_M$, we define the \emph{$M$-ball} of radius
$r$ around an element $e∈X$ as the set of elements $e'∈X$ with
$d_M(e,e') \le r$.
We say that a set of elements $Y \subseteq X$ can be \emph{covered} by
a ball of radius $r$, if there is an element $e∈X$ such that $Y$ is
contained in the ball of radius $r$ around $e$.  We omit the
subscripts $M$ and $G$ from the notation for eccentricity, diameter,
or balls/neighborhoods if the graph or metric is clear from context.
%
%
We write $N(v)$ for the (open) neighborhood of $v$ and $N_G(v,k) = \{w∈V(G) \mid d_G(v,w) = k\}$ for the $k$-neighborhood of
$v$, i.e., the set of vertices with distance exactly $k$.

Let $G$ be connected.  We define a \emph{recursive partition}
$\mathcal{P} = (T, \{B_v\}_{v∈V(T)})$ of $G$ as a rooted tree $T$
where each \emph{node} $v∈V(T)$ is associated with a connected set of
vertices $B_v \subseteq V(G)$.  Note that we call the vertices of $T$
\emph{nodes} to distinguish them from vertices of $G$.  The vertex
sets $\{B_v\}_{v∈V(T)}$ are called \emph{blocks}, i.e., $B_v$ is the
\emph{block} of $v$.  We require the blocks of the leaves of $T$ to
form a partition of $V(G)$.  Moreover, for each non-leaf node $u$ of
$T$ the block $B_u$ is equal to the union of the blocks of leaf nodes
in the subtree below $u$.  Note that this implies $B_r = V(G)$ for the
root node $r$ of $T$.  We further require each non-leaf node has at
least two children and at most a constant number of children (constant
branching factor).

Note that the tree $T$ is uniquely defined by the set of blocks and we
thus often use $\mathcal{P}$ as a set of blocks and write $B \in
\mathcal P$ for a block $B$.
We call a block $B_v \in \mathcal{P}$ a parent of a block $B_w$, if
$v$ is a parent of $w$ in $T$.  This lets us also use the relations of
parent, descendant, ancestor, and leaf for blocks of $\mathcal{P}$.
In particular, we write $\parent(B_w) = B_v$ and $\parent(w) = v$.  We
define the \emph{boundary} $S_{u}$ of a block $B_u$ as the subset of
$B_{u}$ that has neighbors in $V(G) \setminus B_{u}$ in the graph $G$.
The \emph{separator} of a block is the union of the boundaries of its
children.  Note that leaf blocks have empty separators and the root
block has an empty boundary.  We call $\mathcal{P}$
\emph{$ε$-balanced}, for $ε \ge 0$ if, for every node $u \in V(T)$ and
children $v$ and $w$ of $u$, we have $|B_v| \le (1+ε) |B_w|$.  We call
$\mathcal{P}$ \emph{balanced} if it is $ε$-balanced for $ε∈O(1)$.

\removed{
We introduce basic definitions and notational conventions used
throughout the paper.  For a set $X$ we denote the powerset by $2^X$
and the set of all $k$ element subset of $X$ by
${X \choose k} = \{x∈2^X \mid |x|=k\}$.  We use $[n] = \{1,\dots, n\}$
for the set of the first $n$ natural numbers.  For $a,b∈ℝ$, we denote
by $[a,b]$, $[a,b)$, $(a,b]$, and $(a,b)$ the usual closed, half-open,
and open intervals of real numbers.  We use $\log x$ to refer to
natural logarithm with base $e$.

\subsection{Graphs and Metrics} A \emph{graph} $G = (V, E)$ consists
of a set of \emph{vertices} $V$ and a set of \emph{edges}
$E \subseteq {V \choose 2}$.  In the context of multiple graphs, we
also use $V(G)$ and $E(G)$ to refer to the vertex and edge set of $G$.
Two vertices $v$ and $w$ are \emph{adjacent} if there is an edge
$\{v,w\}∈E$.  In this case $e$ and $v$ (respectively $w$) are
\emph{incident} to each other.  We denote the \emph{induced subgraph}
of a vertex set $X\subseteq V$ by $G[X] = (X, E \cap {X \choose 2})$.
The \emph{degeneracy} of $G$ is the smallest $d$ such that every
induced subgraph contains a vertex of degree at most $d$.

A \emph{path} from $v$ to $w$ is a non repeating sequence of vertices
$v = v_0, \dots, v_{\ell} = w$, where each pair of subsequent vertices
is adjacent.  The \emph{length} of this path is $\ell$.  The
\emph{distance} $d_G(v,w)$ between two vertices $v$ and $w$ in $G$ is
the minimum length of a path from $v$ to $w$.  If there is no path
between $v$ and $w$, then $\dist_G(v,w) = \infty$.  For two sets of
vertices $X, Y\subseteq V$ we write $d_G(X, Y)$ for the minimum length
of a path from a vertex $x∈X$ to a vertex $y∈Y$.

The \emph{eccentricity} of a vertex $v$ is the maximum distance
between $v$ and any other vertex of $G$, i.e.,
$\ecc_G(v) = \max\{d_G(v,w) \mid w∈V(G)\}$.  The \emph{diameter} of
$G$ is the largest eccentricity of any vertex of $G$, i.e.,
$\diam_G = \max\{\ecc(v) \mid v∈V(G)\}$.  Note that the diameter is
finite exactly if $G$ is connected.  We call two vertices
(respectively, a path) \emph{diametric} if their distance
(respectively, its length) is $\diam_G$.  We further write
$\maxdist_G(A, B) = \max\{d_G(a,b) \mid a∈A, b∈B\}$ for the maximum
distance between two vertex sets $A, B \subseteq V(G)$.  Note that
$\maxdist_G(V(G), V(G)) = \diam_G$.

We call two vertices \emph{connected} if they have finite distance.
Likewise, a set of vertices $X\subseteq V$ is \emph{connected} if the
vertices in $X$ are pairwise connected and a graph is connected if its
vertex set is connected.

A graph is a \emph{tree} if there is exactly one path between every
pair of vertices.  A \emph{rooted tree} is a tree $T$ together with a
designated root vertex $r∈V(T)$.  We call $v∈V(T)$ a \emph{descendant}
of $w∈V(T)$ if $w$ lies on the unique path from $v$ to the root $r$.
In this case, we call $w$ an \emph{ancestor} of $v$.  If $v$ and $w$
are adjacent, $w$ is the \emph{parent} of $v$, written
$\parent(v) = w$, and $v$ is a \emph{child} of $w$.  A \emph{leaf} is
a vertex without children.  For $v∈V(T)$ we define the \emph{subtree
  below $v$} as the subgraph induced by $v$ and the descendants of
$v$.  The \emph{branching factor} is the maximum number of children of
any vertex $v∈V(T)$.

The \emph{(open) $k$-neighborhood} of a vertex $v$ is the set of
vertices with distance exactly $k$ from $v$.  We write
$N_G(v,k) = \{w∈V(G) \mid d_G(v,w) = k\}$ for the $k$-neighborhood of
$v$.  The \emph{closed $k$-neighborhood} of a vertex $v$, $N_G[v,k]$,
is the set of vertices with distance at most $k$ from $v$.  The
$1$-neighborhood of $v$ is also simply called \emph{neighborhood} of
$v$, $N_G(v)$.

Note that the notions of eccentricity and diameter can be extended to
\emph{metric spaces} in general.  A metric space $M = (X, d_M)$
consists of a universe $X$ and a distance function
$d_M: X\times X \to ℝ$, such that for each element $e∈X$ we have
$d_M(e,e) = 0$, for any two elements $e,e'∈X$ we have
$d_M(e,e') = d_M(e',e)$ and $d_M(e,e')>0$ if $e\neq e'$, and for any
three elements $a,b,c$ we have $d_M(a,c) \le d_M(a,b) + d_M(b,c)$.
Note that $(V, d_G)$ is a metric space.  We define \emph{$M$-ball} of
radius $r$ around an element $e∈X$ as the set of elements $e'∈X$ with
$d_M(e,e') \le r$.  On a graph this is equal to the closed
neighborhood.  We say that a set of elements $Y \subseteq X$ can be
\emph{covered} by a ball of radius $r$, if there is an element $e∈X$
such that $Y$ is contained in the ball of radius $r$ around $e$.  We
omit the subscripts $M$ and $G$ from the notation for eccentricity,
diameter, or balls/neighborhoods if the graph or metric is clear from
context.

\subsection{Recursive Partitions} %
We define \emph{recursive partitions} as a generic recursive
subdivisions of graphs.  Let $G$ be connected.  We define the
\emph{recursive partition} $\mathcal{P} = (T, \{B_v\}_{v∈V(T)})$ of
$G$ as a rooted tree $T$ where each \emph{vertex} $v∈V(T)$ is
associated with a connected set of vertices $B_v \subseteq V(G)$.  To
distinguish $V(T)$ from $V(G)$, we call the vertices of $T$
\emph{nodes}.  The vertex sets $\{B_v\}_{v∈V(T)}$ are called
\emph{blocks}, i.e., $B_v$ is the \emph{block} of $v$.  We require the
blocks of the leaves of $T$ to form a partition of $V(G)$.  Moreover,
for each non-leaf node $u$ of $T$ the block $B_u$ is equal to the
union of the blocks of leaf nodes in the subtree below $u$.  Note that
this implies $B_r = V(G)$ for the root node $r$ of $T$.  We further
require each non-leaf node has at least two children.  We generally
assume the branching factor of $T$ to be in $O(1)$ (with respect to
$G$ and $\mathcal{P}$).

We call a block $B_v$ of $\mathcal{P}$ a parent of a block $B_w$, if
$v$ is a parent of $w$ in $T$.  This lets us also use the relations of
parent, descendant, ancestor, and leaf for blocks of $\mathcal{P}$.

We define the \emph{boundary}\todo{removed outer separator} $S_{u}$ of
a block $B_u$ as the subset of $B_{u}$ that has neighbors in
$V(G) \setminus B_{u}$.  The \emph{separator}\todo{removed inner
  separator} of a block is the union of the boundaries of its
children.  Node that leaf blocks have empty inner separators and the
root block has an empty boundary.

We call $\mathcal{P}$ $\varepsilon$\emph{-balanced}, for $ε \ge 0$ if,
for every node $u \in V(T)$ and children $v$ and $w$ of $u$, we have
$|B_v| \le (1+ε) |B_w|$.  We call $\mathcal{P}$ \emph{balanced} if it
is $ε$-balanced for $ε∈O(1)$ and \emph{nice} if it is balanced and has
constant branching factor.

}

\section{Diameter Algorithm}%
\label{sec:algo}

Let $G = (V, E)$ be a graph and let $\mathcal{P}$ be a recursive
partition of $G$.  Our algorithm roughly works as follows.  We choose
a \emph{flat partition}, a set of similarly sized blocks
$\mathcal{B} \subset \mathcal P$ that together form a partition of
$V$.  This can be easily achieved by choosing some parameter
$k ≥ k_{\mathrm{leaf}}$ and including a block $B \in \mathcal{P}$ in
$\mathcal{B}$ if it has size at most $k$ while its parent has size
larger than $k$, i.e.,
$\mathcal{B} = \{B \in \mathcal{P} \mid |B| \le k \text{ and }
|\mathrm{parent}(B)| > k\}$.

As $\mathcal{B}$ is a partition of $V$, computing $\maxdist_G(A,B)$
for every pair of blocks $A, B \in \mathcal{B}$ (including $A=B$)
yields the diameter of the graph.  To save time, we ignore a pair of
blocks $A, B \in \mathcal{B}$ if $\maxdist_G(A,B)$ is obviously too
small to be relevant for the diameter.  For this, we use an upper
bound $\upper(A, B)$ on $\maxdist_G(A,B)$ and a global lower bound $\ell$
on the diameter.  If $\upper(A,B) < \ell$, we can safely skip the pair
$(A, B)$.  Otherwise, we call $(A, B)$ a \emph{candidate pair} and $A$
a \emph{candidate partner} for $B$ (and vice versa).  With this, it
remains to solve the following problems.

\begin{enumerate}
  \item\label[point]{item:dist} Efficiently compute the $\maxdist$ for each
        candidate pair.
  \item\label[point]{item:pairs} Bound the number of candidate pairs and
        compute them efficiently.
  \item\label[point]{item:ell} Obtain a lower bound $\ell$.
\end{enumerate}

Regarding \cref{item:dist}, we discuss in \cref{sec:algo:preproc} how
to efficiently compute $\maxdist(A, B)$ for a candidate pair $(A, B)$.
For this, we introduce a pre-processing step in which we construct a
data structure that acts as an exact distance oracle and lets us
quickly compute the distance between arbitrary vertices.

For \cref{item:pairs}, the number of candidate pairs depends on the
upper bound $\upper(A,B)$.  We define $\upper(A, B)$ in
\cref{sec:algo:upper_bound} and show how it can be efficiently
evaluated using the distance oracle from \cref{sec:algo:preproc}.
Assuming that $G$ has $d_{\mathrm{local}}$-local diametric partners
(\cref{item:prop-diam-partners}) (and optionally also
$d_{\mathrm{corner}}$-few corners, \cref{item:prop-few-corners}) and
that $\mathcal{P}$ is $(α,β)$-well-spaced
(\cref{item:prop-small-sep,item:prop-size-dependent-diam,item:prop-fragmentation})
we then show that our definition of $\upper(A,B)$ leads to a small
number of candidate pairs.  In fact, there can be much fewer candidate
pairs than pairs of blocks.  In
\cref{sec:algo:candidate_identification} we provide a way of computing
all candidate pairs that does not need to consider all pairs of
blocks.


Finally, we address \cref{item:ell}, by essentially performing a
binary search on the solution.  For this, we treat $\ell$ not as a
lower bound on the diameter, but simply as a \emph{guess} for the
diameter.  Then, using the approach outlined above, we can decide
whether $\ell$ is smaller, equal or larger than the diameter.
Assuming $\ell ≥ \diam_G$, there are only few candidate pairs and the
algorithm terminates quickly (see also the discussion of
\cref{item:pairs} above).  Equivalently, if the algorithm does not
terminate quickly, this means that $\ell < \diam_G$.  This allows us
to determine $\diam_G$ in a binary search.  In
\cref{sec:algo:diameter} we discuss the details for this, including an
additional exponential search in order to choose $k$ optimally.


\subsection{Distance Oracle and Maxdist Computation}%
\label{sec:algo:preproc} 

In the pre-processing step, we conduct breadth-first searches from
separator vertices in the blocks of the recursive
partition.  This allows us to construct an exact distance oracle and
compute vertex eccentricities in each block.

The rough idea for the distance oracle is very simple.  If a vertex
set $S$ separates two vertices $v$ and $w$, then shortest paths
between $v$ and $w$ cross $S$.  Thus there exists a vertex $s∈S$ such
that the distance between $v$ and $w$ is $\dist_G(v,s) + \dist_G(s,w)$
and the distance between $v$ and $w$ can be found by checking the
distances to $v$ and $w$ from each $s∈S$.  We note that this is a
standard approach and has already been used for distance oracles and
(directed) reachability oracles in other
settings~\cite{DBLP:journals/algorithmica/FarzanK14,DBLP:conf/esa/ArikatiCCDSZ96,de_Berg_2023}.

\begin{lemma}\label{lem:dist-oracle}
  Let $G$ be a graph with degeneracy $d$ and a balanced recursive
  partition $\mathcal{P}$ with $(α,β)$-small separators
  (\cref{item:prop-small-sep}).  Then, in $O(n^{1+α+β} \cdot d)$ time,
  we can construct a data-structure $\mathcal{D}$ requiring
  $O(n^{1+α+β})$ space that can
  \begin{romanenumerate}
    \item for any two vertices $v,w$ compute $\dist_G(v,w)$ in
    $O(n^{α+β})$ time, unless $v$ and $w$ are both non-boundary
    vertices of the same leaf-block of $\mathcal{P}$,
    %
    %
    \item for each block $B∈\mathcal{P}$ return a vertex $b∈B$ and
    $\ecc_{G[B]}(b)$ in $O(1)$ time.
  \end{romanenumerate}
\end{lemma}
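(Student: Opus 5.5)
The construction runs a full BFS in $G$ from every separator vertex of every block of $\mathcal{P}$, and additionally, for each block $B$, a BFS inside $G[B]$ from one fixed vertex $b\in B$.

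\emph{Counting separator vertices.} Since $\mathcal{P}$ is balanced with constant branching factor, the block sizes shrink geometrically along every root-to-leaf path. Hence $T$ has depth $O(\log n)$, and the blocks on any fixed level are disjoint. A block of size $k$ has a separator of size $O(k^\alpha n^\beta)$. On a level with $m_i$ blocks of sizes $k_j$ summing to at most $n$, concavity (for $\alpha\le 1$) gives
\[
\sum_j k_j^\alpha \le m_i^{1-\alpha} n^\alpha .
\]
To get the clean bound, I sum level by level from the top instead. Level $i$ has $c^i$ blocks of size about $n/c^i$, which gives about $c^{i(1-\alpha)} n^{\alpha+\beta}$ separator vertices. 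This is a geometric series dominated by the bottom level, and the total number of separator vertices is bounded by
\[
\sum_{B} |B|^\alpha n^\beta \le O\bigl(n^{\alpha+\beta}\bigr)\cdot \#\{\text{blocks}\}\cdot n^{-\alpha}\cdot n^{\alpha} .
\]
This crude bound does not directly give $O(n^{\alpha+\beta})$. I will therefore be content with the per-vertex statement: every vertex $v$ lies in $O(\log n)$ blocks, and the separators of $v$'s ancestor blocks have total size $O(n^{\alpha+\beta})$, since their sizes decrease geometrically down the path; at worst there are $\log n$ terms, which the $\tilde{O}$ in later uses absorbs. For the construction time I use the graph-wide observation that separator vertices number at most $n$ overall, so all BFS runs cost $O(n\cdot m)$. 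This is too much, so instead I restrict each BFS from a separator vertex of block $B$ to $G[B]$. A BFS in $G[B]$ costs $O(|B|d)$ by degeneracy. Summing $|S_B|\cdot|B| d \le |B|^{1+\alpha} n^\beta d$ over the blocks of a level gives at most $n^{1+\alpha+\beta}d$ by convexity. Over $O(\log n)$ levels the total is $O(n^{1+\alpha+\beta}d)$, up to the geometric decay that removes the log. We store the distances from each separator vertex of $B$ to all vertices of $B$, for a total space of $O(n^{1+\alpha+\beta})$ by the same sum. The eccentricity BFS in each $G[B]$ costs $O(|B|d)$ per block, $O(nd\log n)$ in total, which establishes (ii).

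\emph{Query.} For $v,w$, let $B$ be the lowest common ancestor block, so that $v$ and $w$ lie in different children of $B$ unless they share a leaf block. The key point, and the main obstacle, is that distances restricted to $G[B]$ do not equal $d_G$: a shortest path may leave $B$. I handle this by walking up the ancestors. Let $B=B_0\subset B_1\subset\dots$ be the chain to the root. Any $v$–$w$ path either stays in $B_0$, in which case it crosses the separator of $B_0$, or it leaves $B_j$ for a largest $j$. In the latter case it crosses the boundary of the child of $B_{j+1}$ containing $B_j$, which is part of the separator of $B_{j+1}$, and a shortest path can be taken to stay inside $B_{j+1}$ while meeting separator vertices of $B_{j+1}$. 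So I take the minimum over all ancestors $B'$ of $v$ and $w$ and all separator vertices $s$ of $B'$ of $d_{G[B']}(v,s)+d_{G[B']}(s,w)$, where both terms are stored. Every such sum corresponds to a real walk, so it upper-bounds $d_G(v,w)$. A shortest path is captured at the smallest ancestor containing it, since it must cross that block's separator: either its endpoints are in different children, or it leaves the child. The exception is when $v$ and $w$ are both non-boundary vertices of one leaf block and the path stays inside that leaf, which matches the excluded case in (i). If $v$ or $w$ is a boundary vertex of the leaf, it is itself a separator vertex of the parent, which covers that case. The query time is the total separator size along the ancestor chain, $O(n^{\alpha+\beta})$ by geometric decay.
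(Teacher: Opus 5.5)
Your final construction, a BFS inside $G[B]$ from every separator vertex of $B$ storing $d_{G[B]}(s,\cdot)$, with queries answered by minimizing $d_{G[B']}(v,s)+d_{G[B']}(s,w)$ over the separators of the LCA block and all its ancestors, is exactly the paper's proof, and your level-wise superadditivity/geometric-decay summation is just an unrolled form of the paper's Akra--Bazzi recurrence (your extra per-block BFS for (ii) also explicitly covers leaf blocks, whose separators are empty). The argument is correct, but you should remove the abandoned detours (the opening ``full BFS in $G$'', the garbled total-separator inequality, and the $\tilde O$ hedge on the ancestor-chain sum), since they contradict the clean $O(n^{1+\alpha+\beta}d)$ and $O(n^{\alpha+\beta})$ bounds you ultimately establish.
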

\begin{proof}
  For each block $B$ of $\mathcal{P}$ we perform a BFS restricted to
  $G[B]$ from each vertex $s$ in the separator of $B$.  We store the
  distances from $s$ to each vertex $b∈G[B]$ in an array $D_{B,s}$.
  For the running time, note that a block $B$ with $k$ vertices has
  $k \cdot d$ edges and an separator of size $s_k ∈ O(k^α\cdot n^β)$.
  Thus, the cost for the $s_k$ BFS runs on $G[B]$ is in
  $O(s_k \cdot kd)$.  We write $T(k)$ for the running time of these
  BFS runs plus the running time in all descendant blocks of $B$.
  Each block $B$ has $c ∈ O(1)$ children that form a partition of $B$,
  so this results in the recurrence
  \begin{equation*}
    T(k) = k^α \cdot n^β \cdot k \cdot d + \sum_{i=1}^{c} T(b_i k),
  \end{equation*}
  where the $b_i∈(0,1)$ are constants summing up to $1$,
  $\sum_{i=1}^{c} b_{i} = 1$.
  %
  %
  Then, $T(n) ∈ O(n^{1+α+β} \cdot d)$ follows via induction over $n$
  or by applying the theorem of Akra and Bazzi~\cite{akra_bazzi}.
  Similarly, for each size $k$ block we need to store $O(k)$
  distances, so total space needed is in $O(n^{1+α+β})$.  Note that
  these BFS runs allow us to store, for each block $B$, the
  eccentricity of a vertex of $G[B]$ without incurring any additional
  asymptotic overhead.  This implies statement (ii).

  It remains to discuss the distance queries.  Let $v$ and $w$ be two
  vertices and let $P$ be a shortest path between them in $G$.
  Consider the smallest block $B$ that contains $P$.  If $B$ is a
  leaf-block, then both $v$ and $w$ are contained in $B$.  Then, if
  without loss of generality $v$ is a boundary vertex of $B$, we have
  $\dist_{G[B]}(v,w) = \dist_G(v,w)$.  Furthermore, $v$ is a separator
  vertex of the parent $B'$ of $B$ and the distance between $v$ and
  $w$ can be looked up in the pre-computed distance array $D_{B',v}$.
  If $v$ and $w$ are both non-boundary vertices of $B$, it $P$ may not
  contain any separator vertices, so we ignore this case.

  If otherwise $B$ is not a leaf-block, we claim that $P$ contains a
  separator of $B$.  To see this consider two cases.  If $v$ and $w$
  lie in different child blocks of $B$, then a path from $v$ to $w$
  clearly needs to cross the separator of $B$.  Otherwise, if $v$ and
  $w$ lie in the same child block $B'$ of $B$, then by the choice of
  $B$ the path $P$ is not contained in $B'$.  This means that $P$
  crosses a boundary vertex $b$ of $B'$, which is a separator vertex
  of $B$.  Then, as $P$ is contained in $B$ we have
  $\dist_G(v,w) = \dist_{G[B]}(v,b) + \dist_{G[B]}(b,w)$.  Again,
  these distances can be looked up in a pre-computed distance array
  $D_{B,b}$.

  To summarize both cases, there exists a block $B$ that is a common
  ancestor of the leaf blocks containing $v$ and $w$ and a separator
  vertex $s$ of $B$ such that
  $\dist_G(v,w) = \dist_{G[B]}(v,s) + \dist_{G[B]}(s,w)$.  Thus, in
  order to answer distance queries, we can proceed as follows.  For a
  given pair of vertices $v,w∈V(G)$ we first identify the leaf blocks
  $B_v$ and $B_w$ with $v∈B_v$ and $w∈B_w$.  This can be done in
  $O(1)$ time, assuming that as an additional preprocessing step we
  iterate over all leaves of $\mathcal{P}$ in $O(n)$ time.  Next, we
  find the lowest common ancestor $B^*$ of $B_v$ and $B_w$ in
  $\mathcal{P}$ in $O(1)$ time assuming some additional $O(n)$ time
  preprocessing~\cite{lca-query}.  Let $B_1 = B^*, \dots, B_\ell$ be
  the sequence of ancestor blocks from $B^*$ to the root.  For each
  block $B_i$ with separator $S_i$ (for $i∈[\ell]$) we iterate over
  each separator vertex $s∈S_i$ and return the minimum value of
  $\dist_{G[B_i]}(s,v) + \dist_{G[B_i]}(s,w)$.  By the considerations
  above, this correctly gives the distance of $v$ and $w$ in $G$
  unless both $v$ and $w$ are non-boundary vertices of the same
  leaf-block.

  To analyze the running time recall that distances
  $\dist_{G[B_i]}(s,v)$ and $\dist_{G[B_i]}(s,v)$ can be looked up in
  $D_{B,s}$ in $O(1)$ time.  Consequently, we need to bound the number
  of the separator vertices $S_1 ∪ \dots ∪ S_\ell$ of $B^*$ and its
  ancestors.  Each block with $k$ vertices has a separator of size
  $O(k^αn^β)$ and, for a constant $c<1$, each child block has size at
  most $c$ times the size of its parent.  Using $\ell ∈ O(\log n)$, we
  have
  \[
    \sum_{i=1}^\ell |S_i| ∈ \sum_{i=0}^{\log n} O\parens*{(n \cdot c^i)^αn^β} = O(n^αn^β) \sum_{i=0}^{\log n} (c^α)^i = O(n^αn^β),
  \]
  so the oracle query can be answered in $O(n^αn^β)$ time.
\end{proof}

In addition to accelerating the upper bound evaluation (see
\cref{sec:algo:upper_bound}), this distance oracle allows us to
efficiently compute the maximum distance between any two blocks $A$
and $B$ of $\mathcal{P}$.  We present two methods for doing so.  The
first one simply computes $\maxdist(A,B)$ using $|A| \cdot |B|$ oracle
calls.  

\begin{lemma}\label{lem:maxdist_direct_oracle}
  Let $G$ be a graph with degeneracy $d$ and let $\mathcal{P}$ be a
  balanced recursive partition with $(α,β)$-small separators
  (\cref{item:prop-small-sep}).  After a pre-processing step taking
  $O(n^{1+α+β}d)$ time, for any two distinct blocks
  $A, B ∈ \mathcal{P}$ we can compute $\maxdist_G(A,B)$ in
  $O(|A| \cdot |B| \cdot n^{α+β})$ time.
\end{lemma}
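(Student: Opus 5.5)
The plan is to build the data structure $\mathcal{D}$ of \cref{lem:dist-oracle} in the pre-processing step, which takes $O(n^{1+α+β}d)$ time. We then compute $\maxdist_G(A,B)=\max_{a∈A,b∈B}\dist_G(a,b)$ by querying the oracle on all $|A|\cdot|B|$ pairs. Each query costs $O(n^{α+β})$ by part (i), which gives the claimed bound. The one issue is that part (i) excludes queries where both vertices are non-boundary vertices of the same leaf block.

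We handle this by splitting into cases according to how $A$ and $B$ are related in $\mathcal{P}$.

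\textbf{Incomparable blocks.} If $A$ and $B$ are distinct and neither contains the other, then blocks of $\mathcal{P}$ are laminar, so $A∩B=\emptyset$. Hence $a$ and $b$ lie in different leaf blocks for every $a∈A$, $b∈B$. The excluded case never arises, and every query is answered correctly in $O(n^{α+β})$ time.

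\textbf{Nested blocks.} Now suppose without loss of generality $A\subsetneq B$. A pair $(a,b)$ is problematic only if $a$ and $b$ are interior (non-boundary) vertices of the same leaf block $L\subseteq A$. For such pairs we bound $\dist_G(a,b)$ directly. Since the leaf blocks partition $V$, there are $O(|A|)$ leaf blocks inside $A$. For each of them we run one BFS in $G$ from each of its vertices. A leaf block has size at most $k_{\mathrm{leaf}}\le|A|$, so this costs $O(k_{\mathrm{leaf}}\cdot m)$ per leaf block. That is too expensive in general, so the natural fix is instead a BFS restricted to the subgraph induced by the leaf together with its neighbours. This is not obviously correct, because a shortest path between interior vertices can leave the leaf and come back.

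The cleaner alternative is to fold the fix into the pre-processing. For every leaf block $L$ we run a BFS in $G[L]$ from every vertex of $L$, and combine it with the boundary distances. The true distance between $a,b∈L$ is
\[
\dist_G(a,b)=\min\Bigl(\dist_{G[L]}(a,b),\;\min_{s,t∈S_L}\bigl(\dist_{G[L]}(a,s)+\dist_G(s,t)+\dist_{G[L]}(t,b)\bigr)\Bigr),
\]
where $S_L$ is the boundary of $L$ and $\dist_G(s,t)$ is obtained from the oracle, since $s$ and $t$ are boundary vertices. Whether the combined cost of these BFS runs and boundary combinations stays within the claimed bounds still has to be checked.

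The main obstacle is this same-leaf case. My first attempt will be to argue that the statement is only ever applied to blocks of the flat partition $\mathcal{B}$, where distinct blocks are disjoint and the problematic case is vacuous. I will then treat the distinctness and disjointness of $A$ and $B$ as the intended hypothesis and give the proof as a direct count of oracle calls. If the nested case must genuinely be covered, I will fall back to the leaf-block augmentation above.
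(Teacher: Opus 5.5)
This is the paper's proof: build the oracle of \cref{lem:dist-oracle} and query it on all $|A|\cdot|B|$ pairs. Your observation that distinct non-nested blocks are disjoint, so the excluded same-leaf case of part~(i) never arises, is exactly what the paper's one-line proof tacitly relies on. It also covers the lemma's only application, to distinct and hence disjoint blocks of a flat partition $\mathcal{B}$.

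Commit to that reading rather than the leaf-block augmentation. All-pairs BFS inside the leaves can cost $\Theta(n\,k_{\mathrm{leaf}}\,d)$, and nothing in the hypotheses keeps this within the $O(n^{1+\alpha+\beta}d)$ preprocessing budget. The nested case is therefore left unproved both in your write-up and in the paper's proof.
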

\begin{proof}
  The running time follows directly from \cref{lem:dist-oracle}, by
  calling the distance oracle for each pair of vertices in
  $A \times B$.
\end{proof}

For the second method, we construct small auxiliary graphs on which
distance computations are faster than on the whole graph.  This
improves upon the simple approach in some settings, especially on
graphs with small separators.

Let $A,B\subseteq V(G)$ be two sets of vertices.  We define the
\emph{overlay graph} $H_{(A,B)}$ of $A$ and $B$ as follows.  Let $S_A$
and $S_B$ be the boundary of $A$ and $B$.  Then, we construct
$H_{(A,B)}$ by taking $G[A∪B]$ and inserting weighted edges between
all vertices $S_A ∪ S_B$.  For $s, s'∈ S_{A} ∪ S_{B}$ the edge
$\{s, s'\}∈E(H_{(A,B)})$ has weight equal to the distance of $s$ and
$s'$ in $G$, i.e., $d_{G}(s,s')$.  The following lemma shows that
distances in the overlay graph are equal to distances in $G$.

\begin{lemma}\label{lem:aux_graph_dists}
  Let $A, B \subseteq V(G)$ be vertex subsets and let $H = H_{(A,B)}$
  be the overlay graph of $A$ and $B$.  Then for any two vertices
  $u, v∈A∪B$ their distance in $G$ is equal to their distance in $H$,
  i.e., $d_{G}(u,v) = d_{H}(u,v)$.
\end{lemma}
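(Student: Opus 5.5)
We prove the two inequalities $d_H(u,v) \le d_G(u,v)$ and $d_H(u,v) \ge d_G(u,v)$ separately. Throughout, edges of $G[A\cup B]$ in $H$ have weight $1$.

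For $d_H(u,v) \ge d_G(u,v)$, take any path in $H$ from $u$ to $v$. Each unweighted edge of it is an edge of $G$. Each weighted edge $\{s,s'\}$ has weight $d_G(s,s')$, so we can replace it by a shortest $s$--$s'$ path in $G$ of exactly that length. The result is a walk in $G$ from $u$ to $v$ whose length equals the weight of the $H$-path. Hence every $H$-path has weight at least $d_G(u,v)$.

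For $d_H(u,v) \le d_G(u,v)$, fix a shortest path $P$ from $u$ to $v$ in $G$ and decompose it into maximal subpaths inside $A\cup B$ and maximal excursions outside $A\cup B$.
\begin{itemize}
  \item \emph{Subpaths inside $A\cup B$.} Each such subpath is a path in $G[A\cup B]$, so it exists in $H$ with the same length.
  \item \emph{Excursions outside $A\cup B$.} Consider an excursion leaving $A\cup B$ right after a vertex $x$ and re-entering at a vertex $y$. Then $x$ has a neighbour outside $A\cup B$. If $x\in A$, that neighbour lies in $V\setminus A$, so $x\in S_A$; if $x \in B$, then $x\in S_B$. The same argument shows $y\in S_A\cup S_B$.
\end{itemize}
So $x$ and $y$ are joined in $H$ by a weighted edge of weight $d_G(x,y)$, which is at most the length of the $x$--$y$ segment of $P$ (and equals it, since $P$ is shortest).

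Concatenating the inside subpaths and these weighted edges gives a $u$--$v$ path in $H$ of weight at most $|P| = d_G(u,v)$. The endpoints $u,v$ lie in $A\cup B$, so $P$ starts and ends inside, and the decomposition is well defined.

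The only delicate step is the boundary argument in the second bullet: the exit and entry vertices must be boundary vertices of $A$ or $B$, not just vertices of $A \cup B$. This holds because the boundary of $A$ is defined as the vertices of $A$ with a neighbour outside $A$, and a neighbour outside $A\cup B$ is in particular outside $A$. This argument requires $x\ne y$; if $x=y$, the excursion can simply be dropped, and in fact this cannot happen on a shortest path.
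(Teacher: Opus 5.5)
Your proposal is correct and follows essentially the same route as the paper: for $d_H(u,v)\le d_G(u,v)$ you replace each maximal excursion of a shortest $G$-path outside $A\cup B$ by the weighted shortcut edge between its (boundary) exit and entry vertices, and for the reverse inequality you expand each weighted edge into a shortest path in $G$. Your version also spells out why the exit and entry vertices lie in $S_A\cup S_B$, which the paper only asserts.
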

\begin{proof}
  We first show $d_{H_{(A,B)}}(u,v) \le d{G}(u,v)$.  For this, let $P$
  be a shortest path from $u$ to $v$ in $G$.  If $P$ contains no
  vertices of $V(G) \setminus V(H_{(A,B)})$, then $P$ is also a path
  in $H_{(A,B)}$.  For the other case, we first note that only
  boundary vertices of $A$ or $B$ can have neighbors in
  $V(G) \setminus V(H_{(A,B)})$.  Now consider an inclusion-maximal
  subpath $P_{s}$ of $P$ that contains only vertices of
  $V(G) \setminus V(H_{(A,B)})$.  Then, in $H_{(A,B)}$ the two
  vertices that come before and after $P_{s}$ on $P$ are connected by
  a weighted edge of length $|P_{s}|$.  Thus, by removing all maximal
  subpaths containing only vertices of $V(G) \setminus V(H_{(A,B)})$
  from $P$ we obtain an equally long path in $H_{(A,B)}$.

  Next, we show $d_{G}(u,v) \le d_{H_{(A,B)}}(u,v)$.  Suppose
  $P$ is a shortest path between $u$ and $v$ in $H_{(A,B)}$.  Then, an
  equally long path in $G$ can be obtained by replacing any weighted
  shortcut edge of $P$ with the shortest path between the endpoints of
  that edge in $G$.
\end{proof}

To compute the maximum distance of two blocks $A$ and $B$ of
$\mathcal{P}$, we can thus compute it on $H_{(A,B)}$ instead.  The
running time then consists of the time for the construction of
$H_{(A,B)}$ plus the time for running Dijkstra's algorithm $|A|$
times.  We summarize this in the following lemma.

\begin{lemma}\label{lem:overlay_graph}
  Let $G$ be a graph with degeneracy $d$ and let $\mathcal{P}$ be a
  balanced recursive partition with $(α,β)$-small separators
  (\cref{item:prop-small-sep}).  After a pre-processing step taking
  $O(n^{1+α+β} d)$ time, for any blocks $A, B \subseteq \mathcal{P}$
  (including $A=B$) with $k = |A∪B|$ we can compute $\maxdist_G(A, B)$
  in time
  \[
    O(k² \log k + k²d + k^{1+2α}n^{2β} + k^{2α}n^{α+3β}).
  \]
\end{lemma}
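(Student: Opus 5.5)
We will build the overlay graph $H = H_{(A,B)}$ explicitly, run Dijkstra's algorithm on it from every vertex of $A$, and read off $\maxdist_G(A,B)$. Correctness is immediate from \cref{lem:aux_graph_dists}: for $u\in A$ and $v\in B$ we have $d_H(u,v)=d_G(u,v)$, so the maximum over all Dijkstra outputs equals $\maxdist_G(A,B)$. The pre-processing is exactly the construction of the data structure $\mathcal{D}$ from \cref{lem:dist-oracle}, which takes $O(n^{1+α+β}d)$ time. The remaining work is to bound the cost of building $H$ and the cost of the Dijkstra runs, and to match these with the four terms in the claimed bound.

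\textbf{Size of $H$.} The graph $G[A\cup B]$ has $k$ vertices and, by degeneracy, $O(kd)$ edges. It can be extracted in $O(kd)$ time by scanning the adjacency lists of $A\cup B$; if the lists are long, we instead use a degeneracy orientation, which gives $O(d)$ out-edges per vertex. The boundaries $S_A$ and $S_B$ are subsets of the separators of the parents of $A$ and $B$, whose sizes are $O(|\parent(A)|^α n^β)$. Because $\mathcal{P}$ is balanced, $|\parent(A)|\in O(|A|)$, so $|S_A|+|S_B| = s \in O(k^α n^β)$. The shortcut clique therefore has $O(s^2)=O(k^{2α}n^{2β})$ edges.

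\textbf{Building the weights.} Each clique edge gets weight $d_G(s,s')$, which we query from $\mathcal{D}$ in $O(n^{α+β})$ time. This is valid because boundary vertices are never both non-boundary vertices of a leaf block, so \cref{lem:dist-oracle}(i) applies. The total cost is $O(k^{2α}n^{2β}\cdot n^{α+β}) = O(k^{2α}n^{α+3β})$, which is the last term of the bound.

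\textbf{Dijkstra runs.} We run Dijkstra from each of the $|A|\le k$ vertices of $A$. Using Fibonacci heaps, each run costs $O(|E(H)| + k\log k) = O(kd + k^{2α}n^{2β} + k\log k)$. Summing over the $k$ runs gives $O(k^2 d + k^{1+2α}n^{2β} + k^2\log k)$. The case $A=B$ is handled identically with $k=|A|$.

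The only step that needs care is the boundary-size bound. We must check that the boundary of a block lies in its parent's separator, which holds by definition since the separator is the union of the children's boundaries, and that balancedness with constant branching lets us replace $|\parent(A)|$ by $O(|A|)\subseteq O(k)$. Once that is in place, adding the construction cost and the Dijkstra cost gives the stated bound.
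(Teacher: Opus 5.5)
Your proposal is correct and follows essentially the same route as the paper: build $H_{(A,B)}$ from $G[A\cup B]$ plus shortcut edges whose weights come from $O(k^{2\alpha}n^{2\beta})$ distance-oracle queries, then run Dijkstra from every vertex of $A$, with the same cost accounting. Beyond the paper, you also justify several details it leaves implicit: a block's boundary lies in its parent's separator, which together with balance and constant branching gives the $O(k^\alpha n^\beta)$ boundary bound; the oracle's leaf-block exception cannot arise for boundary vertices; and $G[A\cup B]$ can be extracted in $O(kd)$ time via a degeneracy orientation.
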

\begin{proof}
  We rely on the pre-processing and distance oracle given in
  \cref{lem:dist-oracle} and construct the overlay graph $H_{(A,B)}$.
  For this, we construct the subgraph $G[A∪B]$ in $O(kd)$ time and
  then look up all distances between boundary vertices using the
  distance oracle.  Each distance lookup takes $O(n^{α+β})$ time.
  As the blocks have $O(k^αn^β)$ boundary vertices, this results in a
  total running time in
  \[
    O(kd + (k^{α}n^β)² n^{α+β}) =
    O(kd + k^{2α} n^{α+3β})
  \]
  for the construction of $H_{(A,B)}$.  By \cref{lem:aux_graph_dists},
  we can determine $\maxdist_G(A,B)$ by running Dijkstra's algorithm
  from every vertex $a$ of $A$ to find the vertex $b∈B$ most distant
  from $a$.  The overlay graph has $O(k)$ vertices and
  $O(kd + k^{2α}n^{2α})$ edges, so Dijkstra's algorithm runs in
  \[
    O(k \log k + kd + k^{2α}n^{2β})
  \]
  time and $O(k)$ such queries take
  \[
    O(k² \log k + k²d + k^{1+2α}n^{2β})
  \]
  time.  Together with the construction of $H_{(B,C)}$, this yields
  the running time claimed in the lemma statement.
\end{proof}


\subsection{Upper Bound} 
\label{sec:algo:upper_bound}%

Recall that we want to avoid computing the maxdist of pairs of blocks
that are too close to possibly contain diametric vertex pairs.  In
this section we give the upper bound used for this purpose and show
that it can be evaluated efficiently.



\begin{lemma}\label{lem:upper_bound:eval}
  Let $G$ be a graph with degeneracy $d$ and a balanced recursive
  partition $\mathcal{P}$ with $(α,β)$-small separators.  After an
  $O(n^{1+α+β}d)$ time pre-processing, we can for any given blocks
  $A,B∈P$ (including $A=B$) compute a value $\upper(A, B)$ in time
  $O(n^{α+β})$, such that
  \[
    \maxdist_G(A, B) \le \upper(A, B) \le \maxdist_G(A, B) + 2\diam_{G[A]} + 2\diam_{G[B]}.
  \]
\end{lemma}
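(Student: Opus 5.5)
Take the representative vertices $a \in A$ and $b \in B$ together with $e_A = \ecc_{G[A]}(a)$ and $e_B = \ecc_{G[B]}(b)$, which \cref{lem:dist-oracle}(ii) returns in $O(1)$ time. Then define
\[
  \upper(A,B) = \dist_G(a,b) + e_A + e_B .
\]
The preprocessing is exactly the one from \cref{lem:dist-oracle}, so it costs $O(n^{1+α+β}d)$. Evaluating $\upper(A,B)$ needs one oracle query for $\dist_G(a,b)$, which costs $O(n^{α+β})$, plus two $O(1)$ lookups.

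\textbf{Lower bound.} Take any $x \in A$ and $y \in B$. Distances in $G$ are at most distances in the induced subgraphs, so $\dist_G(x,a) \le \dist_{G[A]}(x,a) \le e_A$, and likewise $\dist_G(b,y) \le e_B$. The triangle inequality then gives
\[
  \dist_G(x,y) \le e_A + \dist_G(a,b) + e_B .
\]
Maximizing over $x$ and $y$ yields $\maxdist_G(A,B) \le \upper(A,B)$. This uses that blocks are connected, so $e_A$ and $e_B$ are finite.

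\textbf{Upper bound.} We have $e_A \le \diam_{G[A]}$, $e_B \le \diam_{G[B]}$, and $\dist_G(a,b) \le \maxdist_G(A,B)$. Hence $\upper(A,B) \le \maxdist_G(A,B) + \diam_{G[A]} + \diam_{G[B]}$, which is even stronger than the claimed bound with factors of $2$.

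\textbf{Main obstacle: the oracle exception.} \Cref{lem:dist-oracle}(i) gives no answer when $a$ and $b$ are both non-boundary vertices of the same leaf block. This can only happen if $A$ and $B$ share a leaf block, which for blocks of a recursive partition means one is an ancestor of the other (or $A = B$).

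In that case I avoid the oracle entirely. Without loss of generality $A \subseteq B$; I set
\[
  \upper(A,B) = 2e_B .
\]
Every $x, y \in B$ satisfy $\dist_G(x,y) \le \dist_G(x,b) + \dist_G(b,y) \le 2e_B$, so $\maxdist_G(A,B) \le 2e_B$. On the other side, $2e_B \le 2\diam_{G[B]}$, and this is at most the required right-hand side because $\maxdist_G(A,B) \ge 0$.

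Presumably this case is where the factor $2$ in the statement comes from. Deciding whether one block is an ancestor of the other takes $O(1)$ time after the LCA preprocessing already mentioned in \cref{lem:dist-oracle}, so the $O(n^{α+β})$ evaluation time holds in both cases.
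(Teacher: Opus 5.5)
Your proof is correct and follows the paper's approach: representative vertices $a,b$ with their stored eccentricities, one oracle query for $\dist_G(a,b)$, and the triangle inequality. The paper uses $\upper(A,B)=2\ecc_{G[A]}(a)+2\ecc_{G[B]}(b)+\dist_G(a,b)$ for $A\neq B$ and $2\ecc_{G[A]}(a)$ for $A=B$. Your version is slightly sharper, since you bound $\dist_G(x,a)\le\ecc_{G[A]}(a)$ directly instead of going through $\diam_{G[A]}\le 2\ecc_{G[A]}(a)$. It also explicitly handles strictly nested blocks ($A\subset B$, $A\neq B$), where the oracle exception could apply; the paper's proof does not treat that case, which is harmless there because the algorithm only ever compares disjoint or identical blocks.
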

\begin{proof}
  As a pre-processing, we rely on the distance oracle from
  \cref{lem:dist-oracle}, which is constructed in $O(n^{1+α+β}d)$
  time.  We consider blocks $A, B∈\mathcal{P}$.

  If $A = B$ we compute $\upper(A,B)$ by selecting an arbitrary
  vertex $a∈A$ and looking up the eccentricity $\ecc_{G[A]}(a)$ in
  $O(1)$ time.  Setting $\upper(A,B) = 2 \cdot \ecc_{G[A]}(a)$, we
  have
  $\maxdist_G(A,B) = \diam_{G[A]} \le \upper(A,B) \le 2 \diam_{G[A]}$.
  It remains to consider the case $A \neq B$.  We set
  \[
    \upper(A, B) = 2 \ecc_{G[A]}(a) + 2 \ecc_{G[B]}(b) + \dist_G(a,b),
  \]
  where $a∈A$ and $b∈B$ are chosen such that, using the distance
  oracle, we can look up their eccentricities in $O(1)$ time and
  compute $d_G(a,b)$ in $O(n^{α+β})$ time (see also
  \cref{lem:dist-oracle}).

  Regarding the claimed inequalities, we first show
  $\maxdist_G(A, B) \le \upper(A, B)$.  Consider maximally distant
  vertices $a^*∈A$ and $b^*∈B$, i.e.,
  $\dist_G(a^*,b^*) = \maxdist_G(A)$, and let $P$ be a shortest path
  between $a^*$ and $b^*$.  Then, any other path from $a^*$ to $b^*$
  is at least as long as $P$.  We thus consider the path $P'$ obtained
  by concatenating a shortest $a^*a$ path in $G[A]$, a shortest $ab$
  path and a shortest $bb^*$ path in $G[B]$.  Then we have
  \[
    |P| = \maxdist_G(A,B) = \dist_G(a^*,b^*) \le \dist_{G[A]}(a^*,a) + \dist_G(a,b) + \dist_{G[B]}(b, b^*) = |P'|.
  \]
  We have $\dist_G(a^*,a) \le \diam_{G[A]} \le 2 \ecc_{G[A]}(a)$.
  With an analogous estimate on $\dist_G(b^*, b)$, we thus obtain
  \[
    \maxdist_G(A,B) \le 2\ecc_{G[A]}(a) + \dist_G(a',b') + 2\ecc_{G[B]}(b) = \upper(A,B).
  \]

  For the other claimed inequality,
  $\upper(A, B) \le \maxdist_G(A, B) + 2\diam_{G[A]} + 2\diam_{G[B]}$,
  note that $\dist_G(a,b) \le \maxdist_G(A, B)$ and additionally that
  any eccentricity in a graph is at most the diameter.
\end{proof}

This means that the $\upper(A,B)$ overshoots $\maxdist_G(A,B)$ by at most a
constant multiple of the diameters of $G[A]$ and $G[B]$.  This helps
us to analyze the effectiveness of the pruning based on this upper
bound in the next section.

\subsection{Number of Candidate Pairs}%
\label{sec:algo:cand_num} 

Recall from the beginning of \cref{sec:algo}, that we consider a flat
partition $\mathcal{B} \subset \mathcal{P}$, i.e., a set of similarly
sized blocks of $\mathcal{P}$ that partitions $V$.  Moreover, we
denote $B∈\mathcal{B}$ as a candidate partner for $A$ under $\ell$, if
$\upper(A, B) \ge \ell$.  In the following, we give upper bounds for
the number of candidate pairs in $\mathcal{B}$ on graphs with local
diametric partners and, optionally, few corners.  We begin by
formalizing the intuition that candidate pairs are located far from
each other in the graph.

\begin{lemma}\label{lem:candidate_far}
  For $\ell ≥ \diam_G$, let $A$ and $B$ be two candidate pairs under
  $\ell$, i.e., $\upper(A, B) ≥ \ell$.  Then for every pair of
  vertices $a∈A$ and $b∈B$ we have
  \[\dist_G(a,b) ≥ \diam_G - 3 \diam_{G[A]} - 3 \diam_{G[B]}.\]
\end{lemma}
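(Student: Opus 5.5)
The plan is to chain the upper bound from \cref{lem:upper_bound:eval} with the triangle inequality inside the two blocks. Since $(A,B)$ is a candidate pair under $\ell \ge \diam_G$, we have $\diam_G \le \ell \le \upper(A,B)$. By \cref{lem:upper_bound:eval}, $\upper(A,B) \le \maxdist_G(A,B) + 2\diam_{G[A]} + 2\diam_{G[B]}$, and therefore
\[
  \maxdist_G(A,B) \ge \diam_G - 2\diam_{G[A]} - 2\diam_{G[B]}.
\]
This bound only concerns the \emph{farthest} pair of vertices; the next step transfers it to an arbitrary pair of vertices.

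Let $a^*\in A$ and $b^*\in B$ realize $\maxdist_G(A,B)$, and let $a\in A$ and $b\in B$ be arbitrary. Since $G[A]$ is connected (blocks are connected by definition of a recursive partition), we have $\dist_G(a^*,a) \le \dist_{G[A]}(a^*,a) \le \diam_{G[A]}$, and similarly $\dist_G(b,b^*) \le \diam_{G[B]}$. The triangle inequality then gives
\[
  \maxdist_G(A,B) = \dist_G(a^*,b^*) \le \dist_G(a,b) + \diam_{G[A]} + \diam_{G[B]}.
\]
Combining this with the previous display yields $\dist_G(a,b) \ge \diam_G - 3\diam_{G[A]} - 3\diam_{G[B]}$, which is the claim.

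I do not expect a real obstacle here. The only points to check are the following.
\begin{itemize}
  \item The case $A=B$ is covered. There \cref{lem:upper_bound:eval} gives $\upper(A,A) \le 2\diam_{G[A]}$, and since $\maxdist_G(A,A) = \diam_{G[A]}$ this is within the general upper bound. The same argument therefore goes through, and the conclusion is possibly vacuous.
  \item The induced-subgraph diameters legitimately upper bound distances in $G$, because distances in $G$ are at most distances in the connected induced subgraphs $G[A]$ and $G[B]$.
\end{itemize}
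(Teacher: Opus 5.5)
Your proof is correct and matches the paper's: combine $\ell \ge \diam_G$ with the upper bound from \cref{lem:upper_bound:eval} to obtain a pair $a^*, b^*$ at distance at least $\diam_G - 2\diam_{G[A]} - 2\diam_{G[B]}$, then move to arbitrary $a, b$ by the triangle inequality inside the two blocks. One small point: in your $A=B$ remark, $\maxdist_G(A,A) \le \diam_{G[A]}$ need not hold with equality, but this is harmless, since \cref{lem:upper_bound:eval} already covers $A=B$.
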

\begin{proof}
  Consider $A$ and $B$ with $\upper(A, B) ≥ \ell$.
  Using $\ell ≥ \diam_G$ and the maximum value of $\upper(A, B)$ from
  \cref{lem:upper_bound:eval}, this implies
  \[
    \maxdist(A, B) + 2 \diam_{G[A]} + 2 \diam_{G[B]} ≥ \upper(A, B) ≥ \ell ≥ \diam_G.
  \]
  This means there are vertices $a^*∈A$ and $b^*∈B$ with
  \[
    \dist_G(a^*, b^*) ≥ \diam_G - 2 \diam_{G[A]} - 2 \diam_{G[B]}.
  \]
  For a pair of vertices $a∈A$ and $b∈B$, we have
  \[
    \dist_G(a^*,b^*) ≤ \dist_G(a^*,a) + \dist_G(a,b) + \dist_G(b,b^*) ≤ \diam_{G[A]} + \dist_G(a,b) + \diam_{G[B]}
  \]
  Together, this implies
  $\dist_G(a,b) ≥ \diam_G - 3 \diam_{G[A]} - 3 \diam_{G[B]}$.
\end{proof}

Assume that $G$ has $d_{\mathrm{local}}$-local diametric partners
(\cref{item:prop-diam-partners}) and $\mathcal{P}$ is
$(α,β)$-well-spaced.  We show that among the similarly sized blocks
$\mathcal{B}$ each block has only few candidates.  The idea for this
is roughly as follows.  By \cref{lem:candidate_far}, vertices of a
candidate pair are almost diametrical.  However, by
\cref{item:prop-diam-partners} the almost diametrical partners of any
vertex are covered by few balls of bounded radius and by
\cref{item:prop-fragmentation} only few blocks with relevant diameters
intersect any such ball.  This gives a bound on the number of
candidates.

\begin{lemma}\label{lem:candidate_bound}
  Let $G$ be a graph with $d_{\mathrm{local}}$-local diametric
  partners and a $(α,β)$-well-spaced recursive partition
  $\mathcal{P}$.  Let further $\ell \ge \diam_G$, and let
  $\mathcal{B}⊂\mathcal{P}$ be a flat partition with similarly sized
  blocks of diameter in $Ω(d_{\mathrm{local}})$.  Then each block
  $A∈\mathcal{B}$ has only $O(1)$ candidates in $\mathcal{B}$.
\end{lemma}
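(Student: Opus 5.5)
Fix a block $A\in\mathcal{B}$ and a vertex $a\in A$. Since the blocks of $\mathcal{B}$ are similarly sized, \cref{item:prop-size-dependent-diam} gives a common scale $D$ with $\diam_{G[B]}\in\Theta(D)$ for every $B\in\mathcal{B}$. By assumption $D\in\Omega(d_{\mathrm{local}})$.

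\textbf{Step 1: candidates consist of near-diametric partners of $a$.} Let $B$ be a candidate partner of $A$ under $\ell\ge\diam_G$. By \cref{lem:candidate_far}, every $b\in B$ satisfies
\[
  \dist_G(a,b)\ge \diam_G-3\diam_{G[A]}-3\diam_{G[B]}\ge \diam_G - x,
\]
where $x=c\,D$ for a suitable constant $c$. Hence every vertex of every candidate block is an $x$-diametric partner of $a$.

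\textbf{Step 2: cover the union of candidate blocks by few balls.} By \cref{item:prop-diam-partners}, all $x$-diametric partners of $a$ are covered by $O(1)$ balls of radius $r\in O(x+d_{\mathrm{local}})$. Since $d_{\mathrm{local}}\in O(D)$, we have $r\in O(D)$. If necessary, enlarge the radius to $r=\Theta(D)$; balls only grow, so the covering still holds. Every candidate block $B$ is nonempty and contained in the union of these balls, so it intersects at least one of them.

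\textbf{Step 3: apply low fragmentation.} The candidate blocks have diameter $\Theta(D)=\Theta(r)$. By \cref{item:prop-fragmentation}, each ball of radius $r$ intersects only $O(1)$ blocks of diameter $\Theta(r)$. Summing over the $O(1)$ balls yields $O(1)$ candidate blocks for $A$.

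\textbf{Main obstacle.} The delicate step is the bookkeeping of the asymptotic constants. Step 3 uses the fragmentation property with ``$\Theta(r)$'' applied to radii enlarged by constant factors, and these constants must be universal rather than instance-dependent; this is the issue addressed in Appendix~\ref{sec:asymptotics_of_properties}. The fix is to choose the ball radius as exactly some fixed constant multiple of $D$, so the hidden constants stay uniform. A second subtlety is that the diameters used are those of the induced subgraphs $G[B]$, and \cref{item:prop-size-dependent-diam} and \cref{item:prop-fragmentation} must refer to the same notion of block diameter. If the fragmentation property refers to diameters in $G$ instead, note that $\diam_G(B)\le\diam_{G[B]}$; the reverse comparison up to a constant is not automatic, so I would interpret ``diameter of a block'' consistently as $\diam_{G[B]}$ throughout.
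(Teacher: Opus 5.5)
Your proposal is correct and follows essentially the same route as the paper: \cref{lem:candidate_far} places every vertex of a candidate block among the $x$-diametric partners of a fixed $a\in A$ with $x\in\Theta(\diam_{G[A]})$. \cref{item:prop-diam-partners} then covers these by $O(1)$ balls of radius $O(x+d_{\mathrm{local}})=O(x)$, and \cref{item:prop-fragmentation} bounds the number of blocks of diameter $\Theta(\diam_{G[A]})$ meeting each ball. The paper leaves implicit two points you make explicit: enlarging the radius to $\Theta(D)$, and reading ``$\Theta(r)$'' in the fragmentation property with uniform constants.
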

\begin{proof}
  Let $A∈\mathcal{B}$ be a block that has a candidate $B∈\mathcal{B}$
  under $\ell$, i.e., $\upper(A, B) \ge \ell$.  Then, for any pair of
  vertices $a∈A$ and $b∈B$ we have
  $\dist_G(a, b) ≥ \diam_G - 3 \diam_{G[A]} - 3 \diam_{G[B]}$ by
  \cref{lem:candidate_far}.  All blocks of $\mathcal{B}$ have roughly
  the same size and by \cref{item:prop-size-dependent-diam} also
  roughly the same diameters, so this means
  $\dist_G(a,b) ≥ \diam_G - x$ for $x∈Θ(\diam_{G[A]})$.  Recall that
  we call vertices satisfying the above inequality $x$-diametric.  We
  have thus shown that the vertices of any candidate block $B$ of $A$
  are among the $x$-diametric partners of $a∈A$.

  \cref{item:prop-diam-partners} guarantees that all $x$-diametric
  partners of $a$ lie in $O(1)$ balls of radius
  $O(x + d_{\mathrm{local}}) = O(x)$.  Moreover, due to
  \cref{item:prop-fragmentation}, only a constant number of blocks
  with diameter $Θ(\diam_{G[A]}) = Θ(\diam_{G[B]})$ intersect any ball
  of radius $O(x) = O(\diam_{G[A]})$, i.e., the $x$-diametric partners
  of $a$ lie in a constant number of blocks with diameter
  $Θ(\diam_{G[A]})$.  Thus, $A$ has only a constant number of
  candidates in $\mathcal{B}$.
\end{proof}

Now assume that $G$ also has $d_{\textrm{corner}}$-few corners
(\cref{item:prop-few-corners}).  We show that in a flat partition
with blocks of sufficiently large diameter only few blocks have
candidates.

\begin{lemma}\label{lem:few_corners_few_cand}
  Let $G$ be a graph with $d_{\mathrm{corner}}$-few corners, let
  $\mathcal{P}$ be a $(α,β)$-well-spaced recursive partition, let
  $\ell \ge \diam_G$, and let $\mathcal{B}⊂\mathcal{P}$ be a flat
  partition with blocks of diameter in $Ω(d_{\mathrm{corner}})$.
  Then, only $O(1)$ blocks of $\mathcal{B}$ have candidates in
  $\mathcal{B}$.
\end{lemma}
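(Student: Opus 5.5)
The plan mirrors the proof of \cref{lem:candidate_bound}, but applies \cref{item:prop-few-corners} in place of \cref{item:prop-diam-partners}. Suppose $A∈\mathcal{B}$ has some candidate $B∈\mathcal{B}$ under $\ell$. By \cref{lem:candidate_far}, every pair $a∈A$, $b∈B$ satisfies
\[
  \dist_G(a,b) ≥ \diam_G - 3\diam_{G[A]} - 3\diam_{G[B]}.
\]
The blocks of $\mathcal{B}$ are similarly sized, so \cref{item:prop-size-dependent-diam} gives $\diam_{G[A]}, \diam_{G[B]} ∈ Θ(D)$ for a common value $D$, namely the typical diameter of blocks in $\mathcal{B}$. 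Hence every vertex of $A$ has an $x$-diametric partner, any vertex of $B$ will do, for some $x∈Θ(D)$. The constant hidden in $x$ is uniform over all blocks of $\mathcal{B}$, so a single $x$ works for every block that has a candidate.

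Next I apply \cref{item:prop-few-corners} with this $x$. All vertices that have an $x$-diametric partner lie in $O(1)$ balls of radius $O(x + d_{\mathrm{corner}})$. By assumption $D ∈ Ω(d_{\mathrm{corner}})$, so this radius is $O(D)$. The key consequence is that every block $A∈\mathcal{B}$ with a candidate is entirely contained in the union of these $O(1)$ balls. In particular, each such block intersects at least one of them.

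Finally, I invoke \cref{item:prop-fragmentation}. A ball of radius $r∈Θ(D)$ intersects only $O(1)$ blocks of diameter $Θ(r) = Θ(D)$, and all blocks of $\mathcal{B}$ have diameter $Θ(D)$. Therefore each of the $O(1)$ balls meets only $O(1)$ blocks of $\mathcal{B}$. Every block with a candidate is one of these, so only $O(1)$ blocks of $\mathcal{B}$ have candidates.

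The main obstacle is matching radii to \cref{item:prop-fragmentation}, which speaks of balls of radius $r$ and blocks of diameter $Θ(r)$. Our balls have radius only $O(D)$, possibly much smaller than $D$. I handle this by enlarging each ball to radius exactly $cD$ for a suitable constant $c$. Enlarging preserves coverage, and it makes the block diameters $Θ(r)$ as the property requires. The same adjustment is implicitly used in \cref{lem:candidate_bound}.
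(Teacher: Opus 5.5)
Your proof is correct and takes essentially the same route as the paper's: \cref{lem:candidate_far} gives each block with a candidate an $x$-diametric partner with $x\in\Theta(\diam_{G[A]})$, then \cref{item:prop-few-corners} covers all such vertices by $O(1)$ balls of radius $O(x)$, and \cref{item:prop-fragmentation} bounds the number of blocks meeting each ball. Your choice of a single uniform $x=\Theta(D)$ for all blocks, and your enlargement of the balls to radius $\Theta(D)$ so that the fragmentation property applies directly, make precise two steps the paper leaves implicit.
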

\begin{proof}
  Let $A∈\mathcal{B}$ be a block that has a candidate $B$ in
  $\mathcal{B}$, i.e., $\upper(A,B) \ge \ell$.  By
  \cref{lem:candidate_far}, for any vertices $a∈A$ and $b∈B$ we have
  $\dist_G(a,b) \ge \diam_G - 3\diam_{G[A]} - 3\diam_{G[B]}$.

  This means that $b$ is in the
  $x = (3\diam_{G[A]} + 3\diam_{G[B]})$-diametric set of $a$.  Note
  that $\diam_{G[A]}∈Ω(d_{\mathrm{corner}})$ and thus
  $x∈Ω(d_{\mathrm{corner}})$.  We have thus shown that for some
  $x∈Ω(d_{\mathrm{corner}})$ every block $A$ with a candidate in
  $\mathcal{B}$ has at least one $x$-diametric partner.  By
  \cref{item:prop-few-corners}, the set of vertices that have
  $x$-diametric partners can be covered by $O(1)$ balls of radius
  $O(x + d_{\mathrm{corner}}) = O(x)$ in $G$.  By
  \cref{item:prop-fragmentation}, any such ball is intersected by only
  $O(1)$ blocks with diameter in $Θ(x) = Θ(\diam_{G[A]})$.  This
  implies that only $O(1)$ blocks of $\mathcal{B}$ contain vertices
  with $x$-diametric partners.
\end{proof}

Assuming that the diameter of blocks in $\mathcal{B}_i$ is in
$Ω(\min\{d_{\mathrm{local}}, d_{\mathrm{corner}}\})$ this improves the
bound on the number of candidate pairs given by
\cref{lem:candidate_bound}.  We get that only $O(1)$ blocks of
$\mathcal{B}_i$ have candidates and each only has $O(1)$ candidates.
Thus, the total number of candidate pairs is also in $O(1)$.

\subsection{Efficient Candidate Enumeration}%
\label{sec:algo:candidate_identification} 

We have shown that within a flat partition $\mathcal{B}$ with blocks
of sufficiently large diameter, every block has only $O(1)$ candidates
(see \cref{sec:algo:cand_num}).  Additionally, for each pair of blocks
we can quickly test whether they form a candidate pair (see
\cref{lem:upper_bound:eval}).  However, if the blocks of $\mathcal{B}$
consist of $Θ(k)$ vertices, there are $Θ(n/k)$ many blocks and thus
testing each of the $O(n²/k²)$ pairs is pretty expensive, especially
for small $k$.

To improve upon this, the following observation is helpful.  Consider
two blocks $A$ and $B$ that do not form a candidate pair, i.e.,
$\upper(A,B) < \ell$.  Then we have $\maxdist_G(A, B) < \ell$ and we
can ignore all vertices $a∈A$ and $b∈B$ on the search for the
diameter.  More generally, let $A' \subset A$ and $B' \subset B$ be
descendants of $A$ and $B$, respectively.  If $A$ and $B$ do not form
a candidate pair, then $\maxdist_G(A, B) < \ell$ and hence also
$\maxdist_G(A', B') < \ell$.  In this case we consider $(A',B')$ to
not be a candidate pair regardless of the actual value of
$\upper(A',B')$.

We thus go through the recursive partition in a top-down fashion,
i.e., instead of directly considering pairs of blocks in
$\mathcal{B}$, we first evaluate the upper bounds for their ancestors
in $\mathcal{P}$, starting at the root.  This way we can already
exclude pairs of blocks in $\mathcal{B}$ that have ancestors that do
not form candidate pairs.

To make the approach more precise, we maintain an \emph{intermediate
  flat partition} $\mathcal{B}_i$.  Initially, $\mathcal{B}_0$
consists of the root block of $\mathcal{P}$.  Afterwards, in step
$i ≥ 1$, we obtain $\mathcal{B}_{i}$ from $\mathcal{B}_{i-1}$ by
replacing the largest block of $\mathcal{B}_{i-1}$ with its children
in $\mathcal{P}$.  At each step, we keep track of all candidate pairs
in $\mathcal{B}_i$.  This means that when replacing a block
$B∈\mathcal{B}_{i-1}$ with its children
$B'_1, \dots, B'_b ∈ \mathcal{B}_i$, we compute the upper bound
between each child and each candidate for $B$ in $\mathcal{B}_{i-1}$.
We stop this process with a \emph{final} flat partition
$\mathcal{B}_j = \mathcal{B}$.  In \cref{sec:algo:diameter} we discuss
how $j$ is chosen in order to minimize the running time.  Before that,
we summarize important properties of the intermediate flat
partitions.

Clearly, the balance and bounded branching factor of $\mathcal{P}$
implies that after each step $i$, the blocks $\mathcal{B}_i$ are
similarly sized, i.e., for any $A, B∈\mathcal{B}_i$ we have
$|A| ∈ Θ(|B|)$.  With \cref{item:prop-size-dependent-diam}
(size-dependent diameters), this implies
$\diam_{G[A]}∈Θ(\diam_{G[B]})$.  As $\mathcal{B}_i$ forms a partition
of $V$, it consists of $Θ(n / |A|)$ blocks for $A∈\mathcal{B}_i$.  By
\cref{lem:candidate_bound}, each block $A∈\mathcal{B}_i$ has only
$O(1)$ candidates in $\mathcal{B}_i$ under lower bound
$\ell \ge \diam_G$, provided that $\diam_{G[A]}∈Ω(d_{\mathrm{local}})$
(see \cref{item:prop-diam-partners}).  This allows us to bound the
running time needed to compute the candidate pairs in $B_j$.

\begin{lemma}\label{lem:final_candidate_set}
  Assume that the flat partition $\mathcal{B}_j$ has blocks with
  diameter in $Ω(d_{\mathrm{local}})$ and $\ell ≥ \diam_G$.  Then, in
  $O(n^{1+α+β})$ time, we can compute
  $\mathcal{B}_0, \dots, \mathcal{B}_j$ and enumerate all candidate
  pairs in $\mathcal{B}_i$ for all $i∈[0,j]$.
\end{lemma}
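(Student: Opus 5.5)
The plan is to charge the work to the refinement steps. Each step removes one block from $\mathcal{B}_{i-1}$ and inserts its children, so the number of steps equals the number of internal nodes of $\mathcal{P}$ that get refined. All these nodes are ancestors of blocks in $\mathcal{B}_j$. The tree has constant branching factor, every internal node has at least two children, and the leaves of the refined subtree are the $|\mathcal{B}_j| \le n$ final blocks. Hence the total number of steps $j$ is $O(n)$. To pick the largest block in each step, we keep a priority queue keyed by block size. This costs $O(n\log n)$ overall, or $O(n)$ if we refine level by level in a balanced fashion. Either way the cost is dominated by $O(n^{1+\alpha+\beta})$ whenever $\alpha+\beta>0$; otherwise we accept the logarithmic factor, which is hidden in $\tilde{O}$ anyway.

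Next we bound the work per step. We maintain, for each block of $\mathcal{B}_{i-1}$, the list of its candidates. When $B$ is replaced by its children $B'_1,\dots,B'_b$, we evaluate $\upper(B'_s,C)$ for every child $B'_s$ and every candidate $C$ of $B$. Here $C$ may be $B$ itself, in which case we also test the pairs $B'_s,B'_t$ among the children. A pair $(B'_s,C)$ in which $C$ was not a candidate of $B$ cannot be a candidate in $\mathcal{B}_i$: by the hereditary argument above, $\maxdist_G(B'_s,C)\le\maxdist_G(B,C)\le\upper(B,C)<\ell$. So pairs that are not tested are correctly discarded. Correctness is then a routine induction on $i$: the maintained lists are exactly the (hereditary) candidate pairs of $\mathcal{B}_i$.

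The main obstacle is to show that every block of every intermediate $\mathcal{B}_i$ has only $O(1)$ candidates, not just the blocks of the final partition. The intended argument is as follows. The blocks in $\mathcal{B}_i$ have sizes within a constant factor of each other, by balance and because we always split the largest block. They are also at least as large as blocks of $\mathcal{B}_j$, up to a constant. By size-dependent diameters (\cref{item:prop-size-dependent-diam}), their diameters are $\Omega$ of the diameters in $\mathcal{B}_j$, which are $\Omega(d_{\mathrm{local}})$ by assumption. Hence \cref{lem:candidate_bound} applies to each $\mathcal{B}_i$ with $\ell\ge\diam_G$. The one subtlety is that \cref{lem:candidate_bound} is phrased for partitions of similarly sized blocks. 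Since $B$ and its candidates lie in $\mathcal{B}_{i-1}$, both $B$ and its children are comparable to every block present at step $i$, so this hypothesis holds. Consequently, each step performs $O(1)\cdot b=O(1)$ evaluations of $\upper$, each costing $O(n^{\alpha+\beta})$ by \cref{lem:upper_bound:eval}.

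Combining the two parts gives $O(n)$ steps at $O(n^{\alpha+\beta})$ each, so the total is $O(n^{1+\alpha+\beta})$. The candidate lists are updated in $O(1)$ time per step, and the enumeration of the candidates of every $\mathcal{B}_i$ is exactly the content of these lists. The $O(n^{1+\alpha+\beta}d)$ preprocessing from \cref{lem:dist-oracle} is assumed to have been done beforehand.
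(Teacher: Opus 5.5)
Your proposal is correct and follows essentially the same argument as the paper. By balance and size-dependent diameters, every intermediate partition $\mathcal{B}_i$ still has blocks of diameter $\Omega(d_{\mathrm{local}})$, so \cref{lem:candidate_bound} gives $O(1)$ candidates for each split block, hence $O(1)$ evaluations of $\upper$ at $O(n^{\alpha+\beta})$ each per refinement step, summed over the refined blocks. The differences are minor: the paper bounds the number of refined blocks more tightly by $O(n/k)$ rather than your $O(n)$ (either suffices), and it does not spell out the bookkeeping for selecting the largest block or for self-pairs, which you treat explicitly.
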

\begin{proof}
  At each step $i\le j$ a block $B$ from an intermediate flat
  partition $\mathcal{B}_{i-1}$ is replaced with its children and
  upper bounds are evaluated in order to maintain the set of candidate
  pairs.  However by \cref{item:prop-size-dependent-diam}, for $i<j$
  the diameters of blocks in $B_i$ are asymptotically at least as
  large as the diameters of blocks in $B_j$ and thus by
  \cref{lem:candidate_bound} the replaced block $B$ only has $O(1)$
  candidate blocks in $B_{i-1}$.  As each block of a well-spaced
  recursive partition only has $O(1)$ children, this means that the
  splitting step leads to $O(1)$ upper bound evaluations, which take
  $O(n^{α+β})$ time (see \cref{lem:upper_bound:eval}).

  It thus remains to bound the number of blocks across all considered
  flat partitions $\mathcal{B}_0, \dots, \mathcal{B}_j$.  Assume the
  blocks in $\mathcal{B}_j$ have size $Θ(k)$.  Then it consists of
  $O(\frac n k)$ blocks.  All other blocks that are part of an
  intermediate flat partition $\mathcal{B}_i$ with $i<j$ are
  ancestors of a block of $\mathcal{B}_j$.  Further, every block in
  $\mathcal{P}$ has at least $2$ children.  Thus, the total number of
  unique blocks in any of the considered partitions is also in
  $O(\frac n k)$.  This means that the total running time for all
  upper bound evaluations is in $O\parens*{\frac n k \cdot n^{α+β}}$.
  In particular this is dominated by the running time $O(n^{1+α+β})$
  for the construction the distance oracle (see
  \cref{lem:dist-oracle,lem:upper_bound:eval}).
\end{proof}

\subsection{Putting Everything Together}%
\label{sec:algo:diameter} 

In this section we combine the components laid out above and complete
the algorithm.  To recap the different steps, the fundamental approach
expects a graph $G$, a recursive partition $\mathcal{P}$ and a bound
$\ell$ and proceeds as follows.  First, construct the distance oracle,
then enumerate all candidate pairs in intermediate flat partitions
$\mathcal{B}_i$ and, for the final flat partition $\mathcal{B}_j$,
compute the maxdist of each candidate pair.  This then yields the
diameter of $G$ or, if $\mathcal{B}_j$ contains no candidate pair,
that $\diam_G < \ell$.

To analyze the algorithm, we assume that $G$ has $n$ vertices,
degeneracy $d$, satisfies \cref{item:prop-diam-partners}, and that
$\mathcal{P}$ is $(α,β)$-well-spaced.  We first consider the setting
where $\mathcal{B}_j$ is chosen such that the blocks have some
specified size $Θ(k)$.  By \cref{lem:dist-oracle} the time needed to
construct the distance oracle is in
\begin{equation}
  \label{eq:running_time_phase1}
  O(n^{1+α+β} \cdot d).
\end{equation}
Assuming the blocks in $\mathcal{B}_j$ have diameter in
$Ω(d_{\mathrm{local}})$ and $\ell ≥ \diam_G$, enumerating all
candidates in $\mathcal{B}_j$ causes no asymptotic overhead
(\cref{lem:final_candidate_set}).  Next, by \cref{lem:overlay_graph}
the running time for one maxdist computation is in
\begin{equation}
  \label{eq:running_time_overlay}
  \tilde{O}(k²d + k^{1+2α}n^{2β} + k^{2α}n^{α+3β}).
\end{equation}
Alternatively, by \cref{lem:maxdist_direct_oracle} the maxdist of a
pair of distinct blocks can also be computed in time
\begin{equation}
  \label{eq:running_time_direct}
    \tilde{O}(k²n^{α+β}).
\end{equation}
To guarantee that all candidate pairs consist of distinct blocks, it
suffices to assume $k∈o(n)$.  Then, the diameters of the blocks in
$\mathcal{B}_j$ are in $o(\diam_G)$ and thus the upper bound of any
block with itself is in $o(\diam_G)$.

As $\mathcal{B}_j$ forms a partition of the vertices, it consists of
$O(n/k)$ blocks.  Recall that we assume that $G$ has
$d_{\mathrm{local}}$-local diametric partners and that the blocks in
$\mathcal{B}_j$ have diameter in $Ω(d_{\mathrm{local}})$.
Consequently, by \cref{lem:candidate_bound}, each block has $O(1)$
candidate partners under $\ell ≥ \diam_G$.  Thus, there are $O(n/k)$
candidate pairs.  If $G$ has $d_{\mathrm{corner}}$-few corners and the
blocks of $\mathcal{B}_j$ have diameters in $Ω(d_{\mathrm{corner}})$,
then only $O(1)$ blocks have candidate partners, by
\cref{lem:few_corners_few_cand}.  This means that there are only
$O(1)$ candidate pairs.  The total running time is thus given by the
preprocessing (\cref{eq:running_time_phase1}) and the maxdist
computations for each candidate pair (\cref{eq:running_time_overlay},
respectively \cref{eq:running_time_direct}).  We summarize the
algorithm as follows.

\begin{lemma}[Size-based algorithm]\label{lem:algo_size_based}
  For $G$ and $\mathcal{P}$ as above, $\ell ∈ [n],$ and $k ∈ o(n)$,
  there is an algorithm $\mathcal{A}(G, \mathcal{P}, \ell, k)$ that
  decides how $\diam_G$ compares to $\ell$.  If $\ell \ge \diam_G$ and
  $\mathcal{P}$ admits a flat partition into blocks of size $Θ(k)$
  and diameter $Ω(d_{\mathrm{local}})$ the running time of
  $\mathcal{A}(G, \mathcal{P}, \ell, k)$ is in
  \begin{align*}
    \tilde{O}\parens*{
    n^{1+α+β} \cdot d +
    \min\braces*{
    nkd +
    k^{2α}n^{1+2β} +
    k^{2α-1}n^{1+α+3β},
    \;
    kn^{1+α+β}
    }
    }.
  \end{align*}
  If additionally $G$ has $d_{\mathrm{corner}}$-few corners and the
  diameter of the blocks is in $Ω(d_{\mathrm{corner}})$ the running
  time is in
  \begin{equation*}
    \tilde{O}\parens*{
        n^{1+α+β}\cdot d +
        \min\braces*{
        k²d +
        k^{1+2α}n^{2β}
        + k^{2α}n^{α+3β},\;
        k²n^{α+β}
        }
    }.
  \end{equation*}
\end{lemma}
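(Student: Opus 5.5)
The algorithm $\mathcal{A}(G,\mathcal{P},\ell,k)$ runs four steps in order. First it builds the distance oracle of \cref{lem:dist-oracle}. Second it refines $\mathcal{B}_0,\mathcal{B}_1,\dots$ as in \cref{sec:algo:candidate_identification}, always splitting the largest block, until every block of the current partition $\mathcal{B}_j$ has size at most $k$. Third it computes $\maxdist_G(A,B)$ for every candidate pair in $\mathcal{B}_j$. Fourth it compares the maximum $M$ of these values with $\ell$.

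\emph{Correctness.} The decision is correct regardless of running time, because pruning only discards pairs whose ancestors satisfy $\upper<\ell$, and hence whose maxdist is below $\ell$. If no candidate pair survives, every pair of vertices has distance below $\ell$, so $\diam_G<\ell$. Otherwise let $M$ be the largest computed maxdist. If $M\ge\ell$, then $M=\diam_G$, since every pair at distance $\ge\ell$ lies in a surviving candidate pair. If $M<\ell$, then $\diam_G<\ell$. So the algorithm reports whether $\diam_G<\ell$, $=\ell$, or $>\ell$.

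\emph{Guarding the work.} To keep the running time controlled even when $\ell<\diam_G$, I would add a budget. If some block acquires more candidates than the constant of \cref{lem:candidate_bound}, the algorithm aborts. The same holds if the total number of candidate pairs exceeds $O(n/k)$, or $O(1)$ under few corners. An abort certifies $\ell<\diam_G$, by the contrapositive of \cref{lem:candidate_bound,lem:few_corners_few_cand}. This does not affect the running-time claim, which only concerns $\ell\ge\diam_G$. The guard is only needed for use inside the later binary search.

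\emph{Running time for $\ell\ge\diam_G$.} The preprocessing costs $O(n^{1+α+β}d)$ by \cref{lem:dist-oracle}. This also covers the preprocessing of \cref{lem:upper_bound:eval,lem:overlay_graph,lem:maxdist_direct_oracle}. Enumerating the candidate pairs costs $O(n^{1+α+β})$ by \cref{lem:final_candidate_set}, using the assumption that $\mathcal{P}$ admits a flat partition with blocks of size $Θ(k)$ and diameter $Ω(d_{\mathrm{local}})$. Here I need to check one point: the partition $\mathcal{B}_j$ produced by the greedy refinement has block sizes $Θ(k)$. This follows from balance and constant branching, since a block is split only while it has size larger than $k$, and its children then have size $Ω(k)$. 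By \cref{item:prop-size-dependent-diam}, the blocks of $\mathcal{B}_j$ then have diameter $Ω(d_{\mathrm{local}})$.

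By \cref{lem:candidate_bound} there are $O(n/k)$ candidate pairs. Since $k\in o(n)$, self-pairs have $\upper\le 2\diam_{G[A]}\in o(\diam_G)$. This uses size-dependent diameters comparing $A$ with the root block. So self-pairs are never candidates, and both maxdist methods apply. For each pair, the algorithm takes the cheaper of the two methods. \cref{lem:overlay_graph} with $|A\cup B|=Θ(k)$ gives $\tilde{O}(k^2d+k^{1+2α}n^{2β}+k^{2α}n^{α+3β})$, and \cref{lem:maxdist_direct_oracle} gives $O(k^2n^{α+β})$. The exponents are known in advance, so the cheaper method can be chosen up front. Multiplying by $O(n/k)$ pairs yields exactly the first bound.

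Under few corners, with block diameter $Ω(d_{\mathrm{corner}})$, \cref{lem:few_corners_few_cand} together with \cref{lem:candidate_bound} gives $O(1)$ candidate pairs. Multiplying the per-pair cost by $O(1)$ yields the second bound.

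The main subtlety is the step above that matches the greedy $\mathcal{B}_j$ to a flat partition with $Θ(k)$-sized blocks. It is what lets the diameter hypotheses transfer to the intermediate and final partitions.
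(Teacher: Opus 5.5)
Your proposal is correct and follows essentially the paper's argument. You build the oracle of \cref{lem:dist-oracle}, enumerate candidates top-down at negligible cost via \cref{lem:final_candidate_set}, rule out self-pairs using $k\in o(n)$, bound the candidate pairs by $O(n/k)$ (resp.\ $O(1)$) via \cref{lem:candidate_bound,lem:few_corners_few_cand}, and charge each pair the cheaper of \cref{lem:overlay_graph} and \cref{lem:maxdist_direct_oracle}; your check that the greedy $\mathcal{B}_j$ has $\Theta(k)$-sized blocks of diameter $\Omega(d_{\mathrm{local}})$ makes explicit a step the paper leaves implicit.

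The abort guard is unnecessary, since the paper handles $\ell<\diam_G$ purely through the time limit in \cref{lem:algo_expexpbin}. It also needs the constants hidden in \cref{lem:candidate_bound,lem:few_corners_few_cand}: with a too-small threshold the algorithm could wrongly report $\ell<\diam_G$ and lose the unconditional correctness you established.
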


Assume that for $\ell ≥ n$ the above algorithm terminates $T(n,k)$
time steps.  Then, by executing the algorithm for $T(n,k)$ steps, one
can use a binary search to determine $\diam_G$.  In fact, this
strategy can be extended to also find an optimal value for the
parameter $k$, such that the final algorithm only depends on $G$ and
$\mathcal{P}$.  With the following lemma we describe this strategy in
a generic way.

\begin{lemma}\label{lem:algo_expexpbin}
  Let $\mathcal{A}(I, \ell, k)$ be an algorithm that takes as input
  some instance $I$ 
  along with two integer parameters $\ell, k ∈ [n]$ and that decides
  how $\ell$ compares to a numerical quantity $D(I)$ with $D(I)∈[n]$.
  Assume that
  \begin{alphaenumerate}
    \item for $\ell ≥ D(I)$,
    the algorithm runs in $T(n,k)$ time, and further that
    \item there is a value $k^* ∈ [n]$ such that every $k∈Θ(k^*)$
    minimizes $T(n,k)$ for every $n$ up to constant factors.
  \end{alphaenumerate}
  Then, there is an algorithm $A'(I)$ that taking an instance $I$ that
  computes $D(I)$ in time $O(T(n,k^*) \cdot \log^2 n)$.
\end{lemma}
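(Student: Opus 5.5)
We combine an exponential search over $k$ with a binary search over $\ell$, using time budgets to decide outcomes even when the runtime guarantee (a) does not apply. The basic primitive is a \emph{budgeted call}: run $\mathcal{A}(I,\ell,k)$ for at most $c\cdot T(n,k)$ steps, for a suitable constant $c$. If it finishes, we obtain the true comparison of $\ell$ with $D(I)$, since the algorithm is always correct when it terminates. If it does not finish, then by (a) we must have $\ell < D(I)$. So every budgeted call returns a correct answer (``$\ell<D$'', ``$\ell=D$'' or ``$\ell>D$''). We need $T(n,k)$ to be computable, or at least an upper bound on it known up to constant factors, in order to set the budget; I would assume this, as in the application, where $T$ is an explicit formula.

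For a fixed $k$, a binary search over $\ell\in[n]$ with budgeted calls finds $D(I)$ exactly using $O(\log n)$ calls of cost $O(T(n,k))$ each. By correctness of budgeted calls the search never goes wrong, for any $k$. The difficulty is that we do not know $k^*$.

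To handle this, I would run the procedure for $k=2^0,2^1,\dots,2^{\lceil\log n\rceil}$. Some power $k'=2^i$ lies within a factor $2$ of $k^*$, so $k'\in\Theta(k^*)$, and by (b) $T(n,k')\in O(T(n,k^*))$. Running all $O(\log n)$ binary searches naively, however, could cost $\sum_k T(n,k)\log n$, where the non-optimal $k$ might be expensive. I would avoid this in one of two ways. The first is to compute $T(n,2^i)$ for every $i$ directly, pick the minimizing $k'$, and run only one binary search. This costs $O(T(n,k^*)\log n)$, which is already within the claimed $\log^2 n$ bound. The second, if $T$ is not directly evaluable, is a doubling-budget scheme. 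In round $t$, with budget $2^t$, we run every candidate $k$ whose $T(n,k)\le 2^t$, giving each call a budget of $2^t$. A round succeeds once $2^t\ge c\,T(n,k')$. Each round costs $O(\log n\cdot\log n\cdot 2^t)$, and the geometric sum is dominated by the last round, which gives $O(T(n,k^*)\log^2 n)$. This is where the $\log^2 n$ in the statement comes from: one $\log n$ factor for the choices of $k$ and one for the binary search.

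The main subtlety is ensuring correctness with non-optimal $k$ and wrong guesses. The only way an aborted run is interpreted is as ``$\ell<D$'', and this inference is sound by (a) for whatever budget at least $T(n,k)$ we used. The only remaining risk is a budget smaller than $T(n,k)$. In the doubling scheme we therefore only make calls with $k$ for which $2^t\ge cT(n,k)$, so inferences remain sound and the first round in which some $k$ qualifies already returns the exact $D(I)$.
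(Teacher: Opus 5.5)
Both of your schemes assume the algorithm can evaluate $T(n,k)$, including its constant factor. Every budget ``$c\cdot T(n,k)$'' depends on it. So does the filter in your doubling scheme (``run every $k$ with $T(n,k)\le 2^t$''), which means that scheme is not really a fallback for non-evaluable $T$. The lemma gives you no such access. In the intended application it is also not available. The bound in \cref{lem:algo_size_based} hides unknown constants and polylogarithmic factors, and it depends on $\alpha$, $\beta$ and $d_{\mathrm{local}}$. It is valid only for those $k$ for which blocks of size $\Theta(k)$ have diameter $\Omega(d_{\mathrm{local}})$, and the algorithm cannot check this. As written, you prove a weaker lemma with an extra hypothesis. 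The underlying reason is your insistence that every timeout inference be sound (``the only remaining risk is a budget smaller than $T(n,k)$'').

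The missing idea is that soundness of individual inferences is unnecessary, because the binary search certifies its own output. Accept a value only when a call actually terminates and reports $\ell = D(I)$. Since $\mathcal{A}$ is correct whenever it terminates, a too-small budget can only cause a wrong ``$\ell<D(I)$'' step. After such a step $D(I)$ has left the search interval, so the search ends after $O(\log n)$ probes without an ``$=$'' answer. That is a detectable failure, never a wrong output.

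This lets you search over the budget without knowing $T$ at all. For $T=1,2,4,\dots$, and inside that for $k=1,2,4,\dots$ up to $n$, run the binary search on $\ell$ with every call capped at $T$ steps, and stop at the first success. Each round costs $O(T\log^2 n)$, however expensive the non-optimal $k$ would otherwise be. Once $T\ge T(n,k')$ for a power of two $k'\in\Theta(k^*)$, every call with $\ell\ge D(I)$ terminates. All timeout inferences are then correct, so that round succeeds. The geometric sum over $T$, together with (b), gives $O(T(n,k^*)\log^2 n)$. This is the paper's proof.
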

\begin{proof}
  The core idea is that if $k^*$ and $T(n,k^*)$ are known, then a
  binary search on $\ell$ can be used to compute $D(I)$ using
  $O(\log n)$ executions of $\mathcal{A}(I, \ell, k^*)$.  To see this,
  note that by assumption $D(s)$ takes a value between $1$ and $n$ and
  if $\mathcal{A}(I, \ell, k^*)$ terminates, it decides whether
  $\ell < D(I)$, $\ell = D(I)$, or $\ell > D(I)$.  Otherwise, if
  $\mathcal{A}(I, \ell, k^*)$ does not terminate within $T(n,k^*)$
  time, then this implies $\ell < D(I)$ and $\mathcal{A}$ can be
  halted.

  It remains to find $k^*$ and $T(n, k^*)$.  For this, we use an
  exponential search, i.e., we find (up to a constant factor) the
  smallest time limit $T$ and (up to a constant factor) the smallest
  value for $k$, such that the binary search outlined above succeeds
  in $O(T)$ time.  To be more precise, we start with a constant time
  limit $T∈O(1)$ and iteratively increase it by a constant factor
  until a subroutine succeeds.  In that subroutine, we start with
  $k = 1$ and iteratively increase $k$ by a constant factor until
  either $k>n$ or a second subroutine succeeds.  The second subroutine
  tries to determine $D(I)$ using a binary search on $\ell$, by
  executing $\mathcal{A}(I, \ell, k)$ with a time limit $T$.  If
  $k∈O(k^*)$ and $T≥T(n,k^*)$, the binary search finds $D(I)$ in
  $O(T \log n)$ steps as discussed above.  If otherwise $k$ or $T$ are
  too small, then the binary search may wrongly conclude that a probed
  value of $\ell$ is smaller than $D(I)$.  In this case the binary
  search fails and larger values for $k$ are tested until either
  $D(I)$ is found or $k>n$.  In the latter case, $T$ is increased and
  the exponential search on $k$ starts again.  This means that for
  each value of $T$, the subroutine tests $O(\log n)$ values for $k$
  in $O(T \log n)$ time each.  At some point, $T$ reaches $T(n,k^*)$.
  Then, the exponential search on $k$ succeeds and the binary search
  identifies $D(I)$.  As $T$ is increased by a constant factor each
  time, the total running time is dominated by the last round and thus
  in $O(T(n, k^*) \log² n)$.
\end{proof}

We apply this to the size-based algorithm from
\cref{lem:algo_size_based} and obtain the following.

\thmPropertiesAlgo*
%
%

We note that in our case, the second logarithmic factor of
\cref{lem:algo_expexpbin} can be avoided.  To see how, note that the
running time for the maxdist computations only depends on a few
quantities known to the algorithm, such as the number of candidate
pairs, the size of the blocks, and the size of their boundaries.  This
means that for each intermediate flat partition $\mathcal{B}_i$ the
algorithm can make an (up to constant factors) tight estimate for the
time needed to compute the maxdist of all candidate pairs in
$\mathcal{B}_i$.  Thus, the algorithm can keep track of the elapsed
time until each step $i$ and calculate the cost for computing the
maxdist over all candidate pairs of $\mathcal{B}_i$.  If this estimate
for the total running time is below the given time limit $t$ the
algorithm computes the maxdists on $\mathcal{B}_i = \mathcal{B}_j$.
Otherwise it continues by considering the next flat partition
$\mathcal{B}_{i+1}$.  This continues until either the time limit is up
or until a flat partition is found for which the time limit is
sufficient.  If there is some size $k$ and a flat partition
$\mathcal{B}_i$ with blocks of size roughly $k$ such that the
size-based algorithm $\mathcal{A}(G, \mathcal{P}, \ell, k)$ from
\cref{lem:algo_size_based} runs in time $t^*$, then for any given time
limit $t≥t^*$ the above algorithm also finds a flat partition for
which it can calculate the maxdist in time $t$.  This way only one
logarithmic factor is added by the binary search.

\section{Analysis on Random Geometric Graphs}
\label{sec:rgg-nice}

In order to apply the algorithm from \cref{thm:properties_algo} on
random geometric graphs, we show that
\cref{item:prop-diam-partners,item:prop-few-corners,item:prop-small-sep,item:prop-size-dependent-diam,item:prop-fragmentation}
hold asymptotically almost surely.  We start with an overview of the
general proof ideas and also give the main intuitions for our analysis
of the iFUB algorithm.

\subparagraph{Properties 1 and 2.}  Recall from the introduction that
\cref{item:prop-diam-partners,item:prop-few-corners} are based on
simple geometric observations
(\cref{item:geom-diam-partners,item:geom-few-corners}) that
intuitively hold for the ground spaces of torus/square RGGs on a
purely geometric level.  Consequently, the main task is to transfer
these geometric intuitions to the graph setting, i.e., it remains to
show that the derived properties hold a.a.s.\ on the graphs.  As our
main tool we extend known results on the graph--geometry stretch,
i.e., the relation between geometric distance and graph
distance~\cite{diaz_stretch}.  Roughly speaking, we use that for
vertices with known geometric distance $d$ on a RGG with connection
radius $r$, the graph distance likely lies within a narrow range
around $d / r$.  This allows us to show that the almost diametric
partners of a vertex $v$ have geometric distance close to the
geometric diameter.  They are thus contained in a small geometric ball
and therefore also in a small ball in the graph.

\subparagraph{Properties 3, 4, and 5.}  Showing that the properties
related to recursive partitions
likely hold on RGGs works similarly.  Recall that these properties
are derived from
\cref{item:geom-small-sep,item:geom-size-dependent-diam,item:geom-fragmentation}
which describe intuitive geometric properties.  We thus formally
define a recursive partition based on the quadtree-like subdivision
used for these observations, see also \cref{fig:new_properties:grid}.
Then, geometrically each cell with side length $s$ has perimeter $4s$,
area $s²$, and diameter $\sqrt{2} s$.  As the number of vertices
within any polynomially sized region is highly concentrated, this
gives us that the recursive partition of the graph is balanced and has
small separators.  With the bound on the graph--geometry stretch this
also gives the claimed size-dependent diameters.  Here, we need
slightly stronger guarantees than already shown~\cite{diaz_stretch},
because we need short paths that do not only exist in the whole graph,
but also in the subgraphs induced by the recursive partition.
Finally, it is easy to formally prove a variant of
\cref{item:geom-fragmentation}, which directly implies
\cref{item:prop-fragmentation} via the concentration of the vertices.

\subparagraph{Running Time of iFUB.}  For our analysis of the iFUB
algorithm on random geometric graphs, we consider the variant of iFUB
that chooses a central vertex $c$ using the \emph{2-sweep} heuristic
as follows.  First, the algorithm performs a BFS from an arbitrary
vertex $v$ and picks a vertex $w$ in the last layer, i.e., with
maximum distance from $v$.  Then, a second BFS is performed from $w$
and the vertex $c$ is chosen half the way on a shortest path between
$w$ and a vertex $w'$ with maximum distance from $w$.  Subsequently,
iFUB performs exactly one BFS from every vertex whose distance to $c$
is more than half the diameter of $G$.  In the settings we consider,
these vertices also account for pretty much all BFS runs and thus
directly determine the total running time.

\begin{figure}
  \centering
    \begin{subfigure}[b]{0.30\textwidth}
      \centering
        \includegraphics[page=1]{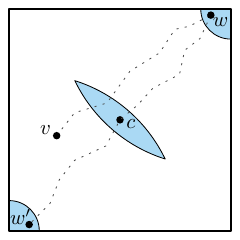}
        \subcaption{}
        \label{fig:ifub_vis:a}
    \end{subfigure}
    ~ 
    \begin{subfigure}[b]{0.30\textwidth}
      \centering
        \includegraphics[page=2]{figs/ifub_vis.pdf}
        \subcaption{}
        \label{fig:ifub_vis:b}
    \end{subfigure}
    ~ 
    \begin{subfigure}[b]{0.30\textwidth}
      \centering
        \includegraphics[page=3]{figs/ifub_vis.pdf}
        \subcaption{}
        \label{fig:ifub_vis:c}
    \end{subfigure}
    \caption{Visualization for the analysis of the iFUB algorithm.  Part (a)
      concerns the 2-sweep heuristic on square RGGs: for any chosen
      vertex $v$, the highly distant vertex $w$ lies in a small region
      (blue) around a corner of the ground space and the central
      vertex $c$ lies in a small lens (also blue, not to scale) in the
      center.  Only few vertices lie in the small regions with
      distance at least half the diameter $D$ from $c$ (Part (b),
      orange), giving an upper bound on the running time on square
      RGGs.  Part (c) shows the torus, a point $c$, and its antipodal
      partner $c'$; it can be seen that most points (orange) have
      distance at least half the diameter $D$ from $c$.}
  \label{fig:ifub_vis}
\end{figure}

For the upper bound on square RGGs, we begin by showing that the
vertex $w$ selected by the first BFS is likely located close to a
corner of the square ground space.  Afterwards, we show that $c$ is
located in a small lens close to the geometric center of the ground
space, see also \cref{fig:ifub_vis:a}.  Both of these steps use basic
geometric arguments and rely on the graph--geometry stretch, however
bounding the size of the lens in the second step is somewhat
technical.  Conditional on $c$ being located close to the geometric
center, it is then easy to show that there are not many vertices whose
distance to $c$ is at least half the diameter of $G$.  This then gives
the running time in the first part of \cref{thm:ifub_fast}.

For the torus, any chosen center $c$ results in more than half of all
points having distance more than half the diameter of $G$ from $c$,
see also \cref{fig:ifub_vis:c}.  Combined with the graph--geometry
stretch this directly shows that iFUB performs a BFS from $Ω(n)$
vertices, covering the second part of \cref{thm:ifub_fast}.

\subparagraph{Outline.}  In the remainder of this section we start
with our definition of random geometric graphs and afterwards show the
stretch bounds.  In order to apply the algorithmic framework from
\cref{sec:algo}, we then first define the recursive partition and show
that
\cref{item:prop-small-sep,item:prop-size-dependent-diam,item:prop-fragmentation}
are likely to hold, before considering
\cref{item:prop-diam-partners,item:prop-few-corners}, and finally
combining these results and apply \cref{thm:properties_algo} to get
running times for the algorithm.  Afterwards, in \cref{sec:ifub} we
analyze the running time of iFUB.

\subsection{Definitions}
\label{sec:rgg-nice:prelim}
For a side length $s∈ℝ$, we define $\mathcal{S}_s$ as the square
$[0, s)^2 \subseteq ℝ²$.  We write $d_\mathbb{E}(u,v)$ for the
Euclidean distance between two points $u,v∈ℝ²$.  Further, we define
the \emph{(flat) torus} $\mathcal{T}_s = ℝ² / (s \cdot ℤ)²$ with
side length $s$ as the equivalence classes of points in
$[0, s)²$, i.e., we write $v$ as a shorthand for
$[v] = \{v+m \mid m∈(s \cdot ℤ)²\}$ for any $v∈ℝ²$.  The
\emph{toroidal distance} between two equivalence classes is defined as
the minimum Euclidean distance between points in the equivalence
classes, i.e.,
$d_{\mathcal{T}_s}(u,v) := \min\{d_\mathbb{E}(z,w) \mid z∈[u], w∈[v]\}$.
In the context of random geometric graphs we write $\mathcal{S}$
(respectively $\mathcal{T}$) for $\mathcal{S}_{\sqrt{n}}$ (respectively
$\mathcal{T}_{ \sqrt{n} }$).

Then for a ground space $X∈\{\mathcal{T}, \mathcal{S}\}$ we define the
\emph{random geometric graph} $G∈\mathcal{G}(X, n, r)$ as follows.
Throughout this paper we assume $r < n^ρ$ for a constant
$ρ < \frac12$.  The vertex set $V(G)$ is obtained by drawing $n$
points independently and uniformly in $X$.  We identify each vertex
$v$ with its geometric position $(v_x, v_y)∈ℝ²$.  Then, two vertices
are adjacent exactly if their distance in $X$ is at most $r$, i.e.,
$E(G) = \{\{v,w\}∈{V(G) \choose 2} \mid d_X(v,w) \le r\}$.  Here, we
let $d_X$ refer to the Euclidean distance for
$\mathcal{G}(\mathcal{S}, n, r)$ and to the toroidal distance for
$\mathcal{G}(\mathcal{T}, n, r)$.  We call
$G∈\mathcal{G}(\mathcal{S},n,r)$ a \emph{square} random geometric
graph and $G∈\mathcal{G}(\mathcal{T},n,r)$ a \emph{torus} random
geometric graph.

We write $\tilde{\mathcal{G}}(X, n, r)$ for the related model of
\emph{Poisson RGGs}.  Here, we first draw a Poisson random variable
$N$ with mean $n$ and then draw $G \sim \tilde{\mathcal{G}}(X, n, r)$ as a
random geometric graph with $N$ vertices.  Note that this is
equivalent to setting $V(G)$ as the result of a Poisson point process
with intensity $1$ on $X$.  The advantage of
$\tilde{\mathcal{G}}(X, n, r)$ over $\mathcal{G}(X, n, r)$ is that the
number of vertices in any region $A\subseteq X$ with area measure $a$
follows a Poisson random variable with mean $a$ and is independent
from the number of vertices in a disjoint region
$A' \subseteq X \setminus A$.

\subsection{Graph-Geometry Stretch on RGGs}
\label{sec:rgg-nice:stretch}
There is a tight relationship between the geometric distance and the
graph distance of vertices in a random geometric graph.  For a lower
bound on the graph distance, note that in a random geometric graph
with connection radius $r$, each edge connects vertices of distance at
most $r$.  Thus, regardless of the underlying geometry
$X∈\{\mathcal{S}, \mathcal{T}\}$ for any pair of vertices $u$, $v$ we have
\begin{align}
  \label{eq:stretch-trivial}
  \dist_{G}(u,v) \ge \ceil*{\frac{d_X(u, v)}{r}}.
\end{align}

Interestingly, on random geometric graphs we also get upper bounds for
the graph distance conditional on the geometric distance of vertices.
%
%
%
Below, we slightly adapt results by Díaz, Mitsche, Perarnau, and
Pérez-Giménez~\cite{diaz_stretch} to also give upper bounds for the
graph distance within subgraphs induced by axis aligned squares.

\begin{lemma}[\cite{diaz_stretch}, Theorem 1.1]\label{lem:stretch}
  Let $G\sim\mathcal{G}(\mathcal{S}, n, r)$ be a square RGG with
  connection radius $r ∈ ω(\log^{3/4} n)$.
  Asymptotically almost surely, for every pair of vertices
  $u, v ∈V(G)$ with $d_\mathbb{E}(u,v) > r$, we have
  \begin{align*}
    \dist_{G}(u,v) \le
    \ceil*{
      \frac{d_\mathbb{E}(u,v)}{r}
      \big(1+s\big)
    }
  \end{align*}
  with
  \begin{align*}
    s = s(d_{\mathbb{E}}(u,v),r) ∈
    \begin{cases}
      O\parens*{r^{-4/3} \log n} & \text{ if $d_{\mathbb{E}}(u,v) \le r \log n$,}\\
      O\parens*{r^{-4/3}} & \text{ if $d_{\mathbb{E}}(u,v) > r \log n$}.
    \end{cases}
  \end{align*}
  Furthermore, every axis-aligned square containing $u$ and $v$
  contains a $(u,v)$-path of such length.  For
  $G∈\tilde{\mathcal{G}}(\mathcal{S}, n, r)$, the same event holds
  with probability $1-o(n^{-5/2})$.
\end{lemma}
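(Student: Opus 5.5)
The plan is to reopen the proof of Theorem 1.1 of~\cite{diaz_stretch} rather than use it as a black box. The first claimed bound is exactly their theorem, so all the work goes into two additions: the path can be confined to any axis-aligned square containing $u$ and $v$, and the failure probability in the Poisson model is $o(n^{-5/2})$.

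\textbf{Their construction.} For $u,v$ at Euclidean distance $D>r$, Díaz et al.\ build the path greedily inside a thin lens-shaped region around the segment $\overline{uv}$. They cut this region into $\lceil D/r\rceil(1+s)$ consecutive slabs. In each step they pick a vertex in a small target region of area $\Theta(r\cdot r^{1/3})$, up to logarithmic factors in the short-distance case. The optimisation of the lens width is what yields the $r^{-4/3}$ term in the stretch. The failure probability of a single step is $\exp(-\Omega(\text{area}))$.

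\textbf{Confinement to a square $Q$.} I would replace every target region $R$ by $R\cap Q$. Since $Q$ is convex and contains $\overline{uv}$, the segment lies in $Q$. The only danger is that $\overline{uv}$ runs close to, or along, $\partial Q$, or ends in a corner of $Q$. The key geometric claim is then: for every point $p$ on $\overline{uv}$ and every target region $R$ associated with $p$, the area of $R\cap Q$ is at least a constant fraction of the area of $R$. I would prove it by a short case analysis.
\begin{itemize}
  \item The segment splits $Q$ into two convex pieces. Locally at $p$, at least one side of the segment is contained in $Q$ up to distance equal to the lens width, unless $p$ is near a corner.
  \item Near a corner the interior angle is $90^\circ$. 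The segment therefore leaves at least a $45^\circ$ wedge on one side, and that wedge still contains a constant fraction of a half-lens.
\end{itemize}
Choosing target regions in this one-sided way, possibly switching sides along the segment, only changes constants. Hence the per-step success probability and the resulting bound on $s$ change only in the hidden constants. Because every chosen vertex lies in $Q$, the path lies in $G[Q]$.

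\textbf{Probability.} Work in $\tilde{\mathcal G}(\mathcal S,n,r)$, where disjoint regions are independent Poisson.
\begin{itemize}
  \item Each target area is $\Omega(r^{4/3})$ with $r^{4/3}\in\omega(\log n)$, so each step fails with probability $n^{-\omega(1)}$.
  \item A path has $O(\sqrt n)$ steps, so a fixed triple $(u,v,Q)$ fails with probability $n^{-\omega(1)}$.
  \item The union over pairs $(u,v)$ is handled with the Mecke formula, or by first conditioning on $N\le 2n$, which fails with exponentially small probability.
  \item For squares, only the set $Q\cap V(G)$ matters, and a square is combinatorially determined by $O(1)$ vertex coordinates. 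So it suffices to consider $n^{O(1)}$ squares, for instance squares whose sides pass through vertex coordinates, together with a monotonicity argument that shrinking $Q$ to such a canonical square keeps $u,v$ inside.
\end{itemize}
A union bound then gives total failure probability $n^{O(1)}\cdot n^{-\omega(1)}=o(n^{-5/2})$. The statement for $\mathcal G(\mathcal S,n,r)$ follows by the standard de-Poissonization: $\Pr{N=n}=\Theta(n^{-1/2})$, so conditioning on $N=n$ turns $o(n^{-5/2})$ into $o(n^{-2})=o(1)$.

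\textbf{Main obstacle.} I expect the hardest step to be the geometric claim that clipping each region to $Q$ loses only a constant fraction of area. This matters uniformly over positions near corners and edges, and for the short-distance regime $D\le r\log n$, where the lens is relatively fatter. I would first try the one-sided half-lens argument above. If side switching interferes with the slab-by-slab analysis of~\cite{diaz_stretch}, I would fall back to a slightly narrower lens, at most halving its width, so that one fixed side always fits inside $Q$; this costs only a constant factor in $s$.
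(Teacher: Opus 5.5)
Your probability step does not go through, because it rests on a wrong picture of the construction in \cite{diaz_stretch}.

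\textbf{The per-step estimate is wrong.} A target region of area $A$ inside the radius-$r$ disk around the current vertex can only guarantee an advance of $r-\Omega(A^{2/3}r^{-1/3})$, since a cap of depth $h$ has area $\Theta(\sqrt{r}\,h^{3/2})$. The relative loss per hop is therefore $\Omega(A^{2/3}r^{-4/3})$.
\begin{itemize}
  \item With your $A=\Omega(r^{4/3})$, chosen so that each step fails with probability $n^{-\omega(1)}$, this gives stretch about $r^{-4/9}$.
  \item Even $A=\omega(\log n)$ only gives $\omega((\log n/r^2)^{2/3})$.
\end{itemize}
The $r^{-4/3}$ term comes from per-hop losses that are small only on average. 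Their sum concentrates, and the deviation term $(\log n/(r^2+r\,d_{\mathbb{E}}(u,v)))^{2/3}$ drops to $O(r^{-4/3})$ once there are about $\log n$ hops. This is exactly the regime change at $d_{\mathbb{E}}(u,v)=r\log n$.

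\textbf{Consequence for the union bound.} The failure probability per pair is only polynomially small. In the paper it is $o(n^{-5/2})$ in the Poisson model with $u,v$ planted. De-Poissonisation turns this into $o(n^{-2})$, and a union bound over the $O(n^2)$ pairs gives $o(1)$. An extra union bound over $n^{O(1)}$ squares cannot be paid for at this rate, and your $n^{-\omega(1)}$ per triple is unjustified.

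You also skip the one computation the paper actually carries out. The bound in \cite{diaz_stretch} is $s\in O\bigl((\log n/(r^2+r\,d_{\mathbb{E}}(u,v)))^{2/3}+(\sqrt{\log n}/r)^4+r^{-4/3}\bigr)$. The assumption $r\in\omega(\log^{3/4}n)$ is what absorbs the middle term into $r^{-4/3}$.

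\textbf{The missing idea.} Neither clipping nor a union bound over squares is needed. In the proof of Corollary~2.1 of \cite{diaz_stretch}, the path uses only vertices of the one-sided rectangle $[1.01\alpha,t-1.01\alpha]\times[0,\alpha]$ along $\overline{uv}$, with $\alpha=\Theta(r)$ and $\alpha<0.004r$.
\begin{itemize}
  \item This rectangle is already placed so that it stays inside the ground square even when $u$ or $v$ lies on its boundary or at a corner.
  \item The offset $1.01\alpha>\alpha$ from the endpoints is what lets the better side fit into a $90^\circ$ corner. This is precisely the obstacle you single out.
  \item The rectangle depends only on $u$, $v$ and a choice of side. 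Since $t>r>250\alpha$, for every axis-aligned square containing $u$ and $v$ the rectangle on one of the two sides lies inside it.
\end{itemize}
So the per-pair event, taken for both sides, confines the path to all such squares at once, and the union bound runs over pairs only. Your fallback of a single one-sided region points in this direction. However, the region must be chosen independently of $Q$, not clipped to $Q$.
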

\begin{proof}
  The upper bound on the graph distance between $u$ and $v$ is given
  in statement (ii) of Theorem~1.1 in~\cite{diaz_stretch} as
  \begin{align*}
    s = s(d_{\mathbb{E}}(u,v),r) ∈ O\parens*{
    \parens*{\frac{\log n}{r²+r\cdot d_\mathbb{E}(u,v)}}^{2/3} +
    \parens*{\frac{\sqrt{\log n}}{r}}^{4} +
    r^{-4/3}
    }.
  \end{align*}
  For $r > \log^{3/4} n$ we have
  $\parens*{\frac{\sqrt{\log n}}{r}}^{4} < r^{-4/3}$.  Thus, the
  second summand is dominated by the third one.  For the first summand
  we have
  \[
    \parens*{\frac{\log n}{r²+r\cdot d_\mathbb{E}(u,v)}}^{2/3} < \parens*{\frac{\log n}{r^2}}^{2/3} = r^{-4/3} \log^{2/3} n.
  \]
  Additionally, for $d_{\mathbb{E}}(u,v) \ge r \log n$ the first
  summand is smaller than $r^{-4/3}$.

  It remains to show that one $(u,v)$ path of such length is contained
  in a square bounding box of $u$ and $v$.  For this we need to
  consider some details made in the proof of
  Corollary~2.1~\cite{diaz_stretch}.  This corollary considers a disk
  intersection graph of a Poisson point process in the plane, where
  vertex $u$ is planted at the origin and vertex $v$ at $(t, 0)$.  The
  authors then consider the rectangle $R=[1.01α,t-1.01α]\times[0,α]$
  for $α∈Θ(r)$ with $α<0.004r$.  Relying on further lemmas that we do
  not need to discuss here, the authors show that with probability
  $1-o(n^{-5/2})$ there exists a path of the desired length from $u$
  to $v$ that only uses vertices in $R$.  The theorem then follows
  with a de-Poissonization, reducing the probability to $1-o(n^{-2})$,
  followed by a union bound over all pairs of vertices in the random
  geometric graph $G$, reducing the probability to $1-o(1)$.

  In order for the union bound to work, the authors show that the path
  within the rectangle $R$ implies a path within the
  $[0, \sqrt{n}] \times [0, \sqrt{n}]$ ground space of $G$, even if
  $u$ and $v$ lie on the boundary of the ground space.  The union
  bound afterwards does not distinguish between vertices close to the
  boundary and the many more vertices far from the boundary.  Thus,
  the proof given in \cite{diaz_stretch} also implies that for any
  pair of vertices $u$ and $v$ in $G$ there is a path of the desired
  length that does not leave any axis aligned square containing $u$
  and $v$.
%
\end{proof}

We use a coupling argument to show that \cref{lem:stretch} holds
analogously on the torus.


\begin{lemma}\label{lem:torus_stretch}
  Let $G\sim\mathcal{G}(\mathcal{T}, n, r)$ be a torus-RGG with
  connection radius $r∈ω(\log^{3/4} n)$.  Asymptotically almost
  surely, for every pair of vertices $v, w∈V(G)$ with
  $d_{\mathcal{T}}(v,w) > r$, we have
  $\dist_{G}(v,w) \le \ceil*{\frac{d_\mathcal{T}(v,w)}{r}\parens*{1+s(v,w,r)}}$
  with the error term $s(d_{\mathcal{T}}(v, w), r)$ as in
  \cref{lem:stretch}, and further, every minimal\footnote{Requiring
    \emph{minimal} squares is the main difference to the statement of
    \cref{lem:stretch}.  This difference is necessary, because on the
    torus there are squares containing $u$ and $v$ that do not contain
    the geodesic between $u$ and $v$.}  axis-aligned square containing
  $v$ and $w$ contains a $(v,w)$-path of such length.
\end{lemma}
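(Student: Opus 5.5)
We reduce to \cref{lem:stretch} by a coupling that unfolds the torus locally into a square.

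\textbf{Local unfolding.} Fix a pair $v,w$ with $d_{\mathcal{T}}(v,w) > r$. Choose lifts $\hat v,\hat w\in\mathbb{R}^2$ with $d_\mathbb{E}(\hat v,\hat w)=d_{\mathcal{T}}(v,w)\le \sqrt{n/2}$. Every minimal axis-aligned square containing $v$ and $w$ on the torus lifts to an axis-aligned square $Q$ containing $\hat v,\hat w$. Its side length is $|\hat v_x-\hat w_x|$ or $|\hat v_y-\hat w_y|$, hence at most $\sqrt{n}/2$, so the projection $\mathbb{R}^2\to\mathcal{T}$ is injective on $Q$.

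\textbf{Choice of window.} Take an axis-aligned square window $W\supseteq Q$ of side length $c\sqrt n$ for a constant $c<1$, chosen so that:
\begin{itemize}
\item $W$ still embeds injectively in $\mathcal{T}$;
\item $W$ contains an $r$-neighbourhood of $Q$. This is possible since $r<n^\rho$ with $\rho<1/2$.
\end{itemize}
On $W$, Euclidean and toroidal distances agree for pairs at distance at most $r$. So the torus RGG restricted to $W$ is exactly a square RGG on $W$: an induced subgraph.

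\textbf{Coupling.} Work first in the Poisson model $\tilde{\mathcal{G}}(\mathcal{T},n,r)$. The points in $W$ form a Poisson process of intensity $1$ on $W$, so $G[W]$ is distributed as $\tilde{\mathcal{G}}(\mathcal{S}_{c\sqrt n},c^2n,r)$, translated. Condition on $v,w$ being points of the process (Palm/planting, as in \cite{diaz_stretch}). The argument behind \cref{lem:stretch} then gives, with probability $1-o(n^{-5/2})$, a $(\hat v,\hat w)$-path of length
\[
\ceil*{\tfrac{d_\mathbb{E}(\hat v,\hat w)}{r}(1+s)}
\]
inside every axis-aligned square containing them, in particular inside $Q$.

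Two details need checking:
\begin{itemize}
\item The parameters change from $n$ to $c^2n$. This only changes the $\log n$ factors and constants in $s$, and keeps $r\in\omega(\log^{3/4}(c^2 n))$.
\item The failure bound is per planted pair, not only the a.a.s.\ statement. This is exactly the Poisson form given at the end of \cref{lem:stretch}.
\end{itemize}
Projecting the path back to $\mathcal{T}$ gives a path in $G$ of the same length, lying in the minimal square. There are at most four minimal squares (choices of lifts), and each is handled the same way.

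\textbf{De-Poissonization and union bound.} As in \cite{diaz_stretch}, de-Poissonize: conditioning on $N=n$ costs a factor $O(\sqrt n)$, giving failure probability $o(n^{-2})$ per pair. A union bound over the at most $n^2$ pairs, times the four squares, yields the a.a.s.\ statement.

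\textbf{Main obstacle.} The delicate step is justifying that the per-pair bound from \cite{diaz_stretch} applies to a planted pair in a sub-window of the torus. Their proof only inspects a thin rectangle $R$ around the segment $\hat v\hat w$ of width $\Theta(r)$, so I would argue that the event depends only on the point process in $R\subseteq Q$ plus an $r$-margin. That region is contained in $W$, where the torus process and the square process coincide exactly. Hence the event has the same probability in both models, and no appeal to the boundary treatment of the square is needed.
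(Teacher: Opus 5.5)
Your proof is correct, but it takes a different route from the paper.

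The paper does not unfold the torus pair by pair. Instead it builds one global coupling: a single Poisson process on $\mathcal{S}$ is rearranged into four square RGGs $G_A,\dots,G_D$ by cyclically shifting quadrants by $\frac{\sqrt{n}}{2}$ in $x$, in $y$, or in both. Each copy is, after identifying vertices, a subgraph of the torus graph, and for every pair the toroidal distance equals the Euclidean distance in one of the four copies.

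\cref{lem:stretch} is then used purely as a black box, namely as the a.a.s.\ all-pairs event on each of the four graphs. So the union bound is over four events only, and no planting, per-pair failure probability, or de-Poissonization is needed. The quadrant shift is measure-preserving, so the same coupling would even work directly in the binomial model. The price is a small step the paper leaves implicit: every minimal torus square must be a seam-free square in some copy that realizes the geodesic. This holds because an interval of length less than $\frac{\sqrt{n}}{2}$ meets at most one of two antipodal seams.

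Your approach only uses that a neighbourhood of the geodesic embeds isometrically into $\mathcal{T}$. It is more local and would carry over to other locally Euclidean ground spaces. However, it relies on the per-pair $o(n^{-5/2})$ Poisson bound and on the internal structure of the proof in \cite{diaz_stretch} (the event being local to the thin rectangle $R$), rather than on the statement of \cref{lem:stretch}. It also needs the Palm conditioning and de-Poissonization bookkeeping.

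One small correction. For a fixed lift pair, minimal squares are not finitely many: they slide in the direction of the smaller coordinate difference, and their union need not fit into a single window of side less than $\sqrt{n}$. So the window $W$ alone does not cover all of them. What does is your locality remark: place $R$ on the side of the segment where $Q$ has room, and two events per pair suffice.
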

\begin{proof}
  Consider the torus-RGG $G$.  Then, for any pair of vertices $(v,w)$
  the torus $\mathcal{T}$ can be mapped into the square $\mathcal{S}$
  such that the distance between $v$ and $w$ is preserved, i.e. such
  that the toroidal distance between $v$ and $w$ is equal to their
  Euclidean distance in the mapping.  In fact, four different mappings
  are sufficient for all pairs of vertices, see also
  \cref{fig:stretch_coupling}.
  
  More formally, we introduce a coupling between
  $\tilde{\mathcal{G}}(\mathcal{T}, n, s)$ and
  $\tilde{ \mathcal{G} }(\mathcal{S}, n, s)$ as follows.  The rough
  idea is to define four coupled RGGs by suitably re-shuffling the
  points sampled from a Poisson point process, such that the Torus
  distance between vertices is realized by the minimum Euclidean
  distance in one of the four coupled RGGs.  By \cref{lem:stretch}, in
  each of the four square RGGs graph distances are a.a.s.\ not much
  longer than implied by the geometry, so the same holds on the torus
  RGGs.

  \begin{figure}
    \centering
    \includegraphics{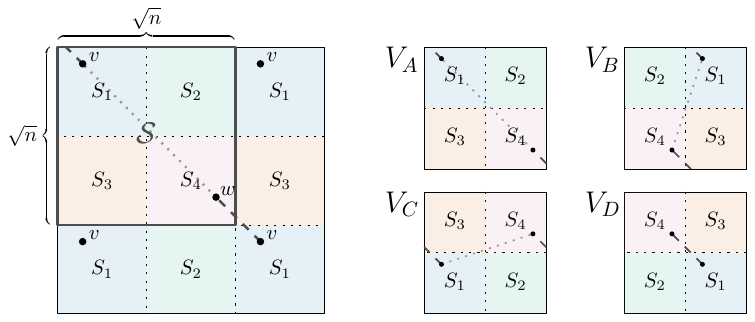}
    \caption{Sketch of the coupling argument.  The square
      $\mathcal{S}$ consists of four smaller squares and vertex sets
      $V_{A}$ to $V_{D}$ are formed by rearranging the points of a
      Poisson point process on $\mathcal{S}$ as depicted.  This way
      the torus distance between points $v$ and $w$ in $\mathcal{S}$
      is realized as the Euclidean distance in one of the
      rearrangements.\label{fig:stretch_coupling}}
  \end{figure}

  By restricting the Poisson point process $Π$ on $ℝ²$ to
  $\mathcal{S} = [0, \sqrt{n})²$ we obtain the vertex set $V_{A}$ of a
  Poisson random geometric graph $G_{A}$.  We define vertex sets
  $V_{B}$, $V_{C}$, and $V_{D}$, by translating the points sampled by
  $Π∩\mathcal{S}$ as indicated in \cref{fig:stretch_coupling}.  Let
  $S_{1}$, $S_{2}$, $S_{3}$, and $S_{4}$ be the four half-open squares
  of side length $s=\frac{\sqrt{n}}{2}$ that tile $\mathcal{S}$.
  For $i∈[1,4]$ denote $Π∩S_{i}$ as $Π_{i}$ and for a point $p∈ℝ²$
  write $τ_{p}: ℝ²\to ℝ²,\; q\mapsto q+p$ for the translation that
  maps the origin to $p$.  We define
  \begin{align*}
V_{B} &= τ_{(s,0)}\parens*{Π_{1}∪Π_{3}} \cup τ_{(-s,0)}\parens*{Π_{2}∪Π_{4}} \\
V_{C} &= τ_{(0,s)}\parens*{Π_{3}∪Π_{4}} \cup τ_{(0,-s)}\parens*{Π_{1}∪Π_{2}} \\
V_{D} &= τ_{-s,s}(Π_{4}) ∪ τ_{(s,s)}(Π_{3}) ∪ τ_{(-s,-s)}(Π_{2}) ∪ τ_{(s,-s)}(Π_{1});
  \end{align*}
  see also \cref{fig:stretch_coupling}.  Note that these vertex sets
  follow the distribution of the Poisson point process $Π$ restricted
  to $\mathcal{S}$.  Combining these vertex sets with the threshold
  radius $r$ we obtain geometric graphs $G_{B}$, $G_{C}$, $G_{D}$ that
  are each uniform Poisson random geometric graphs sampled from
  $\tilde{\mathcal{G}}(\mathcal{S}, n, s)$.

  Additionally, we define
  $G_{T} \sim \tilde{\mathcal{G}}(\mathcal{S}, n, s)$ by connecting the
  vertices of $V_{A}$ according to the torus metric.  For simplicity,
  we refer to vertices of the different graphs as $v_{i}∈V_{i}$, such
  that, for instance each $v_{A}∈V_{A}$ has a copy $v_{B}∈V_{B}$ that
  is shifted along the $x$-axis by $\frac{\sqrt{n}}{2}$ either to the
  left or to the right.  Then for any pair of vertices
  $v_{A}, w_{A}∈V_{A}$ we have
  \[
    d_\mathcal{T}(v_A,w_A) = \min_{i∈\{A,B,C,D\}}d_\mathbb{E}(v_i-w_i),
    \;\text{and}\;
    \dist_{G_{T}}(v_{A},w_{A}) = \min_{i∈\{A,B,C,D\}}\dist_{G_{i}}(v_{i}, w_{i}),
  \]
  i.e., the torus distance of $v_{A}$ and $v_{B}$ is the minimum
  Euclidean distance of any of their copies in $V_{A}$, $V_{B}$, $V_C$,
  and $V_{D}$.
  Asymptotically almost surely, the stretch event $\mathcal{E}_{\mathrm{stretch}}$ of
  \cref{lem:stretch} holds on all four graphs.
  Thus a.a.s.\ for any two vertices $v_{T},w_{T}$ of $G_{T}$ we have
  \begin{align*}
    \dist_{G_{T}}(v_{T},w_{T}) &= \min_{i∈\{A,B,C,D\}}\dist_{G_{i}}(v_{i}, w_{i}) \\
    & \le \min_{i∈\{A,B,C,D\}} \ceil*{\frac{\dist_\mathbb{E}(v_i, w_i)}{r}\parens*{1+s(v_{i},w_{i},r)}},\\
    & = \ceil*{\frac{d_\mathcal{T}(v_{T}, w_{T})}{r}\parens*{1+s(v_{T},w_{T},r)}},
  \end{align*}
  where $s(v_{i}, w_{i}, r)$ is the error term from
  \cref{lem:stretch}.  We also get that at least one such path is
  contained in the smallest axis-aligned square containing $u$ and
  $v$.
\end{proof}

We refer to events from \cref{lem:stretch}, respectively
\cref{lem:torus_stretch}, as the \emph{stretch event} and denote it
with $\mathcal{E}_{\mathrm{stretch}}$.  Note in particular that
conditioning on $\mathcal{E}_{\mathrm{stretch}}$ gives an upper bound
on the graph distance of \emph{all} pairs of vertices.

\subsection{Recursive partition}
\label{sec:rgg-nice:recursive}

We now give a formal definition for our recursive partition and show
that with high probability it is balanced and
\cref{item:prop-small-sep,item:prop-size-dependent-diam,item:prop-fragmentation}
hold.  We first define the infinite \emph{quadtree partition}
$\mathcal{P}$ as a recursive subdivision of the of the square
$\mathcal{S}=[0,\sqrt{n})\times[0,\sqrt{n})$.  We index parts using
words $\sigma\in\{1,2,3,4\}^*$.  The root of $\mathcal{P}$ is
$p_\varepsilon=\mathcal{S}$, where $ε$ stands for the empty word.  For
any $p_\sigma=[x_1,x_2)\times[y_1,y_2)$, let $x'=\frac{x_1+x_2}{2}$
and $y'=\frac{y_1+y_2}{2}$.  Then, the children of $p_σ$ (numbered
from 1 to 4)
are 
\begin{align*}
p_{\sigma 1}&=[x_1,x')\times[y',y_2),&
p_{\sigma 2}&=[x',x_2)\times[y',y_2),\\
p_{\sigma 3}&=[x_1,x')\times[y_1,y'),&
p_{\sigma 4}&=[x',x_2)\times[y_1,y').
\end{align*}
\cref{fig:quadrant_partition} shows $p_\varepsilon$ and its children.
Clearly, $\mathcal{P}$ has constant branching factor.
\begin{figure*}
  \centering
  \includegraphics{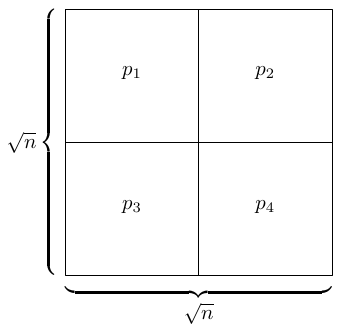}
  \caption{Four child parts $p_1$, $p_2$, $p_3$, and $p_4$ of the root
    part $p_ε$ of $\mathcal{P}$.\label{fig:quadrant_partition}}
\end{figure*}
We define the level $\ell$ of a part as the length of its index, e.g.,
$p_ε$ is on level 0 and $p_{13}$ on level 2.  Note that $\mathcal{P}$
contains $4^{\ell}$ level $\ell$ parts and each level $\ell$ part
$p∈\mathcal{P}$ has side length $\frac{\sqrt{n}}{2^{\ell}}$ and area
$\frac{n}{4^\ell}$.  For our algorithm we only need a finite subset of
$\mathcal{P}$.  For this, we define $\mathcal{P}^{\ell}$ as
$\mathcal{P}$ restricted to parts of level at most $\ell$.  We call
$\mathcal{P}^{\ell}$ the \emph{$\ell$-layered} quadtree partition.

For a geometric graph $G$ with
$V(G)\subseteq [0,\sqrt{n})\times[0,\sqrt{n})$ the partition
$\mathcal{P}^\ell$ induces a recursive partition $G[\mathcal{P}^\ell]$
of the vertices, via the intersection of $V(G)$ with the parts of
$\mathcal{P}^\ell$.  In the following, we show that for a (square or
torus) random geometric graph $G$ the recursive partition
$G[\mathcal{P}^\ell]$ a.a.s.\ is balanced, has small separators,
size-dependent diameters, and bounded fragmentation.

\subsubsection{Balance and Separator Sizes}
\label{sec:rgg-nice:rec:sep}
With respect to the area measures of its parts, the infinite quadtree
partition $\mathcal{P}$ already is balanced and has small separators,
so it remains to show the same for the induced recursive partition
$G[\mathcal{P}^\ell]$.  We use concentration bounds for the number of
vertices in regions of sufficient area to show that the induced
recursive partition of $G$ is also balanced and has small separators.

\begin{lemma}\label{lem:area_vertex_count}
  Let $G\sim\mathcal{G}(X, n, r)$ be a random geometric graph with
  ground space $X∈\{\mathcal{S}, \mathcal{T}\}$ and let
  $R \subseteq X$ be a measurable subset of $X$ with area
  $A(R) ∈ ω(\log n)$.  Then, for any constant $c$ the probability that
  the number of vertices of $G$ that lie in $R$ is between
  $A(R)\cdot \parens*{1 - \sqrt{\frac{3c\log n}{A(R)}}}$ and
  $A(R)\cdot \parens*{1 + \sqrt{\frac{3c\log n}{A(R)}}}$ is at least
  $1-O(n^{-c})$.
\end{lemma}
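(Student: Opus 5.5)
The count is a sum of independent indicators, so I will apply a multiplicative Chernoff bound with deviation parameter $\varepsilon = \sqrt{3c\log n/A(R)}$.

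\textbf{Setup.} In $\mathcal{G}(X,n,r)$ the $n$ points are drawn independently and uniformly from $X$. The ground space $X$ has total area $n$ in both cases, since $\mathcal{S}$ and $\mathcal{T}$ have side length $\sqrt{n}$. Hence a single point lands in $R$ with probability $p = A(R)/n$. The number $N_R$ of vertices in $R$ is therefore a sum of $n$ independent Bernoulli$(p)$ variables, i.e., $N_R \sim \mathrm{Bin}(n, A(R)/n)$, with mean $\mu = \Ex{N_R} = A(R)$. The argument is identical for the square and the torus: only the area of $R$ matters, not the metric.

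\textbf{Chernoff bound.} I use the standard two-sided bound $\Pr{|N_R - \mu| \ge \varepsilon\mu} \le 2\exp(-\varepsilon^2\mu/3)$, valid for $0 < \varepsilon \le 1$. Since $A(R) \in \omega(\log n)$, we have $\varepsilon \to 0$, so for constant $c$ and all sufficiently large $n$ the condition $\varepsilon \le 1$ holds. Plugging in gives
\[
\varepsilon^2\mu/3 = \frac{3c\log n}{A(R)}\cdot\frac{A(R)}{3} = c\log n,
\]
so the failure probability is at most $2e^{-c\log n} = 2n^{-c} = O(n^{-c})$. The complementary event is exactly that $N_R$ lies between $A(R)(1-\varepsilon)$ and $A(R)(1+\varepsilon)$, which is the claim.

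\textbf{Main obstacle.} There is essentially none. The only care needed is that $\varepsilon \le 1$, which the hypothesis $A(R) \in \omega(\log n)$ provides, so that the multiplicative Chernoff form with exponent $\varepsilon^2\mu/3$ applies to both tails. For finitely many small $n$ the $O(\cdot)$ absorbs the constant.
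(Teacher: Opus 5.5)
Your proof is correct and follows the paper's argument. Both write the count as a sum of $n$ independent indicators with mean $A(R)$, set the deviation to $\sqrt{3c\log n/A(R)}$ (which is below $1$ because $A(R)\in\omega(\log n)$), and apply the multiplicative Chernoff bound; the only cosmetic difference is that the paper bounds the two tails separately (getting $e^{-c\log n}$ and $e^{-\frac32 c\log n}$), while you use the combined two-sided form $2e^{-\varepsilon^2\mu/3}$.
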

\begin{proof}
  Let $n_{R} = |V(G)∩R|$ be the number of vertices in $R$.  Using
  $V(G) = v_{1}, \dots, v_{n}$, we define $X_{i}$ as a Bernoulli
  random variable that indicates whether the $i$th vertex of $G$ lies
  in $R$. Then we have $n_{R} = \sum_{i=0}^{n}X_{i}$ and
  $\Ex{n_{R}} = A(R)$.  Applying Chernoff bounds (e.g., see Theorem
  4.4 and 4.5 in \cite{prob_and_comp}), for $0<δ<1$ we have
  \begin{align*}
    \Pr{n_{R} > \Ex{n_{R}} (1+δ)} &\le e^{-\frac{δ²}{3}\Ex{n_{R}}}
                                    \intertext{and}
                                    \Pr{n_{R} > \Ex{n_{R}} (1-δ)} &\le e^{-\frac{δ²}{2}\Ex{n_{R}}}.
  \end{align*}
  We set $δ = \sqrt{\frac{ 3c\log n }{\Ex{n_{R}}}}$.  Then, we have $δ<1$ as
  by assumption $\Ex{n_{p}} = A(r)∈ ω(\log n)$ and thus $δ∈o(1)$.  The
  above probabilities simplify to
  \begin{align*}
    \Pr{n_{R} > \Ex{n_{R}} (1+δ)} &\le e^{-c \log n} ∈ O(n^{-c})
    \intertext{and}
    \Pr{n_{R} > \Ex{n_{R}} (1-δ)} &\le e^{-\frac32 c \log n}
      \le e^{-c \log n} \subseteq O(n^{-c}).\qedhere
  \end{align*}
\end{proof}

We apply this to derive bounds for the size of an individual block
induced by $\mathcal{P}^\ell$.

\begin{lemma}\label{lem:whp_parts_bound}
  Let $G\sim\mathcal{G}(X, n, r)$ be a random geometric graph with
  ground space $X∈\{\mathcal{S}, \mathcal{T}\}$ and connection radius
  $r$.  For constant $α∈[0,1]$ let $\ell = α \log_{4}n$.  For every
  part $P$ of the $\ell$-layered quadtree partition $\mathcal{P}^\ell$
  with side length $s$ at least $2r$ it holds w.h.p. that the subgraph
  $G[V(G)∩P]$ induced by $P$ contains $s² (1 \pm o(1))$ vertices and
  at most $4(s-r)r(1+o(1))$ separator vertices.
\end{lemma}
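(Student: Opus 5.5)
The plan is to reduce both claims to \cref{lem:area_vertex_count} applied to suitable regions, followed by a union bound over all parts of $\mathcal{P}^\ell$. There are $\sum_{i\le \ell} 4^i = O(4^\ell) = O(n^{α})\subseteq O(n)$ parts in total. So if each individual event fails with probability $O(n^{-c})$ for a constant $c\ge 2$, the union bound still leaves a failure probability of $O(n^{-1})$. A part $P$ of side length $s\ge 2r$ has area $s^2\ge 4r^2$. To apply \cref{lem:area_vertex_count} we need $s^2\in ω(\log n)$. This holds in the regime of interest, since $r$ is polynomially large or at least $ω(\log^{3/4} n)$. If $r$ is too small, then $4^\ell\le n^{α}$ with $α<1$ gives $s^2 \ge n^{1-α}$ anyway, and we use whichever lower bound applies.

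For the vertex count, \cref{lem:area_vertex_count} with $R=P$ gives $|V(G)\cap P| = s^2(1\pm\sqrt{3c\log n/s^2}) = s^2(1\pm o(1))$ with probability $1-O(n^{-c})$.

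For the separator, the separator of $P$ is the union of the boundaries of its four children. A vertex $v$ in a child $P'$ is a boundary vertex only if it has a neighbor outside $P'$, which requires $v$ to lie within distance $r$ of the boundary of $P'$. The union of these strips over the four children is contained in the set of points of $P$ within distance $r$ of the boundary of $P$ or of the two bisecting lines. The boundary of $P$ itself contributes only if $P$ is not the root (or, on the torus, always). Here I would make the accounting match the claimed $4(s-r)r$. The two bisecting segments of length $s$, thickened by $r$ on each side, give area about $2\cdot 2rs - 4r^2 = 4r(s-r)$ after subtracting the doubly counted central $2r\times 2r$ square. So the natural region is $R = \{p\in P : \text{$p$ within distance $r$ of a bisecting line}\}$, with area exactly $4rs-4r^2 = 4(s-r)r$.

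The issue is the outer boundary of $P$. Vertices near it have neighbors outside the child but also outside $P$. I would handle this by reading the separator as the vertices adjacent across the internal cuts, i.e.\ the vertices of a child having neighbors in a sibling. That is the separator actually needed in \cref{lem:dist-oracle}, because the ancestors' separators cover cross-parent edges. If the definition forces including the outer strip, I would add its area $O(sr)$ and absorb it into the constant. This reconciliation with the exact constant $4$ is the main obstacle; the probabilistic part is routine.

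With $R$ fixed, its area is $4(s-r)r \ge 4r^2\cdot\frac12$, which is $ω(\log n)$. Then \cref{lem:area_vertex_count} bounds the number of vertices in $R$ by $4(s-r)r(1+o(1))$ with probability $1-O(n^{-c})$. This number upper-bounds the separator size. Choosing $c=2$ and taking the union bound over the $O(n)$ parts and the two events per part finishes the proof.
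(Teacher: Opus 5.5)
\textbf{There is a genuine gap in the separator part of your proposal.}

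Your vertex count and your union bound match the paper's proof: \cref{lem:area_vertex_count} applied to $P$, then a union bound over the $O(4^\ell)=O(n^{\alpha})$ parts with $c$ large enough. The problem is \emph{which vertices} the second claim counts. In this lemma, the ``separator vertices'' of $G'=G[V(G)\cap P]$ are the vertices of $G'$ that have a neighbor in $G\setminus G'$, i.e.\ the boundary of the block $P$ itself; the paper's proof states this explicitly. Every such vertex lies within distance $r$ of $\partial P$, hence in the inner strip of width $r$. That strip has area $s^2-(s-2r)^2=4r(s-r)$. One application of \cref{lem:area_vertex_count} to it gives $4(s-r)r(1+o(1))$ directly, so there is no constant to reconcile. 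The outer strip you set aside as ``the issue'' is exactly the region the proof needs. Your cross around the two bisectors bounds a different set, which only coincidentally has the same area.

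Neither of your fallbacks proves the statement as written.
\begin{itemize}
\item Adding the outer strip and absorbing it into the constant gives about $4\cdot 4r(s/2-r)=8r(s-2r)$, not $4(s-r)r(1+o(1))$. In fact, under the Section~2 reading of ``separator'' (union of the four children's boundaries), the constant-$4$ bound is false for interior parts with $s\gg r$. There the children's boundaries are disjoint and each has about $2sr$ vertices, so their union has about $8sr$.
\item Your alternative reading, ``vertices of a child with a neighbor in a sibling,'' is also not contained in your cross region. For the root of the torus, siblings are adjacent across the wrap-around lines too, so that set again has about $8sr$ vertices.
\end{itemize}
With the boundary reading, the Section~2 separator of a part is the union of its four children's boundaries. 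Applying the same strip argument to each child bounds it by $O(rs)$. That suffices for the $(1/2,\rho)$-small separators used in \cref{lem:partition_separators}, but it does not give constant $4$.
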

\begin{proof}
  Let $P$ be a part of $\mathcal{P}^{\ell}$ with side length $s≥2r$.
  Let $G' = G[V(G)∩P]$ be the subgraph of $G$ induced by $P$ and let
  $S$ be the set of its separator vertices, i.e., the subset of
  $V(G')$ with neighbors in $G\setminus G'$.  The separator vertices
  are contained in a strip of width $r$ around the boundary of the
  square defined by $P$.  With a side length $s$ the area of $P$ is
  $s²$ and the separator vertices are contained in a region of area
  $4r(s-r)$.  We have
  $s\ge \frac{\sqrt{n}}{2^\ell} = n^{\frac{1}{2}-\frac{α}{2}}$, thus
  both areas are in $ω(\log n)$.  Thus, by
  \cref{lem:area_vertex_count} for any constant $c$ we have
  $|V(G')| = s² (1 \pm o(\frac{c\log n}{s²}))$ and
  $|S| \le 4r(s-r) (1 + o(\frac{c\log n}{4r(s-r)}))$ with probability
  $1 - O(n^{-c})$.

  The recursive partition $\mathcal{P}^{\ell}$ contains
  $O(4^{\ell}) = O(n^{α})$ parts.  Thus we can apply the union bound
  for the considered event over all $P∈\mathcal{P}^{\ell}$.  We obtain
  that the probability of the respective vertex sets being within the
  desired interval is at least $1 - O(n^{α}n^{-c})$.  For $c \ge 1+α$
  the desired event occurs with high probability.
\end{proof}

To show that $\mathcal{P}^{\ell}$ induces a balanced recursive
partition with has small separators, we apply the above lemma to every
part.

\begin{lemma}\label{lem:partition_separators}
  Let $G\sim\mathcal{G}(X, n, r)$ be a random geometric graph with
  ground space $X∈\{\mathcal{S}, \mathcal{T}\}$ and connection radius
  $r$.  Further, let $α∈(0,1)$ be a constant such that
  $r∈o(n^{1/2-α/2})$, and let $\ell = α \log_{4}(n)$.  Then, with high
  probability, the partition $G[\mathcal{P}^{\ell}]$ of $G$ induced by
  $\mathcal{P}^{\ell}$ is balanced, has $(1/2,ρ)$-small separators,
  and the leaf blocks have $Θ(n^{1-α})$ vertices.
\end{lemma}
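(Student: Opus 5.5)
The plan is to apply \cref{lem:whp_parts_bound} to every part of $\mathcal{P}^\ell$ at once; that lemma already holds simultaneously for all parts with high probability. Every level-$i$ part with $i \le \ell = α\log_4 n$ has side length
\[
  s_i = \sqrt{n}/2^i \ge \sqrt{n}/2^\ell = n^{1/2-α/2}.
\]
Since $r ∈ o(n^{1/2-α/2})$, we get $s_i \ge 2r$ for large $n$, so the lemma applies to every part. Hence, w.h.p., every block of $G[\mathcal{P}^\ell]$ on level $i$ has $s_i^2(1\pm o(1)) = \frac{n}{4^i}(1\pm o(1))$ vertices. The separator of a block, i.e.\ the union of the boundaries of its four children, has at most
\[
  4 \cdot 4(s_{i+1}-r)\,r\,(1+o(1)) = O(s_i r)
\]
vertices.

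The three claims then follow in turn. For \textbf{balance}, two children of the same block are both level-$(i+1)$ parts, so their sizes are $\frac{n}{4^{i+1}}(1\pm o(1))$. Their ratio is therefore $1+o(1)$, which gives $ε$-balance for a constant $ε$. For the \textbf{leaf blocks}, these are the level-$\ell$ parts, with $\frac{n}{4^\ell}(1\pm o(1)) = n^{1-α}(1\pm o(1)) = Θ(n^{1-α})$ vertices.

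For the \textbf{small separators}, a block with $k$ vertices on level $i$ satisfies $k = Θ(s_i^2)$, so $s_i = Θ(k^{1/2})$. The separator bound becomes $O(k^{1/2} r)$. Using the standing assumption $r < n^ρ$, this is $O(k^{1/2}n^ρ)$, which is exactly $(1/2,ρ)$-small separators in the sense of \cref{item:prop-small-sep}. Leaf blocks have empty separators, so they are trivially covered.

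The hidden constants are the main subtlety. The constants in balance and in the separator bound must be universal rather than instance-dependent, as discussed in Appendix~\ref{sec:asymptotics_of_properties}. I expect this to be routine, because the $(1\pm o(1))$ factors from \cref{lem:area_vertex_count} are uniform over all parts once we condition on the single w.h.p.\ event. The only other thing to check is that the area hypotheses $ω(\log n)$ hold at the deepest level. They do: $s_\ell^2 = n^{1-α}$ is polynomial, and the strip area $4r(s_\ell - r)$ is at least of order $r \cdot n^{1/2-α/2}$, which is polynomial as well.
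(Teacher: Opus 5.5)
Your proposal is correct and follows essentially the same route as the paper. Condition on the single w.h.p.\ event of \cref{lem:whp_parts_bound}. Read off balance from the $\frac{n}{4^{i+1}}(1\pm o(1))$ child sizes and leaf sizes from $n/4^\ell$. Get the separator sizes from $s=\Theta(k^{1/2})$ together with $r<n^{\rho}$.

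Your explicit summation over the boundaries of the four children is slightly more careful than the paper. The paper applies the per-part boundary bound of \cref{lem:whp_parts_bound} directly to the separator of the block.
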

\begin{proof}
  We condition on the event of \cref{lem:whp_parts_bound} that holds
  with high probability and show the claims one by one.
  \begin{description}
    \item[Balance.] Let $G_{σ}$ be a subgraph induced by a level
          $j=|σ|$ part $P_{σ}∈\mathcal{P}^{\ell}$, with $j<\ell$.  Let
          $G_{σ'}$ and $G_{σ''}$ be children of $G_{σ}$ induced by
          $\mathcal{P}^{\ell}$.  By \cref{lem:whp_parts_bound} $G_σ$
          has $\frac{n}{4^{j}} (1\pm o(1))$ vertices and $G_{σ'}$ and
          $G_{σ''}$ have $\frac{n}{4^{j+1}} (1\pm o(1))$ vertices.
          This means that the relative size difference of $G_{σ'}$ and
          $G_{σ''}$ tends to $1$ and thus the entire induced recursive
          partition in $ε$-balanced for arbitrarily small constant
          $ε>0$.
    \item[Small separators.] Let $G'$ be a subgraph induced by a level
          $i$ part $P∈\mathcal{P}^{\ell}$ with side length
          $s = n^{1/2}2^{-i}$ and let $S$ be the separator of $G'$.
          Then by \cref{lem:whp_parts_bound},
          $|V(G')| = s² (1\pm o(1))$ and $|S| \le 4(s-r)r (1+o(1))$.
          We have $i \le \ell = α \log_4 n$ and thus
          $s\ge n^{1/2}2^{-α \log_4 n} = n^{1/2-α/2}$.
          By assumption $r∈o(n^{1/2-α/2})$, so
          $|S|∈O(r \sqrt{|V(G')|})$.
    \item[Leaf size.] Let $G'$ be a subgraph induced by a leaf part
          $P_{σ}∈\mathcal{P}^{\ell}$.  Then by \cref{lem:whp_parts_bound}
          $|V(G')| \le \frac{n}{4^{α \log_{4} n}} (1+o(1)) = n^{1-α} (1+o(1))$.
          \qedhere
  \end{description}
\end{proof}

\subsubsection{Diameters in the Recursive Partition}
\label{sec:rgg-nice:rec:diam}
Next, we want to show that the recursive partition a.a.s.\ has
size-dependent diameters, i.e., that similarly sized blocks have
similar diameters and these are smaller for smaller blocks.  Clearly
this holds for the geometric diameters of squares, so it remains to
apply the stretch bounds from \cref{sec:rgg-nice:stretch}.
Importantly, we that for every block bounds on the diameter hold with
sufficiently high probability, such that they a.a.s.\ hold for all
blocks of the recursive partition.

Conditional on the stretch event, for every pair of vertices the graph
distance is not much larger than necessary based on the geometric
distance.  With an upper bound for the geometric diameter of each
block, this directly translates to an upper bound for the graph
diameter.  To also get a lower bound, we need to show that each block
also contains vertices with almost diametric geometric distance, i.e.,
there are vertices close to two opposite corners of the block.  To
this end, we introduce the following lemma, which shows that any
region with sufficiently large area is likely to contain at least one
vertex.

\begin{lemma}\label{lem:rgg_region_not_empty}
  Let $G∈\mathcal{G}(X, n, r)$ be a random geometric graph with $n$
  vertices and connection radius $r$ on the ground space
  $X∈\{\mathcal{S}, \mathcal{T}\}$.  Let $R \subseteq \mathcal{S}$ be a
  region with area $a$.  We have
  $\Pr{|V(G) \cap R| > 0} \ge 1-e^{-a}$.
\end{lemma}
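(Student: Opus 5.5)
We bound the complementary event $\Pr{|V(G)\cap R| = 0}$ directly, using that the $n$ vertices of $G$ are drawn independently and uniformly from the ground space $X$, which has area $n$ (side length $\sqrt{n}$). For the torus we identify $X$ with $[0,\sqrt{n})^2$, so in both cases $R$ is a measurable subset of a region of total area $n$. We may assume $a \le n$; otherwise the statement is vacuous because $R \subseteq \mathcal{S}$ forces $a \le n$.

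First, for a single vertex $v_i$, the probability that $v_i \notin R$ is exactly $1 - \frac{a}{n}$, because the position is uniform on a region of area $n$. Second, by independence of the $n$ positions,
\[
  \Pr{|V(G)\cap R| = 0} = \parens*{1 - \frac{a}{n}}^{n}.
\]
Third, we apply the elementary inequality $1 - x \le e^{-x}$, valid for all real $x$, with $x = a/n$. This gives $\parens*{1-\frac{a}{n}}^n \le e^{-a}$, and hence $\Pr{|V(G)\cap R| > 0} \ge 1 - e^{-a}$.

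For the Poisson variant $\tilde{\mathcal{G}}(X,n,r)$, the count $|V(G)\cap R|$ is Poisson with mean $a$. So the probability of no vertex in $R$ is exactly $e^{-a}$, and the same bound holds with equality. The argument has no real obstacle. The only points to check are that the ground space indeed has area $n$, so that the per-vertex probability is $a/n$, and that the inequality $1-x\le e^{-x}$ is applied in the correct direction.
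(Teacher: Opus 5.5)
Your proposal is correct and follows essentially the same route as the paper: compute $\Pr{|V(G)\cap R| = 0} = (1-\frac{a}{n})^n$ from independence and uniformity, then bound it by $e^{-a}$ using $1+x \le e^{x}$. The remark on the Poisson variant and the use of $a \le n$ to ensure $1-\frac{a}{n}\ge 0$ (so raising to the $n$-th power is valid) are harmless additions.
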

\begin{proof}
  Let $X$ be a random variable for the number of vertices in $R$.  We
  have
  \begin{align*}
    \Pr{X>0} &= 1-\Pr{X=0}
              = 1 - \Pr{\bigcap_{v∈V} v\notin R}.
    \intertext{These events are independent and for each $v∈V$ we have
    $\Pr{v \notin R} = 1-\frac{a}{n}$.  Thus we get}
    \Pr{X>0} &= 1 - \parens*{1 - \frac{a}{n}}^{n}.
  \end{align*}
  For any $x$ we have $1+x \le e^{x}$ and thus
  $1-\frac{a}{n} \le e^{-a/n}$ and $(1-\frac{a}{n})^{n} \le e^{-a}$,
  which concludes the proof.
\end{proof}

Note that in particular a region with area $c \cdot \log n$ is
non-empty with probability at least $1 - n^{-c}$.  We use this to show
bounds on the diameter of individual blocks, first considering square-RGGs.

\begin{lemma}\label{lem:rgg_diam}
  Let $G \sim \mathcal{G}(\mathcal{S}, n, r)$ be a random geometric graph
  with connection radius $r∈ω(\log^{3/4} n)$.  Further, let
  $S\subseteq \mathcal{S}$ be an axis-aligned square of side length
  $s > r \log n$.  Then, conditional on the stretch event, we have
  \[
    \diam_{G[V(G)∩S]} \le \ceil*{\frac{\sqrt{2}s}{r} (1+Θ(r^{-4/3}))}.
  \]
  Further, for any constant $C>0$ we have with probability at least
  $1-n^{-C}$
  \[
    \diam_{G[V(G)∩S]} \ge \frac{\sqrt{2}s}{r} - 1.
  \]
\end{lemma}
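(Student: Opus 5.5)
The plan splits into the two bounds. For the upper bound we condition on the stretch event of \cref{lem:stretch} and fix two vertices $u,v \in V(G)\cap S$. If $d_\mathbb{E}(u,v) \le r$, they are adjacent, so $\dist_{G[V(G)\cap S]}(u,v)=1$. Otherwise \cref{lem:stretch} provides a $(u,v)$-path of length at most $\ceil*{\frac{d_\mathbb{E}(u,v)}{r}(1+s)}$ inside every axis-aligned square containing $u$ and $v$, in particular inside $S$. That path is therefore a path in $G[V(G)\cap S]$.

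Since $d_\mathbb{E}(u,v) \le \sqrt{2}s$, it remains to bound the error term. If $d_\mathbb{E}(u,v) > r\log n$, then $s\in O(r^{-4/3})$ directly. If $r < d_\mathbb{E}(u,v) \le r\log n$, the factor is $1+O(r^{-4/3}\log n)$, so the length is at most $\log n\,(1+o(1))$. This is at most $\frac{\sqrt{2}s}{r}$ because $s > r\log n$ gives $\frac{\sqrt2 s}{r} > \sqrt2\log n$. So in both cases the distance is bounded by $\ceil*{\frac{\sqrt{2}s}{r}(1+O(r^{-4/3}))}$, and taking the maximum over pairs gives the claimed diameter bound. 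The only delicate point is this short-distance case, where the $\log n$ in the error term must be absorbed using $s>r\log n$.

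For the lower bound, the idea is to find two vertices near opposite corners of $S$ and apply \cref{eq:stretch-trivial}. That inequality also holds in the induced subgraph, since distances only grow when passing to a subgraph. Place small squares $Q_1,Q_2 \subseteq S$ of side $t$ at two opposite corners of $S$. Any $u\in Q_1$ and $v\in Q_2$ then satisfy $d_\mathbb{E}(u,v)\ge \sqrt{2}(s-t)$, hence
\[
  \dist_{G[V(G)\cap S]}(u,v)\ \ge\ \frac{\sqrt{2}(s-t)}{r} = \frac{\sqrt2 s}{r}-\frac{\sqrt2 t}{r}.
\]
Choosing $t = r/\sqrt{2}$ makes the loss exactly $1$. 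Each $Q_i$ has area $r^2/2 \in \omega(\log^{3/2} n)$, so for large $n$ its area exceeds $(C+1)\log n$. By \cref{lem:rgg_region_not_empty}, each $Q_i$ is then empty with probability at most $n^{-C-1}$. A union bound over the two squares gives that both are nonempty with probability at least $1-2n^{-C-1}\ge 1-n^{-C}$, and on that event the lower bound holds. This step is routine; the main obstacle of the whole lemma is the error-term bookkeeping in the upper bound.
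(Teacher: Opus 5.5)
Your overall strategy is the same as the paper's. The upper bound comes from the square-restricted stretch lemma (\cref{lem:stretch}) applied to $S$. The lower bound comes from two nonempty corner regions combined with \cref{eq:stretch-trivial}. Your explicit treatment of pairs with $r<d_\mathbb{E}(u,v)\le r\log n$ is correct: you absorb the $\log n$ in the error term using $r^{-4/3}\log n\in o(1)$ and $\frac{\sqrt2 s}{r}>\sqrt2\log n$. This fills in a step the paper only asserts ("directly follows").

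There is, however, a concrete error in the lower bound. Translate $S$ to $[0,s]^2$, with $Q_1=[0,t]^2$ and $Q_2=[s-t,s]^2$. Then the coordinates of $u\in Q_1$ and $v\in Q_2$ differ by at least $s-2t$, not $s-t$. For example, $u=(t,t)$ and $v=(s-t,s-t)$ are at distance exactly $\sqrt2(s-2t)$. So the correct estimate is $d_\mathbb{E}(u,v)\ge\sqrt2 s-2\sqrt2 t$. With your choice $t=r/\sqrt2$ the loss is $2$, not $1$, and you only get $\diam_{G[V(G)\cap S]}\ge\frac{\sqrt2 s}{r}-2$. Integrality of graph distances does not recover the missing unit: if $\frac{\sqrt2 s}{r}=10.5$, you get distance at least $9$, but the statement needs $9.5$.

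Because you chose $t=\Theta(r)$, the constant matters here. The fix is immediate. You can take $t=r/(2\sqrt2)$: the area is $r^2/8\in\omega(\log^{3/2}n)$, so your probability estimate and union bound go through unchanged. Alternatively, do as the paper does and use corner squares of side $\Theta(\sqrt{\log n})=o(r)$. Then the loss is $o(1)$ and hence at most $1$ for large $n$, and the exact constant in the distance estimate becomes irrelevant.
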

\begin{proof}
  The upper bound directly follows via \cref{lem:stretch}.  Note that
  this is one of the places where we need our slightly strengthened
  version of the lemma, as we need a path using only vertices in $S$.

  For the lower bound let $x = \sqrt{C \cdot \log n + \log 2}$.  We
  consider two squares $C_{1}$, $C_{2}$ of side length $x$ located
  at two opposite corners of $S$.  The area of these squares is
  $x² = C \cdot \log n + \log 2$.  Thus, by
  \cref{lem:rgg_region_not_empty} the probability that both $C_{1}$
  and $C_{2}$ are non-empty is at least
  $1-2e^{-C \cdot \log n - \log 2} = 1 - n^{-C}$.  In this case, let
  $v_{1}∈V(G)∩C_{1}$ and $v_{2}∈V(G)∩C_{2}$ be such vertices.  Then
  the distance between these vertices is at least
  $d_{\mathbb{E}}(v_{1}, v_{2}) \ge \sqrt2 s - 2x$.  With
  \cref{eq:stretch-trivial} this means that
  \begin{align*}
    \diam_{G[V(G)∩S]} \ge d_{G[V(G)∩S]}(v_1, v_2) \ge \frac{\sqrt2 s}{r} - \frac{2x}{r}.
  \end{align*}
  By assumption $r∈ω(\sqrt{\log n})$.  Hence, $\frac{2x}{r}∈o(1)$,
  which concludes the proof.
\end{proof}

We obtain analogous bounds on the diameter of (square regions of)
torus RGGs.

\begin{lemma}\label{lem:torus_rgg_diam}
  Let $G\sim\mathcal{G}(\mathcal{T}, n, r)$ be a torus random
  geometric graph with $r∈ω(\log^{3/4} n)$.  Further, let
  $S\subseteq \mathcal{T}$ be an axis aligned square of side length
  $s$ such that $r \log n < s < \sqrt{n}$.  Then, conditional on
  the stretch event, we have
  \begin{align*}
    \diam_G \le \ceil*{\frac{\sqrt{n}}{\sqrt{2} r} (1+Θ(r^{-4/3}))}
    \intertext{and}
    \diam_{G[V(G)∩S]} \le \ceil*{\frac{\sqrt{2}s}{r} (1+Θ(r^{-4/3}))}.
  \end{align*}
  Further, for any constant $C>0$ we have with probability at least
  $1-n^{-C}$
  \begin{align*}
    \diam_G \ge \frac{\sqrt{n}}{\sqrt2r} - 1.
    \intertext{and}
    \diam_{G[V(G)∩S]} \ge \frac{\sqrt{2}s}{r} - 1.
  \end{align*}
\end{lemma}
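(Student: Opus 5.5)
The plan mirrors the proof of \cref{lem:rgg_diam}, with the torus stretch bound \cref{lem:torus_stretch} in place of \cref{lem:stretch}, and treats the whole graph and the square block separately.

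\textbf{Upper bound for $\diam_G$.} On the torus $\mathcal{T}=\mathcal{T}_{\sqrt n}$, any two points have toroidal distance at most $\sqrt{(\sqrt n/2)^2+(\sqrt n/2)^2}=\sqrt n/\sqrt2$. Take $v,w$ with $d_{\mathcal{T}}(v,w)>r$; otherwise their graph distance is at most $1$. Conditional on the stretch event, \cref{lem:torus_stretch} gives
\[
\dist_G(v,w)\le\ceil*{\tfrac{d_{\mathcal{T}}(v,w)}{r}(1+s)}.
\]
Since $\sqrt n/\sqrt2>r\log n$, the geometric distance is large for the relevant pairs, and the error term is $O(r^{-4/3})$ once $d>r\log n$. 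For pairs with $d\le r\log n$, the factor $O(r^{-4/3}\log n)$ contributes $O(\log n\cdot r^{-4/3}\log n)$ to the distance, which is $o(1)$ relative to $\sqrt n/r$. Hence
\[
\dist_G(v,w)\le\frac{\sqrt n}{\sqrt2 r}(1+\Theta(r^{-4/3})),
\]
up to absorbing lower-order terms, which I will handle by monotonicity in $d$.

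\textbf{Upper bound for $\diam_{G[V(G)\cap S]}$.} Because $s<\sqrt n$, the square $S$ embeds isometrically into the plane in Euclidean terms for the sub-square geometry, so two points of $S$ are at most $\sqrt2 s$ apart. The minimal axis-aligned square containing $v,w$ realizing their toroidal distance lies inside $S$. I must check this: the Euclidean distance inside $S$ may exceed the toroidal distance. Here I would instead use the rearrangement coupling from the proof of \cref{lem:torus_stretch}, choosing the copy $G_i$ in which $S$ appears as an unwrapped axis-aligned square. Then \cref{lem:stretch} applied in $G_i$ yields a path inside $S$ of length $\ceil*{\frac{d_{\mathbb{E}}}{r}(1+s)}\le\ceil*{\frac{\sqrt2 s}{r}(1+\Theta(r^{-4/3}))}$. \textbf{This is the main obstacle:} ensuring that, for a square $S$ possibly wrapping around the torus, the path guaranteed by the stretch event stays inside $S$. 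I would resolve it by noting that each of the four coupled square RGGs satisfies the square-containment clause of \cref{lem:stretch}, and that some coupled copy contains $S$ unwrapped, at least when $s\le\sqrt n/2$. For larger $s$, I would cover the case by a direct argument via the Euclidean bounding square within that copy.

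\textbf{Lower bounds.} For $\diam_{G[V(G)\cap S]}$, I place two small squares of side $x=\sqrt{C\log n+\log 2}$ at opposite corners of $S$. By \cref{lem:rgg_region_not_empty}, both contain vertices with probability at least $1-n^{-C}$. Paths inside $G[V(G)\cap S]$ have edges of Euclidean length at most $r$ measured in the unwrapped $S$, so \cref{eq:stretch-trivial} applied in $S$ gives a distance of at least $(\sqrt2 s-2x)/r\ge\sqrt2 s/r-1$, using $r\in\omega(\sqrt{\log n})$. For $\diam_G$, I take a vertex near a point $p$ and one near its antipode $p+(\sqrt n/2,\sqrt n/2)$, each within a disk of area $C\log n+\log2$. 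Their toroidal distance is then at least $\sqrt n/\sqrt2-2x$, and \cref{eq:stretch-trivial} gives $\dist_G\ge\sqrt n/(\sqrt2 r)-1$.
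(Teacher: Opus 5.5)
Your route is the paper's: the $\diam_G$ bounds come from \cref{lem:torus_stretch} plus two near-antipodal vertices, and the bounds for $G[V(G)\cap S]$ come from treating $S$ as an unwrapped Euclidean square and reusing \cref{lem:rgg_diam}. For $s\le\sqrt n/2$ your argument is complete. You do not even need to re-enter the coupling there. Any two points of such an $S$ have toroidal distance equal to their Euclidean distance in the unwrapped $S$, so a minimal square containing them fits inside $S$, and the minimal-square clause of \cref{lem:torus_stretch} applies directly.

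\textbf{The gap: your patch for larger $s$.} You defer the range $\sqrt n/2<s<\sqrt n$ to ``a direct argument via the Euclidean bounding square within that copy''. That cannot work, for two reasons.

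First, the four coupled copies only cut the torus at $x\in\{0,\sqrt n/2\}$ and $y\in\{0,\sqrt n/2\}$. A square whose $x$-projection is, say, $[0.8\sqrt n,1.6\sqrt n)$ modulo $\sqrt n$ has both $x$-cuts in its interior. So no copy contains it unwrapped, and the stretch event as defined yields no path inside $S$.

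Second, your lower-bound step (every edge of $G[V(G)\cap S]$ has length at most $r$ in the unwrapped $S$) fails once $s>\sqrt n-r$, because vertices on opposite sides of $S$ are then adjacent across the thin gap. In fact the stated lower bound is false in that regime. Take $s=\sqrt n-n^{-1}$: the region $\mathcal T\setminus S$ has area about $2n^{-1/2}$, so a.a.s.\ it contains no vertex and $G[V(G)\cap S]=G$. On the stretch event $G$ has diameter at most $\frac{\sqrt n}{\sqrt2 r}(1+o(1))$, roughly half of the claimed lower bound $\frac{\sqrt2 s}{r}-1$. So the bound fails with probability $1-o(1)$, not at most $n^{-C}$.

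So no repair exists for the full range $s<\sqrt n$. The paper's own one-line justification (``with a side length $s<\sqrt n$ we avoid paths or geodesics that wrap around'') has the same blind spot. It is valid for edges only when $s<\sqrt n-r$, and for geodesics only when $s\le\sqrt n/2$.

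The right fix is to state the sub-square bounds only for $r\log n<s\le\sqrt n/2$, which is exactly the range your argument handles. Nothing is lost downstream. In \cref{lem:partition_diam} every non-root quadtree part has side at most $\sqrt n/2$, and the root part is the whole torus, which your $\diam_G$ bounds cover.
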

\begin{proof}
  The geometric diameter of $\mathcal{T}$ is
  $\frac{\sqrt{2n}}{2}=\frac{\sqrt{n}}{\sqrt2}$ and with the stretch
  bounds on torus RGGs from \cref{lem:torus_stretch}, the bounds for
  $\diam_G$ follow analogously to \cref{lem:rgg_diam}.  The subgraph
  $G' = G[V(G)∩S]$ is equal in distribution an analogous subgraph of a
  square random geometric graph, as with a side length $s < \sqrt{n}$
  we avoid paths or geodesics that wrap around the torus
  $\mathcal{T}$.  Thus, the claimed bounds follow directly from
  \cref{lem:rgg_diam}.
\end{proof}

It remains to apply a union bound to show that the events from
\cref{lem:rgg_diam}, respectively \cref{lem:torus_rgg_diam}, likely
hold for each block.  In the following
\lcnamecref{lem:partition_diam}, we summarize the result for both the
square and torus setting.  Note that the geometric diameter of a side
length $s$ part $P$ of a quadtree partition is $\sqrt{2}s$ unless
$s = \sqrt{n}$ and the setting is on the torus.  In that case the
geometric diameter is $\frac{2s}{\sqrt{2}}$.

\begin{lemma}\label{lem:partition_diam}
  Let $G\sim\mathcal{G}(X, n, r)$ be a random geometric graph with ground
  space $X∈\{\mathcal{S}, \mathcal{T}\}$ and connection radius
  $r ∈ ω(\log^{3/4} n)$.  Further, let $α∈(0,1)$ be a constant such
  that $r∈o(n^{1/2-α/2})$, and let $\ell = α \log_{4}(n)$.  Then,
  asymptotically almost surely, for each subgraph $G'$ induced by a part $P$
  of the recursive partition $G[\mathcal{P}^{\ell}]$ we have
  \[
    \frac{d}{r}-1 \le \diam_{G'} \le \frac{d}{r} \cdot (1+s),
  \]
  where $d$ is the geometric diameter of $P$ in $X$ and
  $s∈O(r^{-4/3})$.  In particular, we get that
  $\diam_{G'}∈Θ(|V(G')|^{1/2}r^{-1})$, $G[\mathcal{P}^\ell]$ has
  size-dependent diameters, and leaf-blocks of $G[\mathcal{P}^\ell]$ have
  diameter in $Θ(n^{1/2 - α/2}r^{-1})$.
\end{lemma}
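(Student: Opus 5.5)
The plan is a union bound over all parts of $\mathcal{P}^{\ell}$, applying \cref{lem:rgg_diam} (square) or \cref{lem:torus_rgg_diam} (torus) to each part; the three ``in particular'' claims then follow from \cref{lem:whp_parts_bound}.

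\textbf{Step 1: applicability.} Parts of $\mathcal{P}^\ell$ have side length $s = \sqrt{n}\,2^{-i}$ with $i \le \ell = \alpha\log_4 n$. Hence $s \ge n^{1/2-\alpha/2}$. Since $r \in o(n^{1/2-\alpha/2})$ and $r<n^\rho$ with $\rho<1/2$, I expect $s > r\log n$ for every part. This needs a polylogarithmic gap between $r$ and $n^{1/2-\alpha/2}$. If $r$ is only barely $o(n^{1/2-\alpha/2})$, this is the main obstacle. I would first try to argue that the hypothesis $s>r\log n$ in \cref{lem:rgg_diam} is only used to replace the error term $O(r^{-4/3}\log n)$ of \cref{lem:stretch} by $O(r^{-4/3})$. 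For parts with $r < s \le r\log n$ one could accept the weaker error, which still gives $\diam_{G'}\in\Theta(d/r)$. Failing that, I would absorb the $\log n$ factor into a slightly smaller effective $\alpha$.

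\textbf{Step 2: the two-sided bound.} I condition on the stretch event $\mathcal{E}_{\mathrm{stretch}}$ of \cref{lem:stretch} or \cref{lem:torus_stretch}, which holds a.a.s. The upper bound $\diam_{G'} \le \frac{d}{r}(1+s)$ then holds simultaneously for all parts, because that event covers all pairs at once. This relies on the clause that paths stay inside axis-aligned (minimal) squares, so every path lies in $G'$. The ceiling is absorbed into the $(1+O(r^{-4/3}))$ factor, or simply kept, since $d/r\to\infty$. For the lower bound $\frac{d}{r}-1$, I apply the probabilistic part of the lemmas with $C=2$. There are $O(4^\ell)=O(n^\alpha)\subseteq O(n)$ parts, so a union bound fails with probability $O(n^{\alpha-2})=o(1)$.

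For the torus root part, whose geometric diameter is $\sqrt n/\sqrt2$ rather than $\sqrt2 s$, I use the separate $\diam_G$ bounds of \cref{lem:torus_rgg_diam}. For all proper parts, $s<\sqrt n$ holds and the square-part bounds apply. A final union bound combines the stretch event, the lower-bound events, and the high-probability event of \cref{lem:whp_parts_bound}, giving the a.a.s.\ statement.

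\textbf{Step 3: consequences.} By \cref{lem:whp_parts_bound}, $|V(G')| = s^2(1\pm o(1))$, so $d=\Theta(s)=\Theta(|V(G')|^{1/2})$. Since $d/r\to\infty$, the $-1$ term is negligible, and we get $\diam_{G'}\in\Theta(|V(G')|^{1/2}r^{-1})$ with universal constants. Size-dependent diameters follow at once: $|A|\in O(|B|)$ if and only if $|A|^{1/2}\in O(|B|^{1/2})$, which holds if and only if $D_A\in O(D_B)$. Leaf blocks have $\Theta(n^{1-\alpha})$ vertices by \cref{lem:partition_separators}, so their diameter is $\Theta(n^{1/2-\alpha/2}r^{-1})$.
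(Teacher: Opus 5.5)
Your proposal is correct and follows essentially the same route as the paper: condition on the stretch event, apply \cref{lem:rgg_diam} (resp.\ \cref{lem:torus_rgg_diam}) to each of the $O(n^{\alpha})$ parts with a union bound over the lower-bound events, and read off the $\Theta$-consequences from the vertex counts of \cref{lem:whp_parts_bound} and \cref{lem:partition_separators}. Your concern about the hypothesis $s > r\log n$ is justified, since $r\in o(n^{1/2-\alpha/2})$ alone does not imply it, and the paper applies \cref{lem:rgg_diam} without checking it. It is harmless in the paper's application ($r=n^{\rho}$, which leaves a polynomial gap), and your first workaround still gives all the $\Theta$-statements, though not the literal $O(r^{-4/3})$ error term for those parts. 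Likewise, the ceiling should simply be kept rather than absorbed, because $1\in O(d\,r^{-7/3})$ need not hold.
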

\begin{proof}
  The stretch event holds asymptotically almost surely on square and
  torus random geometric graphs, by \cref{lem:stretch} and
  \cref{lem:torus_stretch}.  Thus, by \cref{lem:rgg_diam},
  respectively \cref{lem:torus_rgg_diam}, the diameter of the subgraph
  induced by a part $P$ of $\mathcal{P}^\ell$ has diameter as claimed
  with probability $1-n^{-C}$ for any constant $C>0$.  As there are
  only $4^\ell = n^α$ such parts, a union bound gives that all
  diameters fall within the claimed range with high probability.  To
  conclude, recall that w.h.p.\ every block with side length $x$
  induced by a square has $x^2 (1+o(1))$ vertices
  (\cref{lem:whp_parts_bound}).  Thus, the diameter of a block with
  $k$ vertices is in $Θ(k^{1/2}r^{-1})$.  Further with
  $r∈o(n^{1/2 - α/2})$ leaf blocks w.h.p.\ have $O(n^{1-α})$ vertices,
  by \cref{lem:partition_separators}.  Thus their diameter is in
  $Θ(n^{1/2 - α/2}r^{-1})$.
\end{proof}

\subsubsection{Fragmentation}
\label{sec:rgg-nice:rec:frag}

We show that the recursive partition $G[\mathcal{P}^{\ell}]$ has
bounded fragmentation.  Again, we start with the purely geometric
setting and prove that a disk does not intersect too many squares of a
grid.

\begin{lemma}\label{lem:num_square_cap_disk}
  Consider an axis aligned grid tiling of $ℝ²$ with squares of side
  length $s>0$ and a disk $D$ of radius $r>0$.  Then the number of
  squares intersected by $D$ is at most $O\parens*{\frac{r²}{s²}+1}$.
\end{lemma}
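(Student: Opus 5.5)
The plan is a bounding-box and area-comparison argument. Let $D$ have center $c$ and radius $r$. Every square of the grid that meets $D$ contains a point at Euclidean distance at most $r$ from $c$. Since a grid square has diameter $\sqrt{2}s$, every point of such a square lies within distance $r+\sqrt{2}s$ of $c$. So every intersected square is contained in the enlarged disk $D'$ with center $c$ and radius $r+\sqrt{2}s$.

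Next compare areas. The grid squares are interiors-disjoint, each has area $s^2$, and all intersected squares lie inside $D'$. Hence their number is at most
\[
  \frac{\pi (r+\sqrt{2}s)^2}{s^2} \le \frac{2\pi (r^2 + 2s^2)}{s^2} = 2\pi\frac{r^2}{s^2} + 4\pi ,
\]
using $(a+b)^2 \le 2a^2+2b^2$. This is $O\parens*{\frac{r^2}{s^2}+1}$, as claimed.

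As a sanity check, one can instead count via the axis-aligned bounding box $[c_x-r,c_x+r]\times[c_y-r,c_y+r]$ of $D$. Any intersected square also meets this box. The box meets at most $\ceil*{2r/s}+1$ columns of the grid and as many rows, so the count is at most $(2r/s+2)^2 \le 8r^2/s^2+8$. This gives the same bound and avoids area arguments entirely; I would present whichever reads more cleanly, probably the column/row count.

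I do not expect a genuine obstacle. The only point needing care is the additive $+1$ term, which handles the regime $r \ll s$: there a disk can still meet up to four squares, for instance when centered at a grid vertex. Both arguments above cover this automatically through the constant term.
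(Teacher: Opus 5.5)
Your proof is correct and is essentially the paper's argument. The paper encloses every intersected square in the enlarged axis-aligned square $[-(r+s),r+s]^2$ (instead of your disk of radius $r+\sqrt{2}s$) and divides its area $(2r+2s)^2$ by $s^2$, which gives exactly your row/column bound $(2r/s+2)^2$, so both of your variants match the paper up to cosmetic choices.
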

\begin{proof}
  Without loss of generality assume that $D$ is centered at the
  origin.  If a square
  $Q = [x_{1}, x_{1} + s] \times [y_{1}, y_{1}+1]$ intersects $D$,
  then it must be contained in the square
  $Q^{\star} = [-(r+s), r+s] \times [-(r+s), r+s]$.  The area of
  $Q^{\star}$ is $(2r+2s)²$, while each square only has area $s²$.
  This means that $\frac{(2r+2s)²}{s²} ∈ O\parens*{\frac{r²}{s²}+1}$
  non-intersecting squares of the tiling can lie inside $Q^{\star}$.
\end{proof}

We translate this to the setting of the recursive partition of a
random geometric graph and obtain the following theorem about its
fragmentation.

\begin{lemma}\label{lem:rgg:bounded_fragmentation}
  Let $G\sim\mathcal{G}(X, n, r)$ be a random geometric graph with
  ground space $X∈\{\mathcal{S}, \mathcal{T}\}$ and connection radius
  $r ∈ ω(\log^{3/4} n)$.  Further, for a constant $α∈(0,1)$ such that
  $r∈o(n^{1/2-α/2})$, let $\ell = α \log_{4}(n)$.  Then,
  asymptotically almost surely, the recursive partition
  $G[\mathcal{P}^\ell]$ has bounded-fragmentation.
\end{lemma}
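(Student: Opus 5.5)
We work on the intersection of the a.a.s.\ stretch event $\mathcal{E}_{\mathrm{stretch}}$, the w.h.p.\ events of \cref{lem:whp_parts_bound}, and the a.a.s.\ event of \cref{lem:partition_diam}. On this event, every block induced by a level-$i$ part with side length $s_i = \sqrt{n}\,2^{-i}$ has graph diameter $\Theta(s_i/r)$, except possibly the root on the torus, whose diameter differs only by a constant factor. We must show that every graph ball $B_G(v,\rho)$ meets only $O(1)$ blocks of diameter $\Theta(\rho)$.

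The first step translates the graph ball into geometry. By \cref{eq:stretch-trivial}, every vertex $w$ with $\dist_G(v,w)\le \rho$ satisfies $d_X(v,w)\le \rho r$. Hence $B_G(v,\rho)$ lies in the geometric disk $D$ of radius $\rho r$ around $v$; on the torus this disk is taken in the toroidal metric.

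The second step identifies the relevant levels. A block of diameter $\Theta(\rho)$ has, by \cref{lem:partition_diam}, side length $s_i\in\Theta(\rho r)$, more precisely $c_1\rho r\le s_i\le c_2\rho r$ for constants $c_1,c_2$. Since side lengths halve from level to level, only $O(\log(c_2/c_1))=O(1)$ levels $i$ qualify. The levels of $\mathcal{P}^\ell$ tile the ground space with squares of side $s_i$. For each qualifying level we apply \cref{lem:num_square_cap_disk} with disk radius $\rho r$ and side length $s_i$, which shows that $D$ meets $O((\rho r)^2/s_i^2+1)=O(1)$ squares. A block can intersect $B_G(v,\rho)$ only if its part intersects $D$, so summing over the $O(1)$ levels gives $O(1)$ blocks.

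The main obstacle is the torus together with boundary effects. On the torus the toroidal disk may wrap around, but it is still the image of a Euclidean disk of the same radius under the quotient map. The quadtree tiling lifts to a periodic grid tiling of $\mathbb{R}^2$, so the count from \cref{lem:num_square_cap_disk} only becomes an upper bound. On the square we simply restrict the infinite grid to $\mathcal{S}$. A remaining issue is making the constants uniform: the hidden constants in ``diameter $\Theta(\rho)$'' must be fixed before applying the diameter bounds. We handle this by using the explicit two-sided bound $d/r-1\le \diam_{G'}\le (d/r)(1+o(1))$ from \cref{lem:partition_diam}, with $d/r\to\infty$ because $s\ge n^{1/2-\alpha/2}$ and $r\in o(n^{1/2-\alpha/2})$. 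This makes the constants $c_1,c_2$ depend only on the constants in $\Theta(\rho)$. The probability statement is a.a.s., inherited from the events we conditioned on.
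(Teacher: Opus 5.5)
Your proposal is correct and follows essentially the same argument as the paper. You place the graph ball inside a geometric disk of radius $O(\rho r)$, use \cref{lem:partition_diam} to confine blocks of diameter $\Theta(\rho)$ to $O(1)$ quadtree levels with side length $\Theta(\rho r)$, and apply \cref{lem:num_square_cap_disk} per level after extending each level to a grid tiling of $\mathbb{R}^2$. Your additional care only makes explicit what the paper leaves implicit: the disk containment is deterministic via \cref{eq:stretch-trivial}, torus wrap-around is handled via lifts, and the constants are fixed uniformly.
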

\begin{proof}
  We consider a set of vertices $A$ that is contained in a ball of
  radius $k$ in $G$.  Further, let $c_1>1$ be a constant and denote by
  $\mathcal{B}_k$ the blocks of $\mathcal{P}^\ell$ with diameter
  between $k/c_1$ and $kc_1$.  We need to show that only a bounded
  number of blocks of $\mathcal{B}_k$ intersect $A$.


  We condition on the stretch event, which holds asymptotically almost
  surely (\cref{lem:stretch,lem:torus_stretch}).  Then, $A$ is
  contained in a geometric ball $B$ of radius $O(kr)$.  By
  \cref{lem:num_square_cap_disk}, $B$ intersects only
  $O\parens*{\frac{(kr)²}{s²}}$ squares of side length $s$ in any grid
  tiling of $ℝ²$.  This upper bound also applies to the number of
  blocks intersecting $B$ in each level of $\mathcal{P}^\ell$, as
  these can be extended into a tiling of $ℝ²$.  The blocks of
  $\mathcal{B}_k$ come from $O(\log c_1)$ many different levels in
  $\mathcal{P}^\ell$.  As they have diameter in $Θ(k)$, the geometric
  diameter and thus also the side length of these blocks is a.a.s.\ in
  $Θ(kr)$ (\cref{lem:partition_diam}).  Thus, the number of blocks of
  $\mathcal{B}_k$ intersecting $B$ is in
  $O\big(\frac{(kr)²}{(kr)²}\log c_1\big) = O(1)$, which concludes the
  proof.
\end{proof}

\subsection{Local Diametric Partners and Few Corners}
\label{sec:rgg-nice:corner}
In this section we show that \cref{item:prop-diam-partners} holds
a.a.s.\ on both square and torus RGGs while
\cref{item:prop-few-corners} additionally holds on square RGGs.  We
begin by considering the purely geometric setting and give a proof for
\cref{item:geom-diam-partners} on the unit square.  



\begin{figure*}
  \centering
  \begin{subfigure}[b]{0.3\textwidth}
      \includegraphics[page=1]{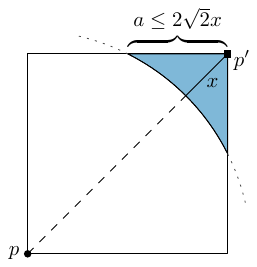}
      \subcaption{}
      \label{fig:geom_fringe_square:a}
  \end{subfigure}
  ~
  \begin{subfigure}[b]{0.3\textwidth}
      \includegraphics[page=2]{figs/square_geom_fringe.pdf}
      \subcaption{}
      \label{fig:geom_fringe_square:b}
  \end{subfigure}
  ~
  \begin{subfigure}[b]{0.3\textwidth}
      \includegraphics[page=3]{figs/square_geom_fringe.pdf}
      \subcaption{}
      \label{fig:geom_fringe_square:c}
  \end{subfigure}
  \caption{Part (a): visualization of point $P$ in corner of the unit
    square $S$, with the set of points $Q \subseteq S$ with distance
    at least $\sqrt{2} - x$ from $P$ shown in blue.  Part (b): right
    triangle used in the proof of \cref{lem:square_geom_fringe}.  Part
    (c): analogous situation on the torus, see also
    \cref{lem:torus_geom_fringe}.}
  \label{fig:geom_fringe_square}
\end{figure*}

\begin{lemma}\label{lem:square_geom_fringe}
  Let $S$ be the unit square and $p$ a point on $S$.  For every $x>0$,
  the set of points with distance at least $\sqrt{2} - x$ from $p$ can
  be covered by a disk of radius $O(x)$.
\end{lemma}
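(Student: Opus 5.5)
Since the unit square has diameter $\sqrt{2}$, the plan is to show that the far region $F = \{q \in S \mid d_\mathbb{E}(p,q) \ge \sqrt{2}-x\}$ is contained in a small neighborhood of a single corner of $S$, namely the corner farthest from $p$. First dispose of large $x$: if $x \ge c_0$ for a fixed constant $c_0$ (say $c_0 = \sqrt{2}/10$), then $S$ itself is covered by a disk of radius $\sqrt{2}/2 \in O(x)$, so we may assume $x < c_0$. In that case $F$ can be nonempty only if $p$ itself lies near a corner. Indeed, the farthest point of $S$ from $p$ is always some corner, and $\max_{q\in S} d(p,q)^2 = \max(p_x,1-p_x)^2+\max(p_y,1-p_y)^2$. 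If $F \neq \emptyset$, this quantity must be at least $(\sqrt{2}-x)^2 \ge 2 - 2\sqrt{2}x$, which forces $\max(p_x,1-p_x) \ge 1 - O(x)$ and likewise for $p_y$.

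After applying a symmetry we may assume $p_x, p_y \le \epsilon$ with $\epsilon \in O(x)$, so that $p$ lies near the corner $(0,0)$. Now take any $q \in F$ and write $q = (1-a, 1-b)$ with $a,b \ge 0$. Then $d(p,q) \le d(0,q) \le \sqrt{(1-a)^2+(1-b)^2} \le \sqrt{2-2(a+b)+a^2+b^2}$. Since $a,b\le 1$, we have $a^2 \le a$, so $d(0,q)^2 \le 2-(a+b)$. Combined with $d(p,q)\ge \sqrt{2}-x$, this yields $2-(a+b) \ge 2-2\sqrt{2}x$, i.e.\ $a+b \le 2\sqrt{2}x$. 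This right-triangle-type estimate matches what the figure suggests. It places $F$ inside the triangle at corner $(1,1)$ with legs $O(x)$, which is covered by a disk of radius $O(x)$ centered at $(1,1)$. Note that this step does not even require $p$ to be near a corner: the bound holds for any $q$ given $d(0,q)\ge d(p,q)$, and we only need the appropriate corner of $S$ to be the one opposite $p$.

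The one real obstacle is making the corner choice rigorous for general $p$. I would pick the corner $c$ of $S$ closest to $p$, so that $p$ lies in the quarter square $[0,1/2]^2$ after symmetry. For every $q\in S$ we then want $d(p,q) \le d(c',q)$ for the appropriate reference point, and the clean way to get this is to use the far corner $c^*=(1,1)$ instead. The triangle inequality gives $d(p,q) \le d(p,c^*) - $ nothing useful directly, so I would instead use that $p\in[0,1/2]^2$ and $q$ lies in the region where the coordinates satisfy $q_x-p_x \le q_x$ whenever $q_x \ge p_x$, together with a short case check when $q_x < p_x$. In that second case $|q_x-p_x| \le 1/2$, so $d(p,q)^2 \le 1/4 + 1 < (\sqrt{2}-x)^2$ for small $x$, which shows such $q$ are not in $F$. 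This gives $d(p,q)\le d(0,q)$ for all relevant $q$, and the bound above then completes the proof.
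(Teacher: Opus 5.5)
Your proof is correct, but it takes a different route from the paper's. The paper first moves $p$ to the bottom-left corner, justified by the claim that this only enlarges the far set $Q$. It then describes $Q$ as the region near the opposite corner $p'$ bounded by two segments of length $a$ and a circular arc, and solves $(1-a)^2+1=(\sqrt2-x)^2$ exactly to get $a=1-\sqrt{x^2-2\sqrt2x+1}\le 2\sqrt2x$.

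You instead keep $p$ anywhere in $[0,1/2]^2$. You show that every far point $q$ dominates $p$ coordinatewise, since otherwise $d(p,q)^2<5/4<(\sqrt2-x)^2$ once $x<\sqrt2/10$. From this you get $d(p,q)\le d(0,q)$ for exactly those $q$. You then replace the exact quadratic solve by $a^2\le a$ and $b^2\le b$, which bounds the $L^1$-distance $a+b$ from $q$ to $(1,1)$ by $2\sqrt2x$, the same constant.

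Your domination step is the rigorous content of the paper's reduction to the corner. Read literally, that reduction fails for $x$ slightly below $\sqrt2-1$: with $p=(0.45,0)$, $q=(0,1)$ and $x=0.35$, the point $q$ is far from $p$ but not from $(0,0)$. This is harmless in the paper only because $x$ is then a constant, and your smaller threshold $c_0$ avoids the issue altogether. You also never need to identify the shape of $Q$ or check that its points farthest from $p'$ are the segment endpoints. In exchange, the paper's computation gives the exact extent of $Q$ when $p$ is a corner.

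For the write-up, drop your first-paragraph argument that $p$ must lie near a corner, since it is never used. Also do not assert $d(p,q)\le d(0,q)$ in your second paragraph before proving it: it is false for general $q$ (e.g.\ $p=(\epsilon,0)$, $q=(0,1)$) and only holds for $q$ in the far set, via your third paragraph. Ordering the proof as symmetry reduction, then the domination claim, then the $a^2\le a$ estimate gives a short, clean argument.
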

\begin{proof}
  Denote the set of points with distance at least $\sqrt{2} - x$ from
  $p$ by $Q$ and let $r = \sqrt{2} - x$.  Without loss of generality,
  we assume that $p$ is in the left and lower quadrant of $S$.  Note
  that moving $p$ towards the bottom left corner only increases $Q$
  inclusion-wise, so we can even assume that $p$ is located in the
  corner.  Then, $Q$ forms a region around the opposite corner $p'$ of
  $p$, see \cref{fig:geom_fringe_square:a}.  To show that $Q$ is
  contained in a disk of radius $O(x)$, we make a case distinction on
  $x$.  We first consider large $x$.  For $x ≥ \sqrt{2} - 1$, any
  point on the square $S$ has distance at most
  $\sqrt{2} ≤ x \cdot \frac{\sqrt{2}}{\sqrt{2}-1}∈O(x)$ from $p'$.
  
  Otherwise, we have $x < \sqrt{2} - 1$.  Then, the boundary of $Q$
  consists of two line segments and a circular arc, see also
  \cref{fig:geom_fringe_square:a}.  Denote the length of these line
  segments by $a$.  Every point $q∈Q$ is at distance at most $a$ from
  the corner $p'$ of $S$ opposite to $p$, so it remains to find an
  upper bound on $a$.  Applying the Pythagorean theorem (see
  \cref{fig:geom_fringe_square:b}), we obtain
  \[
    (1-a)² + 1² = (\sqrt{2} - x)²
  \]
  and thus
  \[
    a² - 2a = x² - 2 \sqrt{2} x.
  \]
  This quadratic equation has two solutions,
    \[
    a = 1 - \sqrt{x² - 2\sqrt{2}x + 1}
    \quad\text{or}\quad
    a = 1 + \sqrt{x² - 2\sqrt{2}x + 1}.
    \]
    We have $a ≤ 1$ as $S$ is a unit square, so only the first
    solution is relevant.  We derive an upper bound as follows.  Let
    $t = x² - 2\sqrt{2}x +1$.  Then
  \begin{align*}
    a & = 1 - \sqrt{t} 
      = \frac{
        (1 - \sqrt{t})
        (1 + \sqrt{t})
        }{
        1 + \sqrt{t}
        }
      = \frac{
        1 - t
        }{
        1 + \sqrt{t}
        }
      = \frac{
        x (2\sqrt{2} - x)
        }{
        1 + \sqrt{t}
        }
       ≤ 2 \sqrt{2} x.
  \end{align*}
  To summarize, either $x ≥ \sqrt{2} - 1$ and any point on the square
  $S$ has distance at most $O(x)$ from $p'$, or $x < \sqrt{2} - 1$ and
  by the derivation above, the distance between $Q$ and $p'$ is at
  most $2 \sqrt{2} x ∈ O(x)$.  In both cases, $Q$ is contained in a
  disk of radius $O(x)$.
\end{proof}

The same argument works analogously on the torus, yielding the
following.

\begin{lemma}\label{lem:torus_geom_fringe}
  Let $T$ be the unit torus and $P∈T$ a point.  For $x > 0$, the set
  of points with distance at least $\frac{\sqrt{2}}{2} - x$ from $P$ is
  contained in a disk of radius $O(x)$
\end{lemma}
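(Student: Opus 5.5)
The plan is to reduce to the antipodal point, exactly mirroring the proof of \cref{lem:square_geom_fringe}. Represent the unit torus as $[0,1)^2$ with toroidal distance, and use translation invariance to assume $P$ is the origin. Then $d_T(P,q) = \sqrt{\min(q_x,1-q_x)^2 + \min(q_y,1-q_y)^2}$, so each coordinate contributes at most $\frac12$. Equality $d_T(P,q) = \frac{\sqrt{2}}{2}$ holds only for the antipodal point $P' = (\frac12,\frac12)$. The goal is to show that the set $Q$ of points with $d_T(P,q) \ge \frac{\sqrt2}{2} - x$ lies in a disk of radius $O(x)$ around $P'$.

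First I will handle large $x$. If $x \ge c$ for a fixed constant, say $c = \frac{\sqrt2}{2} - \frac12$, then every point of $T$ is within toroidal distance $\frac{\sqrt2}{2} \le x\cdot\frac{\sqrt2}{2c} \in O(x)$ of $P'$. So the whole torus is one disk of radius $O(x)$.

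For small $x$, write $a = \frac12 - \min(q_x,1-q_x) \ge 0$ and $b = \frac12 - \min(q_y,1-q_y) \ge 0$. These are exactly the absolute coordinate offsets of $q$ from $P'$ in the torus, so $d_T(q,P') \le a+b$. The membership condition then reads
\[
(\tfrac12 - a)^2 + (\tfrac12 - b)^2 \ge (\tfrac{\sqrt2}{2} - x)^2 .
\]
Since $(\frac12 - b)^2 \le \frac14$, this forces $(\frac12 - a)^2 \ge \frac14 - \sqrt2 x + x^2$. This is the same Pythagorean equation as in \cref{lem:square_geom_fringe}, scaled by $\frac12$. Solving it as there, with the conjugate trick $\frac12 - \sqrt{t} = \frac{\frac14 - t}{\frac12 + \sqrt t}$ for $t = \frac14 - \sqrt2 x + x^2$, gives
\[
a \le \frac{x(\sqrt2 - x)}{\tfrac12 + \sqrt t} \le 2\sqrt2\, x .
\]
The same bound holds for $b$ by symmetry. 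Hence $d_T(q,P') \le a + b \le 4\sqrt2\, x \in O(x)$.

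The only delicate point is the bookkeeping of the wrap-around. I need to check that the per-coordinate minima really decouple, which they do because $(s\mathbb{Z})^2$ is a product lattice. I also need $t \ge 0$ in the small-$x$ regime, which is guaranteed by the choice of $c$. Everything else is the same calculation as in the square case.
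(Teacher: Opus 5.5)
Your proof is correct and follows essentially the paper's route. The paper observes that the torus fringe consists of four mirrored, half-scale copies of the square fringe around the antipode and invokes \cref{lem:square_geom_fringe}; your folding map $q \mapsto (\min(q_x,1-q_x), \min(q_y,1-q_y))$ makes that reduction explicit, and you redo the scaled computation inline, with $t \ge 0$ correctly secured by $c = \frac{\sqrt2-1}{2}$.
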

\begin{proof}
  On the torus, the set of points with distance $\frac{ \sqrt{2} }{2}$
  is shaped like four mirrored and scaled down copies of the analogous
  set on a square, see also \cref{fig:geom_fringe_square:c}.  The
  statement thus follows from \cref{lem:square_geom_fringe}.
\end{proof}

We can scale the distances considered in \cref{lem:square_geom_fringe}
by a factor of $\sqrt{n}$ and obtain statements about the geometric
ground space of square and torus RGGs.  We get that for any vertex $v$
and $x>0$, the set of points with distance from $v$ at least the
geometric diameter minus $x$ is contained in a geometric disk of
radius $O(x)$.  By the results of \cref{sec:rgg-nice:stretch} the
graph distance of vertices with geometric distance $d$ is between
$\ceil*{\frac{d}{r}}$ and $\ceil*{\frac{d}{r}}(1+O(r^{-4/3}))$.
Together, this allows us to show the following.

\begin{lemma}\label{thm:concentrated_fringes}
  Asymptotically almost surely, a square or torus random geometric
  graph $G\sim\mathcal{G}(X, n, r)$ with
  $X∈\{\mathcal{S}, \mathcal{T}\}$ and connection radius
  $r∈ω(\log^{3/4} n)$ has $d_{\mathrm{local}}$-local diametric
  partners with $d_{\mathrm{local}}∈O(n^{1/2}r^{-7/3} + 1)$.
\end{lemma}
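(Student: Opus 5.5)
We condition on the stretch event $\mathcal{E}_{\mathrm{stretch}}$ from \cref{lem:stretch}, respectively \cref{lem:torus_stretch}, which holds a.a.s. Everything after that is deterministic. Write $D_X$ for the geometric diameter of the ground space: $\sqrt{2n}$ for $\mathcal{S}$ and $\sqrt{n/2}$ for $\mathcal{T}$. The plan has three steps.

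\textbf{Step 1: pin down $\diam_G$.} From the upper stretch bound and the lower bound of \cref{lem:rgg_diam}, respectively \cref{lem:torus_rgg_diam}, applied to the whole ground space, we get
\[
\frac{D_X}{r} - 1 \le \diam_G \le \frac{D_X}{r}\parens*{1+O(r^{-4/3})} + 1.
\]
The lower bound holds with probability $1-n^{-C}$, which we intersect with the stretch event. Setting $\varepsilon := O(D_X r^{-7/3}) + O(1) = O(n^{1/2}r^{-7/3}+1)$, this means $\diam_G$ and $D_X/r$ differ by at most $\varepsilon$.

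\textbf{Step 2: transfer to geometry.} Fix a vertex $v$, a parameter $x>0$, and an $x$-diametric partner $w$ of $v$, so $\dist_G(v,w)\ge \diam_G - x$.

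By the stretch bound, $\dist_G(v,w) \le \frac{d_X(v,w)}{r}(1+O(r^{-4/3})) + 1$. We may assume $d_X(v,w)>r$, since otherwise $x$ is already of order $\diam_G$ and the claim is trivial. Using $d_X(v,w)\le D_X$, the error term $\frac{d_X(v,w)}{r}O(r^{-4/3})$ is $O(\varepsilon)$.

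Combining this with Step 1 gives $d_X(v,w) \ge D_X - r(x + O(\varepsilon))$. Thus $w$ lies in the set of points at geometric distance at least $D_X - r(x+O(\varepsilon))$ from $v$.

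\textbf{Step 3: return to the graph.} Scale \cref{lem:square_geom_fringe}, respectively \cref{lem:torus_geom_fringe}, by $\sqrt{n}$ (on the torus, by $\sqrt{n}$ with diameter $\sqrt{n/2}$). This shows the set above lies in a geometric disk of radius $O(r(x+\varepsilon))$ around some point $p$. Any two vertices in that disk are at geometric distance $O(r(x+\varepsilon))$.

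So if the disk contains some vertex $u$, the stretch bound (or the trivial bound when the distance is at most $r$) gives, for every $w$ in the disk,
\[
\dist_G(u,w) \le O(x+\varepsilon)\parens*{1+O(r^{-4/3}\log n)} + 1 = O(x+\varepsilon+1).
\]
Hence all $x$-diametric partners of $v$ lie in a single graph ball of radius $O(x + d_{\mathrm{local}})$ with $d_{\mathrm{local}} = \varepsilon$, which is the claim. If the disk contains no vertex, there are no partners and nothing needs covering. Note that the error term of \cref{lem:stretch} carries a $\log n$ factor for short distances, but it multiplies a quantity of order $x+\varepsilon$ and so only adds lower order terms.

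The main obstacle is the bookkeeping in Step 2. The additive slack from the stretch at distance about $D_X$ is $D_X r^{-4/3}/r$ in graph hops, i.e.\ $n^{1/2}r^{-7/3}$. This slack must be converted to geometric length (multiply by $r$), fed through the $O(\cdot)$ of the geometric fringe lemma, and converted back (divide by $r$), all without losing more than constant factors. The ceilings and the $-1$ in the lower bound each contribute only $O(1)$, which is absorbed by the $+1$ in $d_{\mathrm{local}}$.
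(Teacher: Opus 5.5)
Your proposal is correct and follows essentially the same route as the paper: condition on the stretch event, use $\diam_G \ge D/r - 1$ to show that every $x$-diametric partner lies at geometric distance at least $D - O(r + xr + Ds)$ from $v$, apply the scaled geometric fringe lemma, and convert the resulting disk back into a graph ball of radius $O(x + n^{1/2}r^{-7/3} + 1)$. The only minor difference is the ball centre: you use a vertex already in the disk (e.g.\ a partner itself), whereas the paper invokes \cref{lem:rgg_region_not_empty} to find a vertex near the disk's centre, so your version avoids that extra probabilistic step.
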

\begin{proof}
  Let $v∈V(G)$ be a vertex and for $x∈ℕ$ let $w$ be a $x$-diametric
  partner of $v$, i.e., $\dist_G(v,w) \ge \diam_G - x$.  If
  $x∈Ω(\diam_G)$, all vertices $V(G)$ have distance $O(x)$ from $v$
  and are thus contained in a ball of radius $O(x)$.  Otherwise,
  $d_G(v,w) ∈Ω(diam_G)$.  By \cref{lem:rgg_diam,lem:torus_rgg_diam},
  the diameter of $G$ is at least $\diam_G \ge \frac{D}{r} - 1$, where
  $D∈Θ(\sqrt{n})$ is the geometric diameter of the ground space of
  $G$.  This means that a.a.s.\ $\dist_G(v,w) ∈ Ω(n^{1/2}r^{-1})$ and
  in particular the geometric distance $d$ between $v$ and $w$ is at
  least $r \log n$.  Thus by \cref{lem:stretch,lem:torus_stretch} we
  have $d_G(v,w) \le \frac{d}{r}(1+s)$ for $s∈O(r^{-4/3})$.  We thus
  get
  \begin{align*}
    \frac{d}{r}(1+s) & \ge \frac{D}{r} -1 -x \\
    d(1+s) & \ge D - r - xr \\
    d & \ge (D - r - xr) \frac{1}{1+s}.
  \end{align*}
  With $s_G≥0$ we have $\frac{1}{1+s_G} \ge 1-s_G$.  Hence,
  \begin{align*}
    d \ge (D - r - xr) (1-s) & \ge D - r - xr - D s + sr + sxr\\
    & \ge D - r - xr - D s.
  \end{align*}

  This means that the vertex $w$ has distance from $v$ at least the
  geometric diameter of $\mathcal{S}$, respectively $\mathcal{T}$,
  minus $(r + x r + D s)$.  By \cref{lem:square_geom_fringe},
  respectively \cref{lem:torus_geom_fringe}, points at this distance
  from $v$ are contained in a geometric disk of radius
  $O(r + x r + D s) = O(r + xr + n^{1/2}r^{-4/3})$.  By
  \cref{lem:rgg_region_not_empty}, $G$ w.h.p.  contains a vertex $v_c$
  within distance $O(\sqrt{\log n})$ of the center of this disk.  Any
  vertex within the disk then has graph distance at most
  $O(1 + x + n^{1/2}r^{-4/3})$ from $v_c$.  This means that all
  $x$-diametric partners of $v$ are contained in a $G$-ball of radius
  $O(x + n^{1/2}r^{-4/3} + 1)$.
\end{proof}

We also show that square RGGs have few corners
(\cref{item:prop-few-corners}).

\begin{lemma}\label{thm:rgg_few_corners}
  There exists $d_{\mathrm{corner}}∈Θ(n^{1/2-7/3r}+1)$ such that a
  square random geometric graph $G\sim\mathcal{G}(\mathcal{S},n,r)$
  with connection radius $r∈ω(\log^{3/4} n)$ has
  $d_{\mathrm{corner}}$-few corners, asymptotically almost surely.
\end{lemma}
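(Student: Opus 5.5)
The argument parallels the proof of \cref{thm:concentrated_fringes}, with the geometric input replaced by the second part of \cref{item:geom-few-corners}. I read the intended value as $d_{\mathrm{corner}}\in Θ(n^{1/2}r^{-7/3}+1)$, matching $d_{\mathrm{local}}$. We condition on the stretch event (\cref{lem:stretch}) and on the diameter bounds of \cref{lem:rgg_diam} for the whole square; both hold a.a.s. If $x\in Ω(\diam_G)$, a single $G$-ball of radius $\diam_G=O(x)$ covers everything, so assume $x$ is small.

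\textbf{Step 1: reduce to geometry.} Let $v$ have an $x$-diametric partner $w$, so $\dist_G(v,w)\ge\diam_G-x$. Since $\diam_G\ge D/r-1$ with $D=\sqrt{2n}$, the pair is far apart geometrically (distance at least $r\log n$). The stretch upper bound then gives, exactly as in the computation in \cref{thm:concentrated_fringes},
\[
d_{\mathbb{E}}(v,w)\ge D-y \quad\text{with}\quad y=O(r+xr+Dr^{-4/3}).
\]
So both $v$ and $w$ are geometrically $y$-diametric points of $\mathcal{S}$.

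\textbf{Step 2: a scaled geometric lemma.} On the unit square, I would show that any point $p$ with a partner at distance at least $\sqrt2-x$ lies within $O(x)$ of a corner. Write $p=(p_x,p_y)$ and its partner $q=(q_x,q_y)$. Then
\[
\sqrt2-x\le d(p,q)\le\sqrt{|p_x-q_x|^2+|p_y-q_y|^2},
\]
and both coordinate gaps are at most $1$. Hence $|p_x-q_x|^2\ge(\sqrt2-x)^2-1=1-2\sqrt2x+x^2$, so $|p_x-q_x|\ge 1-O(x)$; the same holds for the $y$-gap. A coordinate gap of at least $1-O(x)$ forces $p_x$ to be within $O(x)$ of $0$ or of $1$, and likewise for $p_y$. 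So $p$ lies in an $O(x)$-neighbourhood of one of the four corners. For $x\ge$ some constant the claim is trivial, since the whole square has diameter $O(1)\subseteq O(x)$. Scaling by $\sqrt n$ and applying this with $y$ places every such vertex $v$ within geometric distance $O(y)$ of one of the four corners of $\mathcal{S}$.

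\textbf{Step 3: back to the graph.} Near each corner, \cref{lem:rgg_region_not_empty} gives w.h.p.\ a vertex $c_i$ within distance $O(\sqrt{\log n})$ of that corner; a union bound over the four corners costs nothing. Any vertex in the corner region of geometric radius $O(y)$ is at geometric distance $O(y+\sqrt{\log n})$ from $c_i$. By the stretch bound (or trivially, when the distance is at most $r$), its graph distance to $c_i$ is
\[
O\bigl(y/r+1\bigr)=O\bigl(x+1+n^{1/2}r^{-7/3}\bigr),
\]
using $\sqrt{\log n}\in o(r)$. Hence four $G$-balls of radius $O(x+d_{\mathrm{corner}})$ cover all vertices with $x$-diametric partners.

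The main obstacle is the bookkeeping in Step 1. Transferring graph near-diametricity to geometric near-diametricity requires the stretch bound in the regime of distance greater than $r\log n$, where the error is $O(r^{-4/3})$. That error, multiplied by $D=Θ(\sqrt n)$, produces the additive $n^{1/2}r^{-4/3}$ geometric slack, which becomes $n^{1/2}r^{-7/3}$ in graph distance. One must also check that the lower bound on $\diam_G$ holds simultaneously with the stretch event, which is immediate from \cref{lem:rgg_diam} and a union bound.
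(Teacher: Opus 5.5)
Your proof is correct and follows essentially the paper's route. You condition on the stretch event and on $\diam_G \ge \sqrt{2n}/r - 1$, show that only points within geometric distance $O(xr + n^{1/2}r^{-4/3} + r)$ of a corner can have $x$-diametric partners, and cover each corner region by a graph ball around a nearby vertex from \cref{lem:rgg_region_not_empty}. The paper instead argues the contrapositive, bounding the farthest distance from a point outside the corner squares via the triangle inequality through $(\ell,\ell)$, whereas you argue forward from a near-diametric pair with a coordinate-gap bound. Your reading of $d_{\mathrm{corner}}$ as $n^{1/2}r^{-7/3}+1$ matches what the paper actually proves.
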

\begin{proof}
  We consider four small regions around the corners of $\mathcal{S}$
  that we call \emph{corner squares}.  We show that for $x\ge 0$ all
  vertices outside these regions have no $x$-diametric partners.
  Then, by contraposition any vertex with at least one $x$-diametric
  partner lies in a corner square.  Like before, we show this by first
  making a purely geometric argument and then applying the stretch
  bounds.  Afterwards, it remains to show that each corner square can
  be covered by a ball of radius $O(x + d_{\mathrm{corner}})$.

  \begin{figure}
    \centering
    \includegraphics{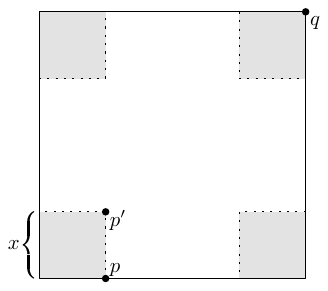}
    \caption{Visualization of square $\mathcal{S}$ with the four
      corner squares in gray; the points $p$ and $q$
      maximize the distance between a point in the lower left quadrant
      and any other point and an upper bound for their distance is
      easily found by considering the detour via $p'$.}
    \label{fig:few_corner_square}
  \end{figure}

  For the geometric argument, let $\ell>0$ be a length parameter to be
  determined later.  We define the corner squares with side length $\ell$
  as follows, see also \cref{fig:few_corner_square}.  Let
  $c_1, \dots, c_4$ be the four corners of $\mathcal{S}$.  Then, for
  $i∈[4]$ we define the corner square $C_i$ as the subset of
  $\mathcal{S}$ that lies within the axis aligned square of side
  length $2\ell$ and center $c_i$, i.e.,
  \[
    C_i = \{p ∈ \mathcal{S} \mid ∃ d_x, d_y \text{ with
    } -\ell < d_x,d_y < \ell \text{ such that } p = c_i + (d_x,d_y)\}.
  \]

  Without loss of generality let $C$ be the corner square in the
  bottom left corner.  We consider a point $p∈\mathcal{S} \setminus $
  and give an upper bound for the maximum distance from $p$ to any
  other point $q ∈ \mathcal{S}$.  We can pessimistically assume that
  $p$ lies at $(0,\ell)$ and $q$ lies at the top right corner of
  $\mathcal{S}$, see also \cref{fig:few_corner_square}.  Choosing
  $p' = (\ell,\ell)$ as the top right corner of $C$, we have
  \[
    d_{\mathbb{E}}(p,q) < d_{\mathbb{E}}(p,p') + d_{\mathbb{E}}(p',q) = \ell + \sqrt{2n} - \sqrt{2}\ell < \sqrt{2n} - 0.4\ell,
  \]
  which concludes the geometric argument.

  We condition on the stretch event of \cref{lem:stretch}, which holds
  asymptotically almost surely.  This means that for any pair of
  vertices with geometric distance at least $d \ge r \log n$, the
  graph distance is at most $\ceil*{d/r \cdot (1+s)}$ for some
  $s∈O(r^{-4/3})$.
  Thus for any vertex
  $v∈ V(G) \cap \mathcal{S} \setminus (C_1∪C_2∪C_3∪C_4)$ located
  outside the corner squares and any other vertex $w∈V(G)$ we have
  \begin{align*}
    d_G(v,w) & \le \ceil*{\frac{d_{\mathbb{E}}(v,w)}{r}(1+s)}\\
    & < \frac{d_{\mathbb{E}}(p,q)}{r}(1+s)+1 \\
    & < \frac{\sqrt{2n} - 0.4 \ell}{r} (1+s) + 1 \\
    & < \frac{\sqrt{2n} + \sqrt{2n}s - 0.4 \ell}{r} + 1.
  \end{align*}
  By \cref{lem:rgg_diam}, we have w.h.p.\
  $\diam_G ≥ \frac{\sqrt{2n}}{r} - 1$.  Setting
  $\ell = \frac{\sqrt{2n}s + xr + 2r}{0.4} $, we thus have
  \[
    d_G(v,w) < \frac{\sqrt{2n}}{r} - x -1 \le \diam_G - x.
  \]
  In other words, no vertex of $G$ has distance $\diam_G - x$ or more
  from $v$, hence $v$ has no $x$-diametric partner.  Note that
  we have $\ell ∈ O(xr + n^{1/2}s + r) = O(xr + n^{1/2-4/3r} + r)$. 

  This means that for some
  $d_{\mathrm{corner}}∈Θ(n^{1/2-7/3r}+1) \subseteq Ω(\sqrt{n}sr^{-1})$
  and any $x ≥ 0$ we can choose $\ell∈O(xr + d_{\mathrm{corner}} r)$
  such that any vertex outside the corner squares of side length
  $\ell$ does not have $x$-diametric partners.
  This means that any vertex with at least one $x$-diametric partner
  lies inside one of four squares of side length $\ell$.  It remains to
  show that each of these squares can be covered by $O(1)$ $G$-balls
  of radius $O(x + d_{\mathrm{corner}})$.  To this end, consider
  without loss of generality the corner square $C$ in the bottom left
  corner.  By \cref{lem:rgg_region_not_empty}, $G$ w.h.p.\ contains a
  vertex $v_C$ within geometric distance $O(\sqrt{\log n})$ of the
  geometric center of $C$.  Any other vertex $w_C ∈ V(G) \cap C$ then
  has geometric distance $O(\ell) = O(xr + d_{\mathrm{corner}}r)$ and
  thus also graph distance at most
  $O(\ell/r) = O(x + d_{\mathrm{corner}})$ from $v_C$.  Thus,
  $V(G) \cap C$ is contained in the closed
  $O(x + d_{\mathrm{corner}})$ neighborhood of some vertex of $G$.

  To conclude, we have shown that there is a
  $d_{\mathrm{corner}}∈Θ(n^{1/2}r^{-7/3} + 1)$ and such that for any
  $x ≥ 0$ the set of vertices with at least one $x$-diametric partner
  can be covered by a constant number of $G$-balls of radius
  $O(x + d_{\mathrm{corner}})$.
\end{proof}

\subsection{Computing the Diameter}%
\label{sec:rgg:diameter}

We are now ready to apply the diameter algorithm from \cref{sec:algo}.
For a better overview we first give a summary of the properties we
have shown above.  Let $G$ be a random geometric graph with connection
radius $r = n^ρ$ for constant $ρ∈(0,1)$.  Then, asymptotically almost
surely $G$ has $d_{\mathrm{local}}$-local diametric partners for
$d_{\mathrm{local}} ∈ O(n^{\frac12 - \frac73ρ}+1)$ (see
\cref{thm:concentrated_fringes}).  Additionally, if $G$ has a square
ground space, it a.a.s.\ has $d_{\mathrm{corner}}$-few corners with
$d_{\mathrm{corner}}∈Θ(n^{\frac12-\frac73ρ}+1)$ (see
\cref{thm:rgg_few_corners}).

Moreover, for a constant $x∈(0,1)$ with $x > 2ρ$, let
$\ell = (1-x) \log_4 n$.  Then, the recursive partition
$G[\mathcal{P}^\ell]$ a.a.s.\ is balanced
(\cref{lem:partition_separators}) and has $(\frac12,ρ)$-small separators
(see \cref{lem:partition_separators}), size-dependent diameters (see
\cref{lem:partition_diam}), and bounded fragmentation (see
\cref{lem:rgg:bounded_fragmentation}).  Moreover, the leaf blocks of
$G[\mathcal{P}^\ell]$ have $Θ(n^{x})$ vertices and each block of size
$k$ has diameter $Θ(k^{\frac12}n^{-ρ})$ (see \cref{lem:partition_diam})
and thus leaf blocks have diameter $Θ(n^ε)$ for some constant $ε>0$.

To apply \cref{thm:properties_algo}, we additionally need an upper bound on
the degeneracy of $G$.  This is easily obtained using the concentration
bounds on the number of vertices inside a sufficiently large region.

\begin{lemma}\label{lem:rgg_maxdeg}
  A random geometric graph $G\sim\mathcal{G}(X, n, r)$ with ground
  space $X∈\{\mathcal{S}, \mathcal{T}\}$ and connection radius
  $r = n^x$ for constant $x∈(0,1)$ has an expected average degree in
  $O(n^{2x})$ and a maximum degree in $O(n^{2x})$ with high
  probability.
\end{lemma}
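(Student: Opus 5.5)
The plan is to bound the degree of a single vertex by the number of other vertices inside its $r$-disk. We then apply the concentration bound of \cref{lem:area_vertex_count} to every such disk and take a union bound over the $n$ vertices.

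\textbf{Expected degree.} Fix a vertex $v$. The other $n-1$ vertices are placed independently and uniformly in $X$, which has area $n$. Each of them is adjacent to $v$ exactly if it lies in the disk $D_v$ of radius $r$ around $v$, taken in the metric $d_X$. On the torus, $D_v$ has area exactly $\pi r^2$, since $r<n^{\rho}$ with $\rho<\frac12$ means the disk does not wrap around. On the square, $D_v\cap\mathcal{S}$ has area at most $\pi r^2$, and at least $\pi r^2/4$ because at least a quarter-disk lies inside. Hence $\Ex{\deg(v)}=(n-1)\cdot A(D_v\cap X)/n\in\Theta(r^2)=\Theta(n^{2x})$. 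By linearity, the expected average degree is $\Theta(n^{2x})$, in particular $O(n^{2x})$.

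\textbf{Maximum degree w.h.p.} Condition on the position of $v$. The remaining $n-1$ points are still i.i.d.\ uniform, so we can apply \cref{lem:area_vertex_count} to the region $R=D_v\cap X$. This needs $A(R)\in\omega(\log n)$, which holds because $A(R)\ge \pi r^2/4=\Theta(n^{2x})$. Working with $n-1$ instead of $n$ points only changes constants. To avoid relying on the lower bound on $A(R)$, one can instead enlarge $R$ to a superset of area exactly $\pi r^2$ before applying the lemma, since only an upper bound on the count is needed. With constant $c=2$, the lemma gives that the number of vertices in $R$ is at most $\pi r^2(1+o(1))$ with probability $1-O(n^{-2})$. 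A union bound over all $n$ vertices shows that with probability $1-O(n^{-1})$ every vertex has degree $O(r^2)=O(n^{2x})$.

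\textbf{Main obstacle.} The only delicate step is justifying the use of \cref{lem:area_vertex_count} for a region centered at a random vertex. Conditioning on $v$'s position resolves this, because the other points remain independent and uniform. The degeneracy is at most the maximum degree, so the lemma also yields the degeneracy bound needed for \cref{thm:properties_algo}.
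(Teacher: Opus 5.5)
Your proof is correct and follows the paper's argument. Both bound each vertex's degree by the number of points in its radius-$r$ disk, apply \cref{lem:area_vertex_count} with a suitable constant $c$, and take a union bound over all $n$ vertices; your conditioning on the position of $v$, which fixes the region while the other $n-1$ points stay i.i.d.\ uniform, makes explicit a step the paper's proof passes over silently.
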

\begin{proof}
  Let $v∈V(G)$ be a vertex.  Then the degree of $v$ is equal to the
  number of vertices falling into a region of radius $r$ and hence
  area $A_x ∈ O(n^{2x})$ around $v$, i.e., the expected average degree
  is in $O(n^{2x})$.  With \cref{lem:area_vertex_count}, this also
  means that the degree of $v$ is at most
  $A_x \cdot \parens*{1+ \sqrt{\frac{3c \log n}{A_x}}}$ with
  probability at least $1 - O(n^{-c})$ for any constant $c$.  For a
  sufficiently high constant $c$, a union bound over all vertices
  shows that all vertices have degree in $O(n^{2x})$ with high
  probability.
\end{proof}


Applying \cref{thm:properties_algo}, we thus get the following running
times, depending on the exponent of the connection radius $ρ$.

\begin{lemma}\label{thm:rgg_running_time}
  Let $G$ be a torus or a square random geometric graph with
  connection radius $r=n^ρ$ for constant $ρ∈(0,\frac12)$.  Then, $G$
  admits a recursive partition $\mathcal{P}$ such that asymptotically
  almost surely the algorithm from \cref{thm:properties_algo} computes
  the diameter of $G$ in time
  $\tilde{O}\parens*{n^{\max({\frac32+3ρ}, 2-\frac{2}{3}ρ)}}$.  If the
  ground space of $G$ is the square $\mathcal{S}$, the running time is
  in $\tilde{O}\parens*{n^{\max(\frac32+3ρ,2-\frac{10}{3}ρ)}}$.
%
\end{lemma}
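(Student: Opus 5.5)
The plan is to instantiate \cref{thm:properties_algo} with the quadtree-induced partition $\mathcal{P}=G[\mathcal{P}^{\ell}]$ from \cref{sec:rgg:diameter}, with leaf exponent $x$ and $\ell=(1-x)\log_4 n$ chosen below, and then optimise $k$ by hand. All required properties are already available a.a.s. Balance and $(\frac12,ρ)$-small separators come from \cref{lem:partition_separators}. Size-dependent diameters, with blocks of size $k$ having diameter $Θ(k^{1/2}n^{-ρ})$, come from \cref{lem:partition_diam}. Low fragmentation comes from \cref{lem:rgg:bounded_fragmentation}. Local diametric partners with $d_{\mathrm{local}}∈O(n^{1/2-\frac73ρ}+1)$ come from \cref{thm:concentrated_fringes}, and for the square, few corners with $d_{\mathrm{corner}}∈Θ(n^{1/2-\frac73ρ}+1)$ come from \cref{thm:rgg_few_corners}. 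For the degeneracy I use the maximum degree bound $d∈O(n^{2ρ})$ of \cref{lem:rgg_maxdeg}. A union bound over these finitely many a.a.s.\ events gives that they hold simultaneously a.a.s.

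Plugging in $α=\frac12$ and $β=ρ$, the preprocessing term is $n^{1+α+β}d=n^{\frac32+ρ}\cdot n^{2ρ}=n^{\frac32+3ρ}$. Next I choose $k$ minimal subject to the diameter requirement $k^{1/2}n^{-ρ}∈Ω(d_{\mathrm{local}})$ (resp.\ $d_{\mathrm{corner}}$). If $ρ<\frac{3}{14}$, the quantity $n^{1/2-\frac73ρ}$ dominates, and the requirement is met by $k=n^{1-\frac83ρ}$. We then pick the leaf exponent $x$ with $2ρ<x\le 1-\frac83ρ$, which is possible exactly when $ρ<\frac3{14}$, so that $k\ge k_{\mathrm{leaf}}$; also $k∈o(n)$. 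If instead $ρ\ge\frac3{14}$, the distance bounds are $O(1)$ and any block diameter $n^{\varepsilon}$ suffices. Then I take $k=k_{\mathrm{leaf}}=n^{x}$ with $2ρ<x<\frac12+ρ$, which is possible since $ρ<\frac12$.

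For the torus I use the first bound with the first branch of the minimum. Its terms are $nkd=kn^{1+2ρ}$, $k^{2α}n^{1+2β}=kn^{1+2ρ}$ and $k^{2α-1}n^{1+α+3β}=n^{\frac32+3ρ}$. For $ρ<\frac3{14}$, $k=n^{1-\frac83ρ}$ gives $kn^{1+2ρ}=n^{2-\frac23ρ}$. For $ρ\ge\frac3{14}$, $x<\frac12+ρ$ gives $kn^{1+2ρ}\le n^{\frac32+3ρ}$. In both cases the total is $\tilde O(n^{\max(\frac32+3ρ,\,2-\frac23ρ)})$.

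For the square I use the second bound. Its terms are $k^2d=k^{1+2α}n^{2β}=k^2n^{2ρ}$ and $k^{2α}n^{α+3β}=kn^{\frac12+3ρ}\le n^{\frac32+3ρ}$. With $k=n^{1-\frac83ρ}$ we get $k^2n^{2ρ}=n^{2-\frac{10}{3}ρ}$. In the regime $ρ\ge\frac3{14}$ we get $k^2n^{2ρ}=n^{2x+2ρ}$, and taking $x$ just above $2ρ$ gives $n^{6ρ+o(1)}$. I expect the main obstacle to be this large-$ρ$ regime. There one must check that the exponent stays below $\frac32+3ρ$, and the $x>2ρ$ constraint from $r∈o(n^{1/2-α/2})$ is exactly what limits $k$. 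Since $6ρ\le\frac32+3ρ$ iff $ρ\le\frac12$, and we can take $x$ as close to $2ρ$ as needed, this is dominated. Where needed I would absorb $n^{o(1)}$ slack by choosing $x$ arbitrarily close to $2ρ$, or argue that the $k^2n^{2ρ}$ term is then below the preprocessing term.

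The remaining bookkeeping is that the degeneracy is at most the maximum degree, and that the $k^2\log k$ factors are hidden in $\tilde O$. This yields $\tilde O(n^{\max(\frac32+3ρ,2-\frac{10}{3}ρ)})$ for the square.
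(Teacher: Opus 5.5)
Your proposal is correct and matches the paper's proof essentially step for step. It uses the same quadtree partition and property lemmas and the same split at $\rho=\frac{3}{14}$, with $k=n^{1-\frac83\rho}$ below the threshold and $k$ a constant power just above $n^{2\rho}$ above it. The only difference there is cosmetic: the paper takes leaves of size $n^{2\rho+\varepsilon/2}$ and $k=n^{2\rho+\varepsilon}$ with $\rho+\varepsilon<\frac12$, while you take $k=k_{\mathrm{leaf}}=n^{x}$ with $2\rho<x<\frac12+\rho$.

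One point to tidy is the ``$n^{o(1)}$'' remark in the square case. For the partition lemmas $x$ must be a fixed constant, but no limiting argument is needed there: any constant $x\in(2\rho,\frac12+\rho)$ already gives $2x+2\rho<1+4\rho<\frac32+3\rho$.
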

\begin{proof}
  Asymptotically almost surely, we can rely on the properties
  summarized above.  The running time guarantee from
  \cref{thm:properties_algo} depends on a parameter $k$ such that the
  recursive partition contains a flat partition with blocks of size
  $k$ and diameter in $Ω(d_{\mathrm{local}})$ and optionally in
  $Ω(d_{\mathrm{corner}})$.  We have
  $Ω(d_{\mathrm{local}}) = Ω(d_{\mathrm{corner}}) = Ω(n^{\frac12 - \frac73ρ} + 1)$.
  This means that there is a phase transition at $ρ = \frac{3}{14}$,
  above which $d_{\mathrm{local}}$ and $d_{\mathrm{corner}}$ are no
  longer growing in $n$.  We thus consider the two cases
  $ρ < \frac{3}{14}$ and $ρ ≥ \frac{3}{14}$ separately.

  For the first case, we use the recursive partition
  $G[\mathcal{P}^\ell]$ with $\ell = (1-x) \log_4 n$ for
  $x = \frac{6}{14}$, i.e., leaves of $\mathcal{P}$ have size
  $n^{\frac{6}{14}}$ and diameter $n^{\frac{3}{14}-ρ}$.  Then, for
  $k = n^{1-\frac83 ρ}$ blocks of size $k$ have diameter
  $n^{\frac12 - \frac73 ρ}∈Ω(d_{\mathrm{local}}) = Ω(d_{\mathrm{corner}})$.
  Also, we have
  $1-\frac{8}{3} ρ > 1- \frac{8}{3} \cdot \frac{3}{14} = \frac{6}{14}$,
  and thus $k∈Ω(n^x)$, i.e., $\mathcal{P}$ contains flat partitions
  of size $Θ(k)$.

  With this choice for the parameters, the running time for torus RGGs given by
  \cref{thm:properties_algo} is in
  \begin{align*}
    \tilde{O}\parens*{
    n^{1+α+β} \cdot d +
    \min\braces*{
    nkd +
    k^{2α}n^{1+2β} +
    k^{2α-1}n^{1+α+3β},\;
    kn^{1+α+β}
    }
    },
  \end{align*}
  with $α=\frac12$, $β=ρ$, $d=n^{2ρ}$.  Considering each term separately, we have
  \begin{align*}
    n^{1+α+β} \cdot d & = n^{1+\frac12+ρ}\cdot n^{2ρ} = n^{\frac32 + 3ρ},\\
    nkd & = n\cdot n^{1-\frac83ρ} \cdot n^{2ρ} = n^{2-\frac23ρ},\\
    k^{2α}n^{1+2β} & = n^{1-\frac83ρ} \cdot n^{1+2ρ} = n^{2-\frac23ρ},\\
    n^{1+α+3β}k^{2α-1} & = n^{1+\frac12+3ρ} k^{0} = n^{\frac{3}{2}+3ρ},\\
    kn^{1+α+β} & = n^{1-\frac83ρ} \cdot n^{1+\frac12 + ρ} = n^{\frac52 - \frac53ρ}.
  \end{align*}
  For $0<ρ<\frac{3}{14}$ the term $kn^{1+α+β} = n^{\frac52 - \frac53ρ}$ is always
  larger than the other terms in the minimum.  This means that the
  running time is in
  \[
    \tilde{O}\parens*{n^{\max{(\frac32+3ρ}, 2-\frac{2}{3}ρ)}}.
  \]
  
  \noindent{}Square RGGs additionally have $d_{\mathrm{corner}}$-few
  corners, so \cref{thm:properties_algo} gives a running time in
  \begin{align*}
    \tilde{O}\parens*{
    n^{1+α+β}\cdot d +
    \min\braces*{
    k²d +
    k^{1+2α}n^{2β}
    + k^{2α}n^{α+3β},\;
    k²n^{α+β}
    }
    },
  \end{align*}
  with $α=\frac12$, $β=ρ$, $d=n^{2ρ}$, and $k = n^{1-\frac83ρ}$.  We
  again consider each term separately.  We have
  \begin{align*}
    n^{1+α+β} \cdot d & = 
                        n^{\frac32 + 3ρ},\\
    k²d & = n^{2(1-\frac83ρ)}d = n^{2 - \frac{16}{3}ρ}d = n^{2-\frac{10}{3}ρ},\\
    k^{1+2α}n^{2β} & = n^{2(1-\frac83ρ)} \cdot n^{2ρ} = n^{2-\frac{10}{3}ρ},\\
    n^{α+3β}k^{2α} & = n^{\frac12 + 3ρ} \cdot n^{1-\frac83ρ} = n^{\frac32 + \frac13ρ},\\
    k²n^{α+β} & = n^{2(1-\frac83ρ)} \cdot n^{\frac12+ρ} = n^{\frac52-\frac{13}{3}ρ}.
  \end{align*}
  Similar to the torus case, for $0 < ρ < \frac{3}{14}$ the term
  $n^{\frac52 - \frac53ρ}$ is always larger than the other terms in
  the minimum.  This means that the running time is in
  \[
    \tilde{O}\parens*{n^{\max(2-\frac{ 10 }{3}ρ,\frac32+3ρ)}}.
  \]

  For the case $ρ≥\frac{3}{14}$, let $ε < \frac12 - ρ$ be a positive
  constant\footnote{For the positivity, recall that we generally
    assume $ρ<\frac12$, see \cref{sec:rgg-nice:prelim}.}.  Then,
  $ρ+ε < \frac12$ and $2(ρ+ε) < 1$.  We use a recursive partition
  $G[P^\ell]$ with leaf size $n^{2ρ+\frac{ε}{2}}$ and subgraph sizes
  $k = n^{2ρ+ε}$.  Then, the diameter of size $k$ blocks is
  $d_k ∈ Θ(n^{\frac{2ρ+ε}{2} - ρ}) = Θ(n^{ε/2})$ and thus we have
  $d_k ∈ Ω(\min\{d_{\mathrm{local}},d_{\mathrm{corner}}\}) = Ω(1)$.
  Again, considering each term of the running time separately, we have
  \begin{align*}
    n^{1+α+β} \cdot d & 
                      = n^{\frac32 + 3ρ},\\
    nkd & = n\cdot n^{2ρ+ε} \cdot n^{2ρ} = n^{1+4ρ+ε},\\
    k^{2α}n^{1+2β} & = n^{2ρ+ε} \cdot n^{1+2ρ} = n^{1+4ρ+ε},\\
    n^{1+α+3β}k^{2α-1} & = n^{1+\frac12+3ρ} k^{0} = n^{\frac{3}{2}+3ρ},\\
    k^2n^{α+β} &= n^{2(2ρ+ε)}\cdot n^{\frac12+ρ} = n^{\frac12 + 5ρ + 2ε}
  \end{align*}
  for torus RGGs.
  However, we have $ρ+ε<\frac12$ thus
  $n^{\frac32+3ρ}$ dominates $n^{1+4ρ+ε}$ and $n^{\frac12+5ρ+2ε}$.
  Together with the running time analysis for the case
  $ρ<\frac{3}{14}$, this means that for any value of $ρ∈(0,\frac12)$,
  the running time on torus RGGs is in
  \[
    \tilde{O}\parens*{n^{\max{(\frac32+3ρ}, 2-\frac{2}{3}ρ)}}.
  \] 

  For square RGGs we have
  \begin{align*}
    n^{1+α+β} \cdot d & = n^{1+\frac12+ρ}\cdot n^{2ρ} = n^{\frac32 + 3ρ},\\
    k²d & = n^{2(2ρ+ε)}d = n^{4ρ+2ε}n^{2ρ} = n^{6ρ+2ε}, \\
    k^{1+2α}n^{2β} & = k^2n^{2ρ} = n^{6ρ+2ε}, \\
    n^{α+3β}k^{2α} & = n^{\frac12 + 3ρ} \cdot n^{2ρ+ε} = n^{\frac12+5ρ+ε}, \\
    k^2n^{α+β}    & = n^{\frac12 + 5ρ + 2ε}.
  \end{align*}
  Again, with $ρ+ε<0.5$ we have $\frac12+5ρ+ε < \frac32+3ρ$ and
  $\frac12+5ρ+2ε < \frac32+3ρ$.  Together with the running time analysis
  for the case $ρ<\frac{3}{14}$, this means that for any value of
  $ρ∈(0,\frac12)$, the running time on square RGGs is in
  \[
    \tilde{O}\parens*{n^{\max{(\frac32+3ρ}, 2-\frac{10}{3}ρ)}}.\qedhere
  \]
\end{proof}

Equivalently the running times can be written as
$\tilde{O}\big(n^{\max({\frac12+ρ}, 1-\frac{8}{3}ρ)}m\big)$ (torus)
and $\tilde{O}\big(n^{\max(\frac12+ρ,1-\frac{16}{3}ρ)}m\big)$
(square).  The following theorem follows directly, as RGGs with
connection radius $r=n^ρ$ have expected average degree $Θ(n^{2ρ})$
(see \cref{lem:rgg_maxdeg}).

\runningTimeRGG*

\subsection{Analysis of iFUB}\label{sec:ifub}
In this section we rely on the stretch bounds and the concentration of
the vertices to analyze the running time of the iFUB algorithm on
random geometric graphs.  We consider iFUB with the \emph{2-sweep heuristic},
i.e., the algorithm chooses a central vertex as follows.  First, the
algorithm performs a BFS from an arbitrary vertex $v$ and picks a
vertex $w$ in the last layer, i.e., with maximum distance from $v$.
Then, a second BFS is performed from $w$ and the vertex $c$ is chosen
half the way on a shortest path between $w$ and a vertex with maximum
distance from $w$.

In the following, we begin by showing that the vertex $w$ selected by
the first BFS is likely located close to a corner of the square
ground space.

\subparagraph*{Analysis of 2-sweep.}  We begin with a geometric
argument showing that for any point there is a corner that is further
away than a second point not close to any corner.  See also
\cref{fig:corner_distance} for a visualization.

\begin{lemma}\label{lem:square_corner_dist}
  Let $S$ be a square and let $p$ and $q$ be points on $S$, such that
  $q$ has distance at least $x$ from every corner of $S$.  Then, there
  is a corner $c^*$ of $S$ such that
  $d_{\mathbb{E}}(p, q) + 0.23 x \le d_\mathbb{E}(p,c^{*})$, i.e., the
  distance from $p$ to $c^*$ is at least $0.23 x$ longer than the
  distance from $p$ to $q$.
\end{lemma}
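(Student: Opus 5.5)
We may assume $p$ lies in the closed lower-left quadrant of $S$. Scaling $S$ does not change the claim, so let $S=[0,s]^2$ and $p=(a,b)$ with $0\le a,b\le s/2$. We take $c^*=(s,s)$, the corner farthest from $p$. Write $A=s-a$, $B=s-b$ and $R=d_{\mathbb{E}}(p,c^*)=\sqrt{A^2+B^2}$, so that $s/2\le A,B\le s$. Every point of $S$ lies within distance $s/\sqrt2$ of some corner, so the hypothesis forces $x\le s/\sqrt2$; otherwise the statement is vacuous.

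The plan is to maximize $f(q)=d_{\mathbb{E}}(p,q)$ over the admissible region
\[
  Q=S\setminus\bigcup_{i}\{z: d_{\mathbb{E}}(z,c_i)<x\},
\]
and to show that this maximum is at most $R-0.23x$. Since $f$ is convex and has no local maximum in the interior of any open set, its maximum over the compact set $Q$ is attained on $\partial Q$. This boundary consists of pieces of the four sides of $S$ and of the four quarter-arcs of radius $x$ around the corners, so we handle these pieces separately.

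\textbf{Arcs.} On the arc around $c^*$, every point $q$ satisfies $f(q)\le R$. The largest value of $f$ there is at most $R-x+O(x^2/R)$; more precisely, $R^2-f(q)^2\ge 2x\min(A,B)-x^2$, and this gives a gain of order $x$. The arc around $c=(0,0)$ lies within distance $x$ of $c$, where $f(c)\le s/\sqrt2$. The arcs around $(s,0)$ and $(0,s)$ lie near corners with $f\le\sqrt{A^2+b^2}$, respectively $f\le\sqrt{a^2+B^2}$. In each case the arc is dominated by the adjacent side pieces, which are treated next.

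\textbf{Sides.} On the top side $\{y=s\}$ we have $f(u,s)^2=(u-a)^2+B^2$ with $u\in[x,s-x]$. Since $s-x-a\ge a-x$, this is maximized at $u=s-x$. Hence
\[
  R^2-f^2\ge 2Ax-x^2,
  \qquad
  R-f=\frac{R^2-f^2}{R+f}\ge\frac{2Ax-x^2}{2R}.
\]
The right side $\{x=s\}$ is symmetric with $A$ and $B$ swapped. On the left and bottom sides, $f$ is bounded by the distances to the corners $(0,s)$ and $(s,0)$ minus a term of order $x$, and both of these distances are already at most $R$.

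It therefore remains to bound, uniformly over $A,B\in[s/2,s]$ and $x\in(0,s/\sqrt2]$, the quantity
\[
  \min\braces*{\frac{2Ax-x^2}{2R},\ \frac{2Bx-x^2}{2R}}\Big/ x .
\]
The worst case is $\min(A,B)=s/2$ together with $R$ as large as possible, and the $-x^2$ term in the numerator also has to be controlled. Here $R\le s\sqrt5/2$, which already yields roughly $\frac{1}{\sqrt5}\bigl(1-\frac{x}{s}\bigr)$.

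I expect this explicit constant calculation to be the main obstacle. For large $x$, close to $s/\sqrt2$, the crude bound drops below $0.23$. In that regime we instead use that $q$ is confined near the center of $S$, so $f(q)\le s/\sqrt2+O(s-\sqrt2x)$ while $R\ge s/\sqrt2$. For small $x$, the bound $\frac{1}{\sqrt5}\approx0.447$ halved by $R+f\le 2R$ gives about $0.22$. To reach $0.23$, we should use $R+f\le R+\sqrt{R^2-(2Ax-x^2)}$ rather than $2R$. We would also choose between the two far corners adaptively, since the corner $(s,0)$ or $(0,s)$ is better when $q$ is near the top or right side. We then verify numerically that the resulting one-variable minimum over $t=x/s\in(0,1/\sqrt2]$ stays at or above $0.23$.
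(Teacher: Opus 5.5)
Your route is the paper's: take $c^*$ to be the corner farthest from $p$; for $x\le s/2$ the worst admissible $q$ is the side point at distance $x$ from $c^*$ along the shorter leg (your $(s-x,s)$ when $A\le B$, the paper's $(x,0)$); push $p$ to the midpoint $(s/2,0)$ of a far side; and reduce to one variable $t=x/s$. Two steps, however, do not go through as written.

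\textbf{The regime $s/2<x\le s/\sqrt2$.} Here no piece of any side of $S$ belongs to $Q$. So ``each arc is dominated by the adjacent side pieces'' has nothing to reduce to, and the side expression $(A-x)^2+B^2$ is evaluated at non-admissible points; minimizing it over $t\in(0,1/\sqrt2]$ proves nothing for $t>1/2$. Your substitute, $f(q)\le s/\sqrt2+O(s-\sqrt2x)$ together with $R\ge s/\sqrt2$, only yields $R-f\ge -O(s-\sqrt2x)$, which is not a positive lower bound. The comparison has to be with $d_{\mathbb{E}}(p,m)$ for the center $m$, not with $s/\sqrt2$.
\begin{itemize}
\item For $p$ in the lower-left quadrant, $R-d_{\mathbb{E}}(p,m)\ge\frac{\sqrt5-1}{2}s$, with equality at $p=(s/2,0)$.
\item Put $h=\sqrt{x^2-s^2/4}$. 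Every $q\in Q$ lies within $s/2-h$ of $m$, because on each arc piece the distance to $m$ is maximized at the tips such as $(s/2,s-h)$.
\item Hence $R-f\ge\frac{\sqrt5-2}{2}s+h\ge(\sqrt5-2)\bigl(\frac{s}{2}+h\bigr)\ge(\sqrt5-2)x$, using $x\le s/2+h$.
\end{itemize}
The paper handles this regime directly, evaluating at the tip $q^*=(\frac12,\sqrt{x^2-\frac14})$ with $p=(\frac12,1)$.

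\textbf{The constant for $x\le s/2$.} Your diagnosis is off. The expression $(2Ax-x^2)/(2R)$ already contains the factor $\frac12$ from $R+f\le 2R$, so for small $x$ the crude bound is about $1/\sqrt5\approx0.447$, not $0.22$. It drops below $0.23$ only for $t>0.486$, i.e.\ near $x=s/2$.

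Switching to the exact difference is the right fix, but ``verify numerically'' should be replaced by an argument. You also need monotonicity of the exact difference, not just of the crude bound:
\begin{itemize}
\item $\sqrt{A^2+B^2}-\sqrt{(A-x)^2+B^2}$ is increasing in $A$ and decreasing in $B$, so the worst case is $(A,B)=(s/2,s)$.
\item There the ratio is $g(t)/t$ with $g(t)=\frac{\sqrt5}{2}-\sqrt{(\frac12-t)^2+1}$.
\item $g$ is concave with $g(0)=0$, so $g(t)/t$ is decreasing on $(0,\frac12]$.
\item Its minimum is therefore $2g(\frac12)=\sqrt5-2\approx0.236>0.23$, which is exactly the paper's computation.
\end{itemize}

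Finally, the adaptive choice of corner is vacuous. Since $c^*$ already maximizes $d_{\mathbb{E}}(p,c)$ over all corners $c$, no other corner can make the inequality easier to satisfy.
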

\begin{figure*}
  \centering
  \includegraphics{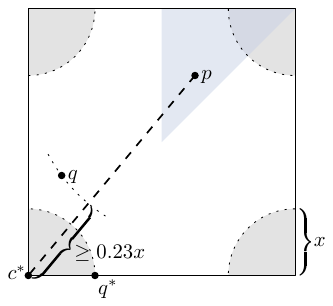}
  \caption{Visualization of the setting in
    \cref{lem:square_corner_dist}. Without loss of generality we
    assume that $p$ is located in the blue triangle.  If $q$ lies far
    from every corner, it is closer to $p$ than the furthest corner
    $c^*$, additionally $q^*$ marks the position of $q$ that maximizes
    the distance to $p$.  Without loss of generality we assume that
    $p$ is located in the blue triangle.}
  \label{fig:corner_distance}
\end{figure*}
\begin{proof}
  Without loss of generality we assume that $S$ is the unit square
  $[0,1)\times [0,1)$.  Further, we can assume that $p = (p_{x}, p_{y})$ lies in
  the upper left diagonal half of the upper right quadrant of $S$,
  i.e., $p_{x} \ge \frac12$ and $p_{y}\ge p_{x}$.  See also
  \cref{fig:corner_distance}.  We choose $c^{*}$ as the bottom left
  corner.

  The permissive region for $q$ is obtained from $S$ by removing
  quarter circles of radius $x$ centered at the corners of $S$.  We
  first consider the case $x \le \frac12$.  Then, with $p$ in the
  upper left diagonal half of the upper right quadrant of $S$, the
  furthest position $q^*$ of $q$ is at the intersection of the bottom
  side of $S$ and the bottom left quarter circle.

  We have $|pq| \le |pq^*| = \sqrt{(p_{x}-x)²+p_{y}²}$ and
  $|pc^{*}| = \sqrt{p_{x}^2+p_{y}^2}$.  Thus we have
  \[
    |pc^{*}| - |pq| \ge \sqrt{p_{x}^2+p_{y}^2} - \sqrt{(p_x-x)²+p_{y}²}.
  \]
  This difference is increasing in $p_x$ and decreasing in $p_y$.
  Therefore, it is minimized at $p=(\frac12,1)$, giving us
  $|pc^{*}| - |pq| \ge \frac{\sqrt{5}}{2} - \sqrt{(\frac12-x)^2+1}$.

  We consider
  $\big(\frac{\sqrt{5}}{2} - \sqrt{(\frac12-x)^2+1}\big) / x$ and find
  that it is decreasing in $x$.  This means that the expression has
  its minimum of $\sqrt{5}-2$ at $x=\frac{1}{2}$.  We have thus shown
  \[
    (|pc^{*}| - |pq^{*}|)/x \ge \sqrt{5}-2 > 0.23,
  \]
  which implies $|pq| + 0.23x \le |pc^*|$ as claimed.

  For $x>0.5$ the worst-case position of $q$ is at
  $q^*=(\frac12, \sqrt{x²-\frac{1}{2}^{2}})$ and the worst-case
  position of $p$ is still $(\frac12,1)$.  We have
  \[
    \frac{|pc^{*}| - |pq^{*}|}{x} \ge
    \frac{\frac{\sqrt{5}}{2} - \parens*{1-\sqrt{x²-\frac14}}}{x} =
    \frac{\frac{\sqrt{5}}{2} - 1 + \sqrt{x²-\frac14}}{x}.
  \]
  This is at least $\sqrt{5}-2$ for all $x>\frac12$, which concludes
  the proof.
\end{proof}

We apply this to show that in a square RGG the furthest neighbor of
every vertex lies close to a corner of the square.

\begin{lemma}
  \label{lem:furthest_vertex_in_corner}
  Let $G\sim\mathcal{G}(\mathcal{S}, n, r)$ be a square random
  geometric graph with $r∈ω(\log^{3/4} n)$.  Consider a vertex
  $v∈V(G)\cap S$ and a maximally distant $w∈N(v,\ecc(v))$ of $v$.
  Then, asymptotically almost surely, the geometric distance of $w$ to
  some corner of $\mathcal{S}$ is in
  $O(n^{\frac12}r^{-\frac43} + \sqrt{\log n})$.
\end{lemma}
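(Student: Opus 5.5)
We argue by contradiction, combining the geometric \cref{lem:square_corner_dist} with the stretch bounds. Throughout we condition on the stretch event $\mathcal{E}_{\mathrm{stretch}}$ of \cref{lem:stretch}, which holds a.a.s. We also condition on the w.h.p.\ event that each of the four corner regions $C_i$ contains a vertex. Here $C_i$ is the intersection of $\mathcal{S}$ with a disk of radius $c_0\sqrt{\log n}$ around the corner $c_i$. This follows from \cref{lem:rgg_region_not_empty}, since a quarter disk of radius $c_0\sqrt{\log n}$ has area $\Theta(\log n)$, together with a union bound over the four corners.

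Fix $v$ and a maximally distant vertex $w$. Suppose the geometric distance of $w$ to every corner is at least $x$; we derive an upper bound on $x$. By \cref{lem:square_corner_dist}, applied to the square $\mathcal{S}$ scaled by $\sqrt{n}$, there is a corner $c^*$ with $d_\mathbb{E}(v,c^*) \ge d_\mathbb{E}(v,w) + 0.23x$. Let $u$ be the vertex in the corner region of $c^*$. Then $d_\mathbb{E}(v,u) \ge d_\mathbb{E}(v,w) + 0.23x - c_0\sqrt{\log n}$. From \cref{eq:stretch-trivial} we get $\dist_G(v,u) \ge d_\mathbb{E}(v,u)/r$. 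From the stretch event, $\dist_G(v,w) \le \frac{d_\mathbb{E}(v,w)}{r}(1+s)+1$, with $s\in O(r^{-4/3}\log n)$ in general and $s\in O(r^{-4/3})$ once $d_\mathbb{E}(v,w)>r\log n$. If instead $d_\mathbb{E}(v,w)\le r$, then $\dist_G(v,w)\le 1$.

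Since $w$ maximizes the distance from $v$, we have $\dist_G(v,w)\ge \dist_G(v,u)$. Combining the bounds gives
\[
  \frac{d_\mathbb{E}(v,w)+0.23x-c_0\sqrt{\log n}}{r} \le \frac{d_\mathbb{E}(v,w)}{r}(1+s)+1,
\]
and hence $0.23x \le d_\mathbb{E}(v,w)\,s + r + c_0\sqrt{\log n}$. With $d_\mathbb{E}(v,w)\le\sqrt{2n}$ this yields $x\in O(n^{1/2}s + r + \sqrt{\log n})$.

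The main obstacle is that this bound carries an additive $r$ coming from the ceiling, and possibly the $\log n$ factor in $s$ when $d_\mathbb{E}(v,w)$ is small. The $\log n$ factor is handled by a case split. If $d_\mathbb{E}(v,w)\le r\log n$, then $w$ is near $v$ and $\ecc(v)$ would be at most $\log n+1$; but $\ecc(v)\ge\diam_G/2\in\Omega(n^{1/2}r^{-1})$ by \cref{lem:rgg_diam}, which is much larger for $r<n^\rho$ with $\rho<\frac12$. So this case cannot occur, and $s\in O(r^{-4/3})$ applies.

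The additive $r$ is more delicate. I would first check whether it is absorbed by the claimed bound $O(n^{1/2}r^{-4/3}+\sqrt{\log n})$. When $r\le n^{3/14}$ we have $r\le n^{1/2}r^{-4/3}$, so it is absorbed. In the regime $r>n^{3/14}$, I would try to remove it by a sharper comparison. Instead of comparing against the single corner vertex $u$, compare against a vertex $u'$ near $c^*$ chosen so that $d_\mathbb{E}(v,u')/r$ exceeds $d_\mathbb{E}(v,w)(1+s)/r$ by at least one full unit; this requires $0.23x \gtrsim r$. The trivial lower bound $\dist_G(v,u')\ge\ceil{d_\mathbb{E}(v,u')/r}$ then handles the rounding. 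If this does not fully work, I would accept the weaker statement $O(n^{1/2}r^{-4/3}+r+\sqrt{\log n})$ as what the argument honestly gives, and note the discrepancy explicitly rather than hide it.
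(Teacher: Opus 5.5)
Your argument is the paper's argument: contradiction via \cref{lem:square_corner_dist}, a vertex within $O(\sqrt{\log n})$ of the far corner $c^*$, and the trivial lower bound \eqref{eq:stretch-trivial} played against the stretch upper bound. Everything up to $0.23x \le d_\mathbb{E}(v,w)\,s + r + c_0\sqrt{\log n}$ is sound. Your case split securing $s\in O(r^{-4/3})$ is more careful than the paper, which never checks $d_\mathbb{E}(v,w) > r\log n$.

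The only divergence is the additive $r$, and there you are right and the paper is not. The paper removes the $r$ in its last display by asserting $\lceil \frac{d_\mathbb{E}(v,w')}{r}(1-1.15s)(1+s)\rceil < \lceil \frac{d_\mathbb{E}(v,w')}{r}\rceil$. A strict inequality between the arguments survives the ceilings only if they differ by about $1$, i.e.\ if $\frac{\sqrt n}{r}s = \Omega(1)$, which is exactly the regime $r\in O(n^{3/14})$.

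Your proposed sharpening for larger $r$ is circular: asking for a comparison vertex with a full unit of slack is the condition $0.23x=\Omega(r)$ again. No sharpening can work, because the stated bound is false there. Here is a counterexample:
\begin{itemize}
\item Take $r=n^{0.4}$ and let $v$ be the vertex nearest $(\sqrt n,\sqrt n)$.
\item Restrict to the infinitely many $n$ with $\sqrt{2n}/r=K+\phi$ for an integer $K$ and $\phi\in(\frac12,\frac{9}{10})$.
\item Since $\frac{\sqrt{2n}}{r}s=o(1)$, the stretch event and \eqref{eq:stretch-trivial} give $\ecc(v)=K+1$ a.a.s.
\item A vertex $w$ near $(r/4,0)$ has $d_\mathbb{E}(v,w)=\sqrt{2n}-(\frac{\sqrt2}{8}+o(1))r$. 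Hence $\dist_G(v,w)\ge\lceil d_\mathbb{E}(v,w)/r\rceil=K+1=\ecc(v)$.
\end{itemize}
So $w$ is maximally distant from $v$, yet lies at distance about $r/4\gg\sqrt{\log n}$ from every corner.

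Your fallback $O(n^{1/2}r^{-4/3}+r+\sqrt{\log n})$ is therefore the correct statement. It coincides with the lemma's bound when $r\in O(n^{3/14})$, and any downstream use for larger $r$ has to carry the extra $r$.
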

\begin{proof}
  Without loss of generality, we assume $v$ to be in the upper right
  quadrant of $\mathcal{S}$ and we show that $w∈N(v,\ecc(v))$ has
  geometric distance in $O(n^{\frac12}r^{-\frac43})$ from the lower left
  corner $c^{*}$ of $\mathcal{S}$ located at the origin.

  We condition on the stretch event of \cref{lem:stretch}, which holds
  asymptotically almost surely.  Then every pair of vertices with
  geometric distance at least $d\ge r \log n$ has graph distance at
  most \( \ceil*{\frac{d}{r}\cdot(1+s)} \) for some $s∈O(r^{-\frac43})$.

  Assume towards a contradiction that $w$ has distance more than
  $ε = 5 \sqrt{2n} s + 5\sqrt{\log n}$ from all corners of
  $\mathcal{S}$.  We show that then there is another vertex with higher
  graph distance than $w$ from $v$.  By
  \cref{lem:rgg_region_not_empty}, there is a vertex $w'$ in $G$ with
  distance from the origin at most $\sqrt{\log n}$ asymptotically
  almost surely.

  It remains to show that $w'$ has higher distance from $v$ than $w$
  in the graph.  By \cref{lem:square_corner_dist} we have
  \begin{align*}
    d_\mathbb{E}(v, w) &\le d_\mathbb{E}(v, c^{*}) - 0.23 ε.
  \end{align*}
  Further, by the triangle inequality,
  \[
    d_\mathbb{E}(v, c^{*}) \le d_\mathbb{E}(v, w') + d_\mathbb{E}(w', c^{*}) ≤ d_\mathbb{E}(v, w') + \sqrt{\log n}.
  \]
  Combining these two inequalities we derive
  \begin{align*}
    d_\mathbb{E}(v, w) & \le d_E(v,w') + \sqrt{\log n} - 0.23 ε \\
                       & = d_\mathbb{E}(v, w') + \sqrt{\log n} - 0.23 \cdot (5 \sqrt{2n}s + 5\sqrt{\log n}) \\
    & < d_\mathbb{E}(v, w') - 1.15\, \sqrt{2n}s.
                         \intertext{As
      $d_\mathbb{E}(v,w')$ is at most the diagonal of $\mathbb{S}$,
      $\sqrt{2n}$, we further have}
    d_\mathbb{E}(v, w) & < d_\mathbb{E}(v, w') - 1.15 \, d_\mathbb{E}(v,w') \, s
    = d_\mathbb{E}(v, w') (1 - 1.15 \, s).
  \end{align*}

  We now compare the graph theoretic distance from $v$ to $w$ and to
  $w'$.  By~\cref{eq:stretch-trivial} we get a lower bound
  \[
    \dist_{G}(v, w') \ge \ceil*{\frac{d_\mathbb{E}(v, w')}{r}}.
  \]
  Applying the stretch bounds, we further get
  \begin{align*}
    \dist_{G}(v, w) & \le \ceil*{ \frac{d_\mathbb{E}(v,w)}{r} (1+s) }\\
                    & \le \ceil*{\frac{d_\mathbb{E}(v,w')}{r} (1-1.15s) (1+s)}\\
                    & < \ceil*{\frac{d_\mathbb{E}(v,w')}{r}}.
  \end{align*}
  This means that $\dist_G(v,w') > \dist_G(v,w)$, contradicting the
  assumption that no other vertex has higher distance from $v$ as $w$.
  We conclude that asymptotically almost surely for every vertex $v$,
  every vertex $w∈N(v,\ecc(v))$ has distance at most
  $ε=5 \sqrt{2n} s + 5 \sqrt{\log n}$ from $c^*$.
\end{proof}

This means that a 2-sweep gives a good lower bound for the diameter of
square RGGs.  Additionally, we show that on square RGGs a central
vertex chosen this way is located close to the geometric center of the
square.

\begin{lemma}\label{lem:ifub:2sweep}
  Let $G\sim\mathcal{G}(\mathcal{S}, n, r)$ be a square random geometric
  graph with $r∈ω(\log^{3/4} n)$ and let $v_c$ be the central vertex
  chosen after a 2-sweep.  Then, asymptotically almost surely, $v_c$
  lies within a geometric distance of
  $O(n^{\frac12}r^{-\frac23} + \sqrt{\log n})$ from the geometric center of
  $\mathcal{S}$.
\end{lemma}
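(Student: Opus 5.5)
We first pin down the two endpoints of the second sweep, and then locate the midpoint of a shortest path between them. Condition on the stretch event (\cref{lem:stretch}) and on the w.h.p.\ events from \cref{lem:rgg_region_not_empty}. By \cref{lem:furthest_vertex_in_corner}, the vertex $w$ found by the first BFS lies within geometric distance $\varepsilon = O(n^{1/2}r^{-4/3}+\sqrt{\log n})$ of some corner $c_1$; say $c_1=(0,0)$. The second BFS from $w$ yields a vertex $w'$ maximizing $\dist_G(w,\cdot)$. Applying \cref{lem:furthest_vertex_in_corner} again, now with $v=w$, $w'$ also lies within $\varepsilon$ of a corner $c_2$.

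Next we show that $c_2$ must be the corner opposite to $c_1$, i.e.\ $c_2=(\sqrt n,\sqrt n)$. Use \cref{lem:rgg_region_not_empty} to pick a vertex $u$ within $\sqrt{\log n}$ of $(\sqrt n,\sqrt n)$. By the stretch event and \cref{eq:stretch-trivial}, $\dist_G(w,u)\ge (\sqrt{2n}-2\varepsilon)/r$. If instead $c_2$ were an adjacent corner, then $d_\mathbb{E}(w,w')\le \sqrt n+2\varepsilon$. The upper stretch bound would then give $\dist_G(w,w')\le (\sqrt n+2\varepsilon)(1+s)/r+1$, which is smaller than $\dist_G(w,u)$ since $\varepsilon, \sqrt n s = o(\sqrt n)$. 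This contradicts the maximality of $w'$.

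Now let $P$ be the shortest $w$--$w'$ path used by 2-sweep, let $L=|P|$, and let $v_c$ be the vertex of $P$ at position $\lfloor L/2\rfloor$. Set $D=d_\mathbb{E}(w,w')$, so $D\ge \sqrt{2n}-2\varepsilon$. Then $\dist_G(w,v_c)$ and $\dist_G(v_c,w')$ are both within $1$ of $L/2$, and $L\le D(1+s)/r+1$. By \cref{eq:stretch-trivial}, every edge has geometric length at most $r$. It follows that
\[
d_\mathbb{E}(w,v_c)+d_\mathbb{E}(v_c,w') \le rL \le D + Ds + 2r =: D+\eta,
\]
and each summand is at most $D/2 + O(Ds+r)$. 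So $v_c$ lies in the ellipse with foci $w,w'$ and sum of focal distances $D+\eta$, intersected with the two disks of radius $D/2+O(\eta)$ around the foci. This region is a lens around the midpoint of $ww'$. Along the axis its extent is $O(\eta)$. Perpendicular to the axis, the ellipse has semi-minor axis $\sqrt{((D+\eta)/2)^2-(D/2)^2} = O(\sqrt{D\eta})$. With $D=\Theta(\sqrt n)$ and $\eta = O(\sqrt n\, r^{-4/3}+r)$ we get $\sqrt{D\eta} = O(n^{1/2}r^{-2/3} + n^{1/4}r^{1/2})$. Finally, the midpoint of $ww'$ is within $\varepsilon$ of the geometric center of the square, and $\varepsilon$ is dominated by the main term.

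The main obstacle is this perpendicular estimate and keeping the error terms under control. The additive $r$ slack coming from rounding contributes $n^{1/4}r^{1/2}$. This is absorbed into $n^{1/2}r^{-2/3}$ exactly when $r\le n^{3/14}$, so for larger $r$ the additive term needs more care. There I would first try to remove the rounding slack using the exact integer path length, together with the fact that $v_c$ sits at the exact halfway index of a shortest path, so that only the multiplicative $Ds$ term remains. If that fails, I would carry $O(\sqrt{r\sqrt n})$ along as a lower-order term in the statement.
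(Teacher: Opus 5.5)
Your route is the paper's route: apply \cref{lem:furthest_vertex_in_corner} to both sweeps, place $w,w'$ near opposite corners, and confine $v_c$ to a lens around the midpoint $m$ of $ww'$, whose width comes from Pythagoras. Two of your steps improve on the paper. Your comparison with a vertex $u$ near the far corner actually proves the opposite-corner step, which the paper only asserts. Your ellipse constraint is a slightly sharper form of the paper's two-disk lens. What you wrote is correct and gives $O(n^{1/2}r^{-2/3}+n^{1/4}r^{1/2}+\sqrt{\log n})$. This equals the claimed bound exactly when $r=O(n^{3/14})$.

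The additive-$r$ problem you flag is genuine, and it is where the paper's proof slips. The paper passes from $\dist_G(x,v_c)\le\frac{D}{2r}(1+s)+1$ to $d_{\mathbb{E}}(x,v_c)\le\frac{D}{2}(1+s)+2$, but \cref{eq:stretch-trivial} only yields $+r$. With $+r$, the paper's Pythagoras computation produces precisely your $n^{1/4}r^{1/2}$ term.

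Your first remedy, exploiting integrality and the exact halfway index, cannot work, because the slack is real. Here is the construction.
\begin{itemize}
\item Take $r=n^{\rho}$ with $\rho>3/14$. Then $Ds/r=O(n^{1/2}r^{-7/3})=o(1)$ and $\varepsilon=o(r)$, so $D/r=\sqrt{2n}/r+o(1)$.
\item Along an infinite sequence of $n$ we have $D/r=k+f$ with $f\le\frac12$. For these $n$, $L=k+1$, so $Lr-D\ge r/2$.
\item Put $j=\lfloor L/2\rfloor$. Let $p$ be a vertex within $\sqrt{\log n}$ of an intersection point of the circle of radius $(j-\frac18)r$ around $w$ and the circle of radius $(L-j-\frac18)r$ around $w'$.
\item The radii sum to at least $D+r/4$, so this intersection point lies at distance $\Omega(\sqrt{Dr})$ from the line $ww'$.
\item The stretch event gives $\dist_G(w,p)\le j$ and $\dist_G(p,w')\le L-j$. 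By the triangle inequality both hold with equality.
\item Hence $p$ is the halfway vertex of some shortest $w$--$w'$ path, i.e.\ a legitimate 2-sweep center. It lies at distance $\Omega(n^{1/4}r^{1/2})\gg n^{1/2}r^{-2/3}$ from the center of $\mathcal{S}$.
\end{itemize}
So the statement as written holds only for $r=O(n^{3/14})$. In general your fallback, keeping the $n^{1/4}r^{1/2}$ term, is the correct version. That extra term then adds $O(n^{1/2}r)$ BFS runs in \cref{lem:ifub:rgg}.
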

\begin{proof}
  Let $v∈V(G)$ be an arbitrary starting vertex, let $w∈N_G(v,\ecc(v))$
  be a maximally distant vertex from $v$, and let $w'∈N_G(w, \ecc(w))$
  be maximally distant from $w$.  Further let $v_c$ with
  $|\dist_G(w,v_c) - \dist_G(w',v_c)| \le 1$ and
  $\dist_G(w,v_c) + \dist_G(w',v_c) = \dist_G(w,w')$ the central
  vertex chosen with the 2-sweep.  We condition on the stretch event,
  i.e., in the following for every pair of vertices $u$, $v$ with
  geometric distance $d$ in $ω(r \log n)$ we can assume
  $d_G(u,v) \le \frac{d}{r}(1+s)$, with $s∈O(r^{-\frac43})$.  By
  \cref{lem:furthest_vertex_in_corner}, $w$ and $w'$ each lie within
  geometric distance of $O(n^{\frac12}r^{-\frac43} + \sqrt{\log n})$ from some
  corner of $\mathcal{S}$.  As the opposite corners have geometric
  distance $\sqrt{2n}$ and all other corners have distance at most
  $\sqrt{n}$ it follows that $w$ and $w'$ lie within geometric
  distance of $O(n^{\frac12}r^{-\frac43} + \sqrt{\log n})$ from opposite
  corners of $\mathcal{S}$.  With the approximate location of $w$ and
  $w'$ known it remains to locate $v_c$.

  Denote the Euclidean midpoint of the segment $ww'$ by $m$ and its
  length $d_{\mathbb{E}}(w,w')$ as $D$.  For $x∈\{w,w'\}$, we have
  $\dist_G(x, v_c) \le \ceil*{\frac{dist_G(w,w')}{2}} \le \frac{D}{2r}(1+s)+1$.
  Using \cref{eq:stretch-trivial} this implies
  $\dist_\mathbb{E}(x,v_c) \le \frac{D}{2}(1+s)+2 = \frac{D}{2} + O(Dr^{-\frac43})$.
  Thus $v_c$ lies in the lens formed by the intersection of circles of
  radius $\frac{D}{2} + O(Dr^{-\frac43})$ centered at $w$ and $w'$.  Along
  the line through $w$ and $w'$, the lens has length
  $O(Dr^{-4/3})$.  To bound the width $h$ in the orthogonal direction,
  we need the distance from $m$ to either of intersection point of the
  circles, see also \cref{fig:ifub_2sweep_triangle}.
  \begin{figure}
    \centering
    \includegraphics{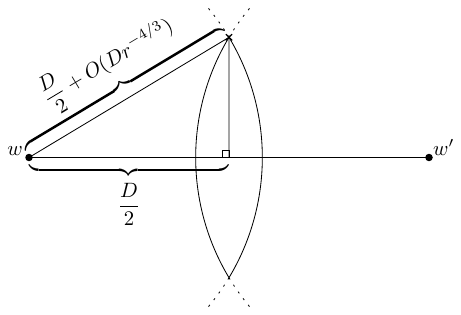}
    \caption{Visualization of lens and right triangle in proof of
      \cref{lem:ifub:2sweep}.}
    \label{fig:ifub_2sweep_triangle}
  \end{figure}
  Observe that $x$, $m$, and either intersection point form a right
  triangle.  By the Pythagorean theorem,
  \[
    \parens*{\frac{D}{2} + O(Dr^{-\frac43})}^2 = \parens*{\frac{D}{2}}^2 + h².
  \]
  To bound $h$ from above, set $a = \frac{D}{2}$ and
  $b = O(Dr^{-\frac43})$.  Then
  \begin{align*}
    h &=
        \sqrt{(a+b)²-a²} =
        \sqrt{2ab + b²} =
        \sqrt{2ab \Big(1+\frac{b}{2a}\Big)} =
        \sqrt{2ab} \sqrt{1+\frac{b}{2a}},
        \intertext{
        and thus, using that $\sqrt{1+x}\le 1+\frac x 2$ holds
        for $x>0$,}
        h &\le
        \sqrt{2ab} \parens*{1+\frac{b}{4a}}.
        \intertext{
        We have $\sqrt{2ab}∈O(D\cdot r^{-\frac23})$,
        $\frac{b}{4a}∈O(r^{-\frac43})$ and hence}
        h &\le D \cdot r^{-\frac23} (1+O(r^{-\frac43}))
  \end{align*}
  With $D∈O(\sqrt{n})$, it follows that $v_c$ lies within a Euclidean
  distance of $O(D \cdot r^{-\frac23})$ from $m$.  As $w$ and $w'$ are
  within $O(n^{\frac12} \cdot r^{-\frac43} + \sqrt{\log n})$ from the
  corners, $m$ is also within
  $O(n^{\frac12} \cdot r^{-\frac43} + \sqrt{\log n})$ from the
  geometric center of $\mathcal{S}$.  Combining these bounds, we
  conclude that $v_c$ is located within
  $O(n^{\frac12}\cdot r^{-\frac23} + \sqrt{\log n})$ from the
  geometric center.
\end{proof}

\subparagraph*{Analysis of iFUB.}%
We briefly explain how the algorithm proceeds after selecting a
central vertex $c$, see also \cite{ifub}.  Let $v_1, \dots, v_n = c$
be an ordering of the vertices sorted in descending order of their
distance to $c$.  Such an ordering is easily obtained after running a
BFS from $c$.  To find the diameter, the algorithm computes
$\ecc(v_i)$ for each vertex $v_i$ in this sequence and maintains the
largest found eccentricity as a lower bound, i.e.,
$L_i = \max_{0≤j≤i} \ecc(v_j)$.  The algorithm stops and reports $L_i$
as the diameter, once $2 \cdot \dist(c,v_i) \le L_i$.  To see why this
is correct, note that there exists a diametrical vertex $s$ with
$\dist(c,s) ≥ \ceil{\frac{\diam_G}{2}}$ and that the stopping criterion
ensures that such a vertex has been processed.

As discussed in the introduction of this paper, the running time of
iFUB depends on the choice of the central vertex and the metric
structure of the graph.  To be exact, the running time depends on the
number of vertices with distance at least half the diameter from $c$.
This has already been observed and used in the
literature~\cite{axiomatic_borassi_2017}, but to the best of our
knowledge not formally proved.  We consequently give a complete
argument below.

\begin{lemma}\label{lem:ifub_explored}
  Let $G$ be a graph with diameter $D$ and let $c$ be the central
  vertex for iFUB.  Then, iFUB explores every vertex with distance at
  least $\ceil{\frac{ D }{ 2 }}+1$ from $c$ and every explored vertex
  has distance at least $\ceil{\frac{ D }{ 2 }}$ from $c$.
\end{lemma}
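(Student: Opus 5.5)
We argue directly from the iFUB procedure as described above. The vertices are processed in the order $v_1,\dots,v_n=c$ of non-increasing distance to $c$, the running maximum is $L_i=\max_{j\le i}\ecc(v_j)$, and the algorithm halts at the first index $i$ with $2\dist(c,v_i)\le L_i$. Two observations drive everything. First, since every $L_i$ is an eccentricity, $L_i\le D$ at all times. Second, the order is monotone, so the set of explored vertices is a prefix of the sequence. We show the two inclusions separately.

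\emph{Every vertex with $\dist(c,v)\ge\lceil D/2\rceil+1$ is explored.} Let $v=v_i$ be such a vertex. For every $j\le i$ we have $\dist(c,v_j)\ge\dist(c,v_i)\ge\lceil D/2\rceil+1$, hence $2\dist(c,v_j)\ge 2\lceil D/2\rceil+2>D\ge L_j$. So the stopping criterion fails at every index $j\le i$, the algorithm is still running when it reaches $v_i$, and it computes $\ecc(v_i)$. Here we read ``explored'' as ``a BFS was run from it''. We must fix the convention of whether the BFS at index $j$ happens before the test at $j$; the paper's description (compute $\ecc(v_i)$, then test using $L_i$) means the vertex at which the algorithm stops is itself explored. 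This is consistent with the bound in the second part.

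\emph{Every explored vertex has $\dist(c,v)\ge\lceil D/2\rceil$.} Suppose towards a contradiction that some explored $v_i$ has $\dist(c,v_i)\le\lceil D/2\rceil-1$. We use the correctness argument quoted above: there is a diametric pair $s,t$, and by the triangle inequality $D\le\dist(s,c)+\dist(c,t)$, so one of them, say $s$, has $\dist(c,s)\ge\lceil D/2\rceil$. Since $\ecc(s)=D$, the vertex $s$ precedes $v_i$ in the order. The algorithm did not stop before $v_i$, so $s=v_j$ with $j<i$ was explored. Hence $L_{i-1}\ge\ecc(s)=D$, which gives $L_{i-1}=D$.

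Now consider the vertex $v_{i-1}$. If $\dist(c,v_{i-1})\le\lceil D/2\rceil-1$, then $2\dist(c,v_{i-1})\le 2\lceil D/2\rceil-2\le D=L_{i-1}$, so the algorithm stops at index $i-1$ and $v_i$ is never explored, a contradiction. Otherwise $\dist(c,v_{i-1})\ge\lceil D/2\rceil$ while $\dist(c,v_i)\le\lceil D/2\rceil-1$. Take the last index $j'$ with $\dist(c,v_{j'})\ge\lceil D/2\rceil$; then $s$ was explored at or before $j'$, so $L_{j'}=D$. The next index $j'+1\le i$ has distance at most $\lceil D/2\rceil-1$, so the test there succeeds, and the algorithm stops at or before $i$. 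Since $v_i$ was explored, the stop happens exactly at $i$, which places $v_i$ in the boundary case of the timing convention.

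The main obstacle is exactly this off-by-one bookkeeping about whether the stopping vertex itself is counted as explored. We would resolve it by adopting the convention that the test $2\dist(c,v_i)\le L_{i-1}$ is made before running the BFS from $v_i$, as in the original iFUB formulation where the check is done per BFS level. Under that convention the algorithm halts before exploring any vertex at distance $\le\lceil D/2\rceil-1$, and the first part is unaffected, since it only used $L\le D$.
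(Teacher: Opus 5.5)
Your proof is correct and is essentially the paper's: part one uses only $L\le D$ and the monotone processing order, and your contradiction in part two is the contrapositive of the paper's case split on whether the current lower bound already equals $D$, both hinging on a diametral vertex at distance at least $\lceil D/2\rceil$ from $c$ (which precedes $v_i$ because $\dist(c,s)\ge\lceil D/2\rceil>\dist(c,v_i)$, not because $\ecc(s)=D$ as you wrote). Your resolution of the off-by-one issue matches the paper too, since its proof implicitly compares $2\dist(c,w)$ with the lower bound \emph{before} running the BFS from $w$; you are right that under the literal ``compute $\ecc(v_i)$, then test against $L_i$'' reading, the second claim fails for the stopping vertex whenever $D$ is odd.
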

\begin{proof}
  We begin with the first direction, i.e., we show that a vertex $v$
  with $\dist(c,v) ≥ \ceil{\frac{D}{2}}+1$ is explored by iFUB.  Let
  $L_i \le D$ be the value of the lower bound at the time when iFUB
  decides whether to explore $v$. Then we have
  $2 \dist(w,c) \ge \frac{D}{2} + 2 > L_{i}$, i.e., iFUB explores $w$.

  For the other direction, let $w$ be a vertex that is explored by
  iFUB.  Then, at the time when iFUB explores $w$, we have
  $2 \dist(c,w) > L_i$.  If $L_i = D$, this concludes the proof, as we
  have $\dist(v,w) ≥ \ceil{D/2}$.  If otherwise $L_i < D$, then no
  diametrical vertex has been explored yet.  However, there is a
  diametrical vertex $x$ with $\dist(c,x) ≥ \ceil*{D/2}$.  As $x$ has
  not yet been explored when $w$ is explored, we have
  $\dist(c,w) \ge \dist(c,x)$ and thus also
  $\dist(c,w) \ge \ceil*{D/2}$.
\end{proof}


We already analyzed the 2-sweep and showed that the central vertex is
likely to be located close to the geometric center of the square.  It
remains to show that in this case iFUB does not perform too many BFS.

\begin{lemma}\label{lem:ifub:center}
  Let $G\sim\mathcal{G}(\mathcal{S}, n, r)$ be a square random geometric
  graph with $r∈ω(\log^{3/4} n)$ and let $v_c$ be a vertex with
  geometric distance $h∈o(\sqrt{n})$ from the geometric center of
  $\mathcal{S}$.  Then, with $v_c$ as central vertex iFUB performs at
  most $O(h²+nr^{-\frac83})$ BFS runs, asymptotically almost surely.
\end{lemma}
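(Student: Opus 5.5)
By \cref{lem:ifub_explored}, the BFS runs performed by iFUB after the initial sweeps are from explored vertices only. Every explored vertex $w$ satisfies $\dist_G(v_c,w)\ge\ceil{D/2}$, where $D=\diam_G$. It therefore suffices to show that a.a.s.\ only $O(h^2+nr^{-8/3})$ vertices have graph distance at least $D/2$ from $v_c$. The plan is to translate this graph condition into a geometric condition on $w$, bound the area of the resulting region, and count vertices in that region.

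First I would turn the graph-distance condition into a geometric one. We condition on the stretch event of \cref{lem:stretch} and on the lower bound $D\ge \frac{\sqrt{2n}}{r}-1$ from \cref{lem:rgg_diam}. Suppose $\dist_G(v_c,w)\ge D/2$. If $d_{\mathbb{E}}(v_c,w)<r\log n$, then $\dist_G(v_c,w)=O(\log n)$, which is far below $D/2=\Theta(\sqrt n/r)$ because $r<n^{\rho}$ with $\rho<\frac12$. So we may assume $d_{\mathbb{E}}(v_c,w)\ge r\log n$. The stretch bound then gives
\[
  \frac{d_{\mathbb{E}}(v_c,w)}{r}(1+s)+1 \ge \frac{\sqrt{2n}}{2r}-\frac12,
\]
with $s\in O(r^{-4/3})$. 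This yields $d_{\mathbb{E}}(v_c,w)\ge \frac{\sqrt{2n}}{2}-\Delta$ for some $\Delta\in O(\sqrt n\, r^{-4/3}+r)$. By the triangle inequality, the distance from $w$ to the geometric center $m$ of $\mathcal{S}$ is at least $\frac{\sqrt{2n}}{2}-\Delta-h$. Thus every explored vertex lies in the region $R$ of points of $\mathcal{S}$ at distance at least $\frac{\sqrt{2n}}{2}-x$ from $m$, where $x=\Delta+h$.

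Next I would bound the area of $R$; I expect this to be the main step. The region $R$ consists of four congruent corner pieces, each cut off by a circle of radius $\frac{\sqrt{2n}}{2}-x$ around $m$. This is exactly the situation of \cref{lem:square_geom_fringe}, applied to the quadrant $[0,\sqrt n/2]^2$ with $p=m$ at its corner. Scaling gives that each corner piece is a curvilinear triangle whose two straight legs along the square's sides have length $a\le 2\sqrt2\,x$. Since each piece lies in the $a\times a$ corner square, it has area at most $a^2=O(x^2)$, so $A(R)=O(x^2)$. We have $x^2=O(h^2+nr^{-8/3}+r^2)$, and $r^2\le nr^{-8/3}$ whenever $r\le n^{3/14}$. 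For larger $r$ the extra $r$ term needs separate handling. One option is to refine the rounding: $r$ enters only through the ceiling, and one could argue it is absorbed, or simply accept the term as $O(r^2)$ and note that $r^2$ BFS runs are dominated for the theorem. I would first try to tighten the $+1$ terms so that they contribute $O(r)$ only inside $\Delta$, and check that the final bound is stated modulo this.

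Finally I would count the vertices in $R$. If $A(R)\in\omega(\log n)$, \cref{lem:area_vertex_count} gives w.h.p.\ $|V(G)\cap R|=O(A(R))$. Otherwise a monotone enlargement of $R$ to area $\Theta(\log n)$ gives $O(\log n)$ vertices, which is absorbed into $O(nr^{-8/3})$ because $r<n^{\rho}$ with $\rho<\frac12$. The bound should hold for all positions of $v_c$ simultaneously, or at least for the random $v_c$. To get this, I would not apply the concentration bound to $R$ itself, which depends on $v_c$. Instead I would apply it to a fixed region containing $R$ for every admissible $v_c$, namely the four corner squares of side $2\sqrt2\,x$. These corner squares depend only on $h$ and $n$, so a union bound is unnecessary. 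Together, these steps give $O(h^2+nr^{-8/3})$ BFS runs a.a.s.
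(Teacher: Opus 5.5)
Your route is the paper's. You condition on the stretch event and on the diameter lower bound, show that every explored vertex lies at geometric distance at least $\frac{\sqrt{2n}}{2}-x$ from the center, confine these points to four corner pieces of area $O(x^2)$ via \cref{lem:square_geom_fringe}, and count them with \cref{lem:area_vertex_count}. The paper phrases the middle step contrapositively. Your fixed corner squares also handle the dependence of $R$ on $v_c$, which the paper passes over.

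The genuine gap is the step you leave open. Your bookkeeping correctly gives $x\in\Theta(h+\sqrt{n}\,r^{-4/3}+r)$, hence $O(h^2+nr^{-8/3}+r^2)$. None of your proposed remedies can remove the $r^2$. The paper reaches the stated form only by writing $\diam_G\ge\frac{\sqrt{2n}}{r}$ instead of the $\frac{\sqrt{2n}}{r}-1$ of \cref{lem:rgg_diam}, and by dropping the ceiling in the stretch bound. The $\Theta(r)$ rounding slack is real.

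Here is a counterexample showing that the slack cannot be removed:
\begin{itemize}
\item Take $j=\lfloor n^{1/10}\rfloor$ and $r=\sqrt{2n}/(2j+\frac{1}{10})=\Theta(n^{2/5})$.
\item Then $\frac{\sqrt{2n}}{r}\,s\in O(n^{1/2}r^{-7/3})=o(1)$, so under the stretch event $\diam_G\le 2j+1$.
\item Vertices within $O(\sqrt{\log n})$ of opposite corners are at geometric distance $>2jr=\sqrt{2n}-\frac{r}{10}$, so $\diam_G=2j+1$.
\item Since always $L_i\le 2j+1$, the stopping test $2\dist_G(v_c,v_i)\le L_i$ cannot fire while $\dist_G(v_c,v_i)\ge j+1$.
\item By \cref{eq:stretch-trivial}, every vertex at geometric distance $>jr=\frac{\sqrt{2n}}{2}-\frac{r}{20}$ from $v_c$ has such graph distance.
\item For $v_c$ within $O(\sqrt{\log n})$ of the center, these vertices fill fixed corner pieces of total area $\Theta(r^2)$.
\end{itemize}
So a.a.s.\ iFUB performs $\Theta(n^{4/5})$ BFS runs, whereas $h^2+nr^{-8/3}\in O(\log n)$.

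What your argument actually proves is $O(h^2+nr^{-8/3}+r^2)$. This coincides with the stated bound only for $r\in O(n^{3/14})$, where $r^2\le nr^{-8/3}$, or for $r\in O(h)$.

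Two side claims need the same correction:
\begin{itemize}
\item $\log n\in O(nr^{-8/3})$ fails for $\rho\ge\frac38$. This is harmless once the $r^2$ term is kept: you may take $x\ge r$, so the corner squares have area $\omega(\log n)$ and your small-area case disappears.
\item ``$r^2$ BFS runs are dominated for the theorem'' needs $r^2\in O(nr^{-4/3})$, which fails for $r\in\omega(n^{3/10})$.
\end{itemize}
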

\begin{proof}
  With $v_c$ chosen as the central vertex, iFUB performs a BFS for
  every vertex $w$ with $\dist_G(v_c, w) \ge \frac{\diam_G}{2}$.  We
  show that vertices close to the geometric center of $\mathcal{S}$ do
  not have graph distance at least $\frac{\diam_G}{2}$.  Conversely, the
  vertices from which a iFUB runs a BFS lie in regions far from the
  geometric center.  We show that these regions do not contain many
  vertices.  We condition on the stretch event, which holds
  a.a.s.\ (\cref{lem:stretch}).

  We choose $d = \frac{\sqrt{n}}{√2} - x$ for some $x>0$ to be
  specified later and consider a vertex $u$ with geometric distance at
  most $d$ from the center of $\mathcal{S}$.  Then
  $\dist_\mathbb{E}(v_c,u) \le \frac{\sqrt{n}}{√2} - x + h$ and thus
  $\dist_G(v_c,u) \le \parens*{\frac{\sqrt{n}}{√2} - x + h}r^{-1} (1+s)$ for
  $s∈O(r^{-\frac43})$.  By \cref{lem:rgg_diam} we have
  $\diam_G \ge \frac{\sqrt{2n}}{r}$.  Thus, we can choose
  $x∈Θ\parens*{h+\sqrt{n}s}$ such that
  $\dist_G(v_c,u) < \frac{\diam_G}{2}$.

  As iFUB only runs BFS from vertices with distance at least
  $\frac{ \diam_G }{2}$ from $v_c$ (\cref{lem:ifub_explored}), this
  means that any such vertex has distance at least
  $\frac{\sqrt{n}}{√2} - x$ from the geometric center of
  $\mathcal{S}$.  Comparing this with \cref{lem:square_geom_fringe},
  we get that any such vertex has distance at most $O(h + \sqrt{n}s)$
  from a corner of $\mathcal{S}$ and thus lies in a region with area
  $O(h^2 + n\cdot s^2)$.  The number of vertices in this region is in
  $O(h²+n\cdot s²)$ with high probability by
  \cref{lem:area_vertex_count}, which concludes the proof.
\end{proof}

Together with \cref{lem:ifub:center} this results in the following
running time bound, which is (truly) subquadratic for (polynomially)
growing $r$.

\begin{lemma}\label{lem:ifub:rgg}
  Let $G\sim\mathcal{G}(\mathcal{S}, n, r)$ be a square random geometric
  graph with $r∈ω(\log^{3/4} n)$.  Then, asymptotically almost surely
  2-sweep iFUB has running time in $O((nr^{-\frac43}+ \log n)m)$.
\end{lemma}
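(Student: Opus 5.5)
The plan is to combine \cref{lem:ifub:2sweep} and \cref{lem:ifub:center}, bounding separately the cost of choosing the central vertex and the cost of the iFUB phase. We condition on the stretch event, which holds a.a.s.\ (\cref{lem:stretch}), and on the high-probability concentration events from \cref{lem:area_vertex_count,lem:rgg_region_not_empty}. All of these are invoked finitely often, so their intersection still holds a.a.s.

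\textbf{Cost of the 2-sweep.} The 2-sweep performs two BFS runs, and one more BFS from $c$ produces the ordering. This contributes $O(m)$.

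\textbf{Location of the central vertex.} By \cref{lem:ifub:2sweep}, a.a.s.\ the chosen central vertex $v_c$ lies within geometric distance
\[
  h \in O\bigl(n^{\frac12}r^{-\frac23} + \sqrt{\log n}\bigr)
\]
of the geometric center of $\mathcal{S}$. Since $r \in ω(\log^{3/4} n)$, we have $h \in o(\sqrt{n})$, so the hypothesis of \cref{lem:ifub:center} is met. That lemma gives a.a.s.\ at most $O(h^2 + n r^{-\frac83})$ BFS runs. Substituting,
\[
  h^2 \in O\bigl(n r^{-\frac43} + \log n\bigr),
\]
and $n r^{-\frac83}$ is dominated by $n r^{-\frac43}$. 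The number of BFS runs is therefore $O(n r^{-\frac43} + \log n)$.

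\textbf{Cost per BFS.} Each BFS costs $O(n+m)$. A.a.s.\ the graph is connected with $m \ge n-1$ (for example, by \cref{lem:stretch}), so each BFS costs $O(m)$. Multiplying gives the total $O\bigl((n r^{-\frac43} + \log n)m\bigr)$; the additive $O(m)$ from the 2-sweep is absorbed.

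\textbf{Main obstacle: the conditioning.} \cref{lem:ifub:center} is stated for a fixed vertex at distance $h$ from the center, whereas $v_c$ is chosen depending on the random graph. I would check that the proof of \cref{lem:ifub:center} only uses the global stretch event and a concentration bound on the vertex count in the four corner regions of area $O(h^2 + n s^2)$. These regions can be fixed in advance as functions of the deterministic bound on $h$, so the argument applies uniformly to every vertex within distance $h$ of the center, including the data-dependent $v_c$.

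If the corner regions have area $o(\log n)$, \cref{lem:area_vertex_count} does not apply directly. In that case I would enlarge them to area $\Theta(\log n)$, which still yields $O(\log n)$ vertices w.h.p. This case is exactly the source of the additive $\log n$ term in the bound.
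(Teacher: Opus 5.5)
Your proposal is correct and follows the paper's proof. Like the paper, you substitute $h \in O(n^{1/2}r^{-2/3}+\sqrt{\log n})$ from \cref{lem:ifub:2sweep} into \cref{lem:ifub:center} and simplify $O(h^2 + nr^{-8/3})$ to $O(nr^{-4/3}+\log n)$ BFS runs. Your extra remarks on the $O(m)$ cost per BFS and on applying \cref{lem:ifub:center} uniformly to the data-dependent $v_c$ make explicit what the paper leaves implicit. One small correction: the additive $\log n$ comes from the $\sqrt{\log n}$ term in $h$, not from the small-area concentration case.
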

\begin{proof}
  By \cref{lem:ifub:2sweep}, the central vertex $v_c$ chosen after the
  2-sweep has distance at most
  $h ∈ O(n^{\frac12}r^{-\frac23} + \log^{\frac12} n)$ from the geometric center
  of $\mathcal{S}$, asymptotically almost surely.  Thus by
  \cref{lem:ifub:center} iFUB performs only
  \[
    O(h²+n\cdot r^{-\frac83}) = O(nr^{-\frac43}+\log n)
  \]
  BFS runs, asymptotically almost surely.
\end{proof}


We also want to show a lower bound for the running time on torus RGGs.
For this, we use that the iFUB algorithm performs a BFS for all
vertices with distance at least $\ceil{\frac{ \diam_G }{2}}+1$ from
the central vertex $v_c$.  By observing that on torus RGGs there are
many vertices at such a distance from any chosen central
vertex, this gives us a linear lower bound for the number of BFS runs.

\begin{lemma}
  Let $G\sim\mathcal{G}(\mathcal{T}, n, r)$ be a torus random geometric
  graph with $r∈ω(\log^{3/4} n)$.  Then, asymptotically almost
  surely, for every central vertex, iFUB performs $Ω(n)$ BFS runs.
\end{lemma}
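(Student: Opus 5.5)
We use \cref{lem:ifub_explored}: iFUB explores every vertex $u$ with $\dist_G(c,u) \ge \ceil{\diam_G/2}+1$. So it suffices to show that a.a.s., simultaneously for every vertex $c$, there are $\Omega(n)$ vertices $u$ with $\dist_G(c,u) \ge \ceil{\diam_G/2}+1$. We condition on the stretch event of \cref{lem:torus_stretch} and on the diameter bounds of \cref{lem:torus_rgg_diam}. Together they give
\[
  \diam_G \le \ceil*{\tfrac{\sqrt{n}}{\sqrt2 r}(1+O(r^{-4/3}))} = \tfrac{\sqrt n}{\sqrt2 r} + o\parens*{\tfrac{\sqrt n}{r}} + 1 .
\]
Hence $\ceil{\diam_G/2}+1 \le \frac{\sqrt n}{2\sqrt2 r} + o(\sqrt n/r) + 2$.

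Next we need a lower bound on graph distance in terms of geometry. By \cref{eq:stretch-trivial}, $\dist_G(c,u) \ge d_{\mathcal{T}}(c,u)/r$. Fix a constant $\gamma$ with $\frac{1}{2\sqrt2} < \gamma < \frac{1}{\sqrt2}$, say $\gamma = 0.45$. Every $u$ with $d_{\mathcal{T}}(c,u) \ge \gamma\sqrt n$ then satisfies
\[
  \dist_G(c,u) \ge \gamma\sqrt n/r .
\]
This exceeds $\frac{\sqrt n}{2\sqrt2 r} + o(\sqrt n/r) + 2$ because $\sqrt n/r \to \infty$; indeed $r < n^{\rho}$ with $\rho < \frac12$. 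So all such $u$ are explored.

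It remains to count them. The region $R_c = \{p \in \mathcal{T} : d_{\mathcal{T}}(c,p) \ge \gamma\sqrt n\}$ is the torus minus a disk of radius $\gamma\sqrt n < \sqrt n/2$. That disk does not wrap around, so its area is $\pi\gamma^2 n < 0.64n$, and $R_c$ has area at least $0.36n$, a constant fraction of $n$.

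The main obstacle is making the count hold uniformly over all $n$ choices of $c$, since $R_c$ depends on the random vertex $c$. We avoid a union bound over a random family by discretizing. Cover the torus by a grid of $O(1)$ cells of side $\eta\sqrt n$ for a small constant $\eta$. For each cell $Q$, let $R_Q$ be the set of points at distance at least $\gamma\sqrt n + \sqrt2\eta\sqrt n$ from every point of $Q$. Then $R_Q \subseteq R_c$ for every $c \in Q$, and for small enough $\eta$ the area of $R_Q$ is still $\Omega(n)$. There are only $O(1)$ fixed regions $R_Q$. By \cref{lem:area_vertex_count} and a union bound, w.h.p.\ each contains $\Omega(n)$ vertices.

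Combining this with the a.a.s.\ stretch and diameter events, a.a.s.\ every possible central vertex has $\Omega(n)$ explored vertices, i.e., iFUB performs $\Omega(n)$ BFS runs.
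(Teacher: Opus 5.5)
Your proof is correct and follows the paper's argument: you combine \cref{lem:ifub_explored} with the upper bound on $\diam_G$ from \cref{lem:torus_rgg_diam} and the trivial bound \cref{eq:stretch-trivial}, then count the vertices outside a disk of radius roughly $\sqrt{n}/(2\sqrt2)$ via \cref{lem:area_vertex_count}. The paper takes the tight radius $\frac{\sqrt n}{2\sqrt2}(1+1.1s)+2r$ and gets the fraction $1-\frac{\pi}{8}$, whereas your constant slack $\gamma=0.45$ gives a smaller constant fraction but beats $\ceil{\diam_G/2}+1$ with room to spare. Your discretization into $O(1)$ deterministic regions $R_Q$ is a welcome addition. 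The paper applies the concentration lemma to a region centered at the random vertex $v_c$ and never justifies that the bound holds simultaneously for all possible central vertices. A minor point: the extra $\sqrt2\eta\sqrt n$ in your definition of $R_Q$ is unnecessary, since measuring distance from every point of $Q$ already gives $R_Q\subseteq R_c$.
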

\begin{proof}
  Let $v_c$ be the central vertex for iFUB.  Then iFUB performs a BFS
  for any vertex $w$ with
  $\dist_G(v_c,w) \ge \ceil{\frac{\diam_G}{2}}+1$
  (\cref{lem:ifub_explored}).  By \cref{lem:torus_rgg_diam}, we have
  $\diam_G \le \frac{\sqrt{n}}{\sqrt{2}r} (1+s)$ with
  $s∈O(r^{-\frac43})$, asymptotically almost surely.  By
  \cref{eq:stretch-trivial}, for a vertex $w$ with
  \[
    \dist_\mathbb{E}(v_c, w) \ge \frac{\sqrt{n}}{2\sqrt{2}} (1+1.1s) + 2r
  \]
  we have
  $\dist_G(v_c, w) \ge \frac{\sqrt{n}}{2\sqrt{2}r} (1+1.1s) + 2> \ceil{\frac{ \diam_G }{ 2 }}+1$.
  It remains to show that many vertices have such a distance from
  $v_c$.  The geometric disk of radius
  $\frac{\sqrt{n}}{2\sqrt{2}} (1+1.1s)+2r$ around $v_c$ has area
  \[
\frac{π n}{8}\left(1+2.2s+1.21s^{2}+\frac{16r}{\sqrt{2n}}(1+1.1s)+\frac{32r^{2}}{n}\right),
  \]
where $s,s²,\frac{r}{n},$ and $r²/n$ are all in $o(1)$.  The entire
torus $\mathcal{T}$ has area $n$ and thus the region of $\mathcal{T}$
where vertices are chosen as BFS sources by iFUB has area at least
$(1-\frac{π}{8})n (1-o(1)) \ge 0.6 n (1-o(1))$.  As the number of
vertices within such a region is sufficiently concentrated by
\cref{lem:area_vertex_count}, this means that iFUB performs $Ω(n)$ BFS
runs and thus has a running time in $Ω(nm)$.
\end{proof}

Together, the above two lemmas give the following.

\ifubfast*

\section{Conclusion}
\label{sec:conclusion}

In this paper we give a set of natural deterministic properties
allowing for efficient diameter computation and demonstrate that these
properties a.a.s.\ hold on square and torus RGGs.  We note that our
formulation of the properties is not the only possible one, but
represents a trade-off between simplicity and generality.  To show
this, we point out multiple possible generalizations for the
assumptions used in our algorithm.

In \cref{item:prop-diam-partners} we demand that for all $x>0$ and
every vertex $v$, the $x$-diametric partners lie in $O(1)$ balls of
radius $O(x + d_{\mathrm{local}})$.  Here, the linear dependence on
$x$ was mostly chosen for its simplicity.  By considering how the
property is used (e.g. \cref{lem:candidate_bound}) one can see that
this dependence can be significantly relaxed.  For instance, one could
demand that the radius of the balls has some arbitrary non-decreasing
dependency $f(x)$ on $x$, or even depends on $n$ and $x$ as $f_n(x)$.
Then in \cref{thm:properties_algo}, the new requirement is that blocks
of size $k$ need to have diameter in $Ω(f_n(x))$.  Similarly, instead
of requiring a constant number of balls, one could also specify the
number of balls as a parameter, which then appears as an additional
factor in the running time of \cref{thm:properties_algo}.  Both of
these generalizations apply analogously to
\cref{item:prop-few-corners}.  Moreover, for
\cref{item:prop-fragmentation} one could allow a non-constant
parameter for the number of intersecting blocks, which then appears as
an additional factor in the number of candidate pairs and thus the
running time.

We also want to point out some directions for improvement regarding
the analysis on random geometric graphs.  For the application of our
algorithm on RGGs we assumed that the algorithm receives the graph
along with a suitable recursive partition, see \cref{thm:algo_rgg}.
While \cref{sec:rgg-nice} demonstrates that such a partition is
obtained very easily by subdividing the graph along its geometry, it
would be interesting to also give an algorithm that finds a suitable
partition using only a combinatorial representation of the graph
without coordinates.  We believe that a simple approach based on graph
Voronoi diagrams should already work, but it seems like showing tight
bounds for the size of (recursive) graph Voronoi separators in random
geometric graphs is very challenging.  Can this challenge be overcome
or is it maybe possible to find a different approach that is easier to
analyze?  Moreover, much of our analysis hinges on the stretch bounds
(\cref{lem:stretch,lem:torus_stretch}).  It is not clear how tight
these are and whether polynomially growing average degree is really
necessary.  It would not be too surprising if RGGs with constant
average degree also have local diametric partners
(\cref{item:prop-diam-partners}).  Finally, our running time
analysis for iFUB on square RGGs is likely pessimistic and better
stretch bounds or more generally a better understanding of the
distribution of graph distances can be expected to improve this.


%
%
%



\bibliography{paper}

\appendix

\section{Asymptotics of Properties from Section~\ref{sec:intro}}
\label{sec:asymptotics_of_properties}

In order to talk about asymptotic running times of algorithms, one needs
to consider infinite families of inputs.  As the definitions of \cref{item:prop-diam-partners,item:prop-few-corners,item:prop-small-sep,item:prop-size-dependent-diam,item:prop-fragmentation}
in \cref{sec:intro} do not make their asymptotic interpretations
explicit, we provide formal definitions of the properties defined in
\cref{sec:intro} in this section.  Let
$\mathcal{G} = \{G_1, G_2, \dots\}$ be an infinite family of graphs.

For the first property, local diametric partners, the asymptotic
interpretation is straightforward.  The only important detail is that
the constants hidden by the big $O$-notation may not depend on
individual graphs.

\begin{property}[local diametric partners]
  We say that $\mathcal{G}$ has \emph{$d_{\textrm{local}}$-local
    diametric partners}, if there exist positive integer constants $a, b, c$ such
  that for every graph $G∈\mathcal{G}$, every vertex $v∈V(G)$, and
  every positive integer $x$, there exists at most $a$ vertices
  $w_1, \dots, w_a∈V(G)$ such that the union of their closed
  $b \cdot \parens*{x + d_{\mathrm{local}}} + c$ neighborhoods
  contains every $x$-diametric partner of $v$ in $G$.
\end{property}

The asymptotic interpretation of the second property, few corners, is analogous.

\begin{property}[few corners]
  We say that $\mathcal{G}$ has \emph{$d_{\mathrm{corner}}$-few
    corners}, if there exist positive integer constants $a, b, c$ such
  that for every graph $G∈\mathcal{G}$ and every positive integer $x$
  there exist up to $a$ vertices $w_1, \dots, w_a ∈V(G)$ such that the
  union of their closed
  $b \cdot \parens*{x + d_{\mathrm{corner}}} + c$ neighborhoods
  contains every vertex with at least one $x$-diametric partner in
  $G$.
\end{property}

The remaining properties also depend on recursive partitions.  For
each graph $G_i ∈ \mathcal{G}$, let $\mathcal{P}_i$ be a recursive
partition.  Then
$\mathcal{I} = \{(G_1, \mathcal{P}_1), (G_2, \mathcal{P}_2), \dots \}$
forms an infinite family of graphs together with recursive partitions.
With this the asymptotic interpretation of the third property is again
straightforward, again with the only important detail being that the
constants hidden in the big $O$-notation must be universal for the
family of instances.

\begin{property}[small separators]
  We say that $\mathcal{I}$ has \emph{$(α,β)$-small separators}, if
  there exist positive integer constants $b, c$ such that for every
  $(G, \mathcal{P}) ∈ \mathcal{I}$ and every block $B$ induced by
  $\mathcal{P}$ on $G$, the separator of $B$ has size at most
  $b \cdot \parens*{n^α \cdot |B|^β} + c$.
\end{property}

For the fourth property it is important to only compare the sizes and
diameters of blocks of the same graph, as across graphs similarly
sized blocks are allowed to have different diameter.

\begin{property}[size-dependent diameters]
  We say that $\mathcal{I}$ has \emph{size-dependent diameters}, if
  there exist constants $b,c,d,e > 1$ such that for every
  $(G, \mathcal{P})∈\mathcal{I}$ and every two blocks $A, B$ induced
  by $\mathcal{P}$ on $G$, we have $|A| \le b \cdot |B| + c$ if and
  only if $\diam_{G[A]} \le d \cdot \diam_{G[B]} + e$.
\end{property}

For the fifth property, the interpretation is again straightforward.

\begin{property}[low fragmentation]
  We say that $\mathcal{I}$ has \emph{low fragmentation} if there
  exist positive integer constants $a,b$ such that for every
  $(G, \mathcal{P})∈\mathcal{I}$, every vertex $v∈V(G)$ and every
  integer $x$, the closed $x$ neighborhood of $v$ intersects at most
  $a$ blocks $B$ of $\mathcal{P}$ with diameter $\diam_{G[B]}$ between
  $\frac{1}{b} x$ and $bx$.
\end{property}

\end{document}